\documentclass[12pt]{article}

\usepackage[
  colorlinks=true,
  linkcolor=blue,
  urlcolor=blue,
  citecolor=blue
]{hyperref}
\usepackage[table,dvipsnames]{xcolor}
\usepackage{graphicx}
\usepackage{amsmath,amssymb,amsthm}
\usepackage{yhmath}
\usepackage{times}
\usepackage{epstopdf}
\usepackage[normalem]{ulem}
\usepackage{dcolumn}
\usepackage{bm}
\usepackage{blkarray}
\usepackage{multirow}
\usepackage{url}
\usepackage[mathlines]{lineno}
\usepackage[english]{babel}
\usepackage[T1]{fontenc}
\usepackage[utf8]{inputenc}
\usepackage{float} 
\usepackage{footnote}
\usepackage{academicons}
\usepackage[flushleft]{threeparttable}
\usepackage{cite}
\usepackage{caption}
\usepackage{subcaption}
\usepackage{orcidlink}
\usepackage{booktabs}
\usepackage{tabularray}
\usepackage{enumitem}
\usepackage[resetlabels]{multibib}
\usepackage{geometry}
\usepackage{tcolorbox}
\usepackage{flexisym}
\usepackage[noblocks]{authblk}
\usepackage{tabularx}
\usepackage{longtable}

\makesavenoteenv{tabular}
\makesavenoteenv{table}

\newcites{app}{Supplementary material references}

\usepackage{tikz}
\usetikzlibrary{arrows.meta,positioning,calc,shadows.blur,decorations.pathmorphing,decorations.markings,shapes.geometric,fit}

\tikzset{
  >=Latex,
  flow/.style={->,line width=1.1pt},
  ghost/.style={->,dashed,line width=1.1pt},
  lbl/.style={font=\footnotesize, inner sep=1pt, fill=white, rounded corners=2pt},
  cellS/.style={circle,draw=black!70,very thick,fill=green!55!black!30,minimum size=17mm},
  cellR/.style={circle,draw=black!70,very thick,fill=violet!45!black!30,minimum size=17mm},
  cellI/.style={circle,draw=black!70,very thick,fill=red!65!black!35,minimum size=18mm},
  cellD/.style={circle,draw=black!70,very thick,fill=pink!35,minimum size=18mm},
  cyto/.style={circle, draw=blue!60!black, fill=blue!30, minimum size=1.8mm, inner sep=0pt},
  arrowlabel/.style={midway,above,sloped,fill=white,inner sep=1pt}
}

\tikzset{
  pics/virus/.style args={#1}{
    code={
      \def\r{0.23} 
      \fill[yellow!70] (0,0) circle (0.2);
      \fill[orange!60] (0,0) circle (0.1);
      \draw[very thick,orange!80!black] (0,0) circle (0.2);
      \draw[thick,orange!80!black] (0,0) circle (0.1);
      \foreach \a in {0,20,...,340}{
        \draw[orange!80!black, line width=0.9pt]
          ({0.1*cos(\a)},{0.1*sin(\a)}) --
          ({0.2*cos(\a)},{0.2*sin(\a)});
        \fill[orange!80!black]
          ({0.21*cos(\a)},{0.21*sin(\a)}) circle (0.04);
      }
      \node[font=\scriptsize] at (0,-0.03) {#1};
    }
  }
}

\tikzset{
  pics/ifncloud/.style args={#1}{
    code={
      \def\R{2}  
      \foreach \i in {1,...,200}{
        \pgfmathsetmacro{\ang}{rnd*360}
        \pgfmathsetmacro{\rad}{rnd*\R}
        \fill[blue!60, opacity=0.5] ({\rad*cos(\ang)},{\rad*sin(\ang)}) circle (0.045);
      }
      \node[blue!60!black,font=\bfseries\footnotesize,fill=white,inner sep=0.5pt,rounded corners=1pt] at (0,0) {#1};
    }
  }
}

\tikzset{
  ghostblur/.style={
    ->, dashed, line width=1.1pt,
    preaction={draw, line width=4pt, opacity=0.10}, 
    preaction={draw, line width=2.6pt, opacity=0.14} 
  }
}

\newlength{\procstepwidth}      
\newlength{\procstepheight}     
\newlength{\procstepinnersize}  

\newcommand{\procminipageheight}{\dimexpr\procstepheight-2\procstepinnersize\relax}

\definecolor{ashgrey}{rgb}{0.7, 0.75, 0.71}
\definecolor{corn}{rgb}{0.98, 0.93, 0.36}
\definecolor{darkcoral}{rgb}{0.8, 0.36, 0.27}
\definecolor{darksalmon}{rgb}{0.91, 0.59, 0.48}
\definecolor{dodgerblue4}{RGB}{16,78,139}

\tikzset{
  procstep/.style={
    draw,
    rectangle,
    rounded corners=6pt,        
    anchor=north,               
    minimum width=\procstepwidth,
    text width=\procstepwidth,
    align=center,
    minimum height=\procstepheight,
    inner sep=\procstepinnersize, 
    drop shadow={shadow xshift=1.2mm,shadow yshift=-1.2mm,shadow blur radius=2.2pt},
    fill=white
  },
  arrow/.style={
    ->,
    thick,
    >=Stealth
  }
}

\def\ttt#1{\texttt{#1}}

\def\R{\mathcal{R}}
\def\S{\mathcal{S}}

\def\IR{\mathbb{R}}

\theoremstyle{plain}

\newtheorem{propsuppmaterial}{Proposition}[section]
\theoremstyle{remark}

\title{Within-host immunology to age-of-infection epidemiology \emph{via} a virtual cohort}

\author[1]{Julien Arino\orcidlink{0000-0001-6409-5027}}
\author[2]{Morgan Craig\orcidlink{0000-0003-4852-4770}}
\author[1]{Clotilde Djuikem\orcidlink{0000-0003-4815-743X}}
\author[1]{Kang-Ling Liao}
\author[1]{Stéphanie Portet\orcidlink{0000-0002-8802-8184}}

\affil[1]{\footnotesize Department of Mathematics, University of Manitoba, Winnipeg, Manitoba, Canada}
\affil[2]{\footnotesize Sainte-Justine University Hospital Research Centre and Department of Mathematics and Statistics, Université de Montréal, Montréal, Quebéc, Canada}

\begin{document}

\maketitle

\begin{abstract}
We present a methodology providing a one-directional link from within-host individual heterogeneity to population-level disease transmission dynamics.
The methodology works in several steps. 
A within-host model is investigated numerically to determine pathogen and immunological parameters leading to the largest variation of model responses.
These key parameters are used to generate a synthetic population of individuals whose temporal immunological response profiles are recorded.
These responses are ranked in terms of the severity of experienced outcomes, from mild infections to death, as a function of time since infection.
This is used to parametrise an age-of-infection structured epidemiological model to study the transmission dynamics of the disease at the population level.
The approach is illustrated using a within-host model describing SARS-CoV-2 infection and an SIR population-level model.
\end{abstract}


\section{Introduction}
Biological systems operate at multiple scales, thus mathematical modelling efforts require multiscale approaches \cite{walpole2013multiscale}. 
A key challenge of mathematical epidemiology, for example, is linking within-host dynamics of infection and between-host spread of disease in populations. 
Indeed, infectious diseases operate from within-host interactions to between-host dynamics, with various scales in between \cite{garira2017complete}.
These scales collectively drive the dynamics of infectious diseases. 
However, traditional approaches typically focus on either the within-host or between-host scale to maintain mathematical tractability. 
The incorporation of both has proved elusive; while progress has been made (see, e.g., \cite{almocera2018multiscale,feng2012model,feng2013mathematical,martcheva2015coupling,mideo2008linking,Nunez_Lopez_2022,Sofonea_2015}), this integration remains less mature than in other domains.

As alluded to, the development of multiscale models to study a variety of problems in mathematical epidemiology has increased in recent years; see, e.g., \cite{curran1985epidemiology,handel2015crossing,hellriegel2001immunoepidemiology,debroy2008immuni,yeghiazarian2013stochastic,handel2015crossing,willem2017lessons} and the reviews in \cite{childs2019linked,doran2023mathematical}.
These models offer a more comprehensive understanding of the impact of individual heterogeneity on the development of epidemics \cite{almocera2018multiscale}. 
Notably, \cite{goyal2021viral} designed mathematical models that link viral load observations with epidemiological features of influenza and SARS-CoV-2, such as the distribution of transmissions attributed to each infected person and the duration between symptom onset in the transmitter and the secondarily infected person.


\maketitle

\newpage
Here, we present a simple methodology for incorporating individual-level information into population models.
Rather than use complex models to incorporate different scales, we use a ``naive'' approach based on simulations: we construct a virtual population of individuals whose immunological response to infection is modelled using a within-host model. 
Based on this model, we derive population-level characteristics, in a manner akin to the life table statistics used by medical doctors to describe, say, the expected growth of a child as a function of their age.
These characteristics are then used to parametrise an age-of-infection partial differential equation model describing the spread of the pathogen within a population.

Unlike many existing fully coupled multiscale models that embed within-host ordinary differential equations directly into population-level partial differential equations at every time step, which can result in stiff, analytically intractable and computationally prohibitive systems, our approach decouples the scales via forward simulation. This ``life table'' methodology allows for the modular substitution of complex within-host models and massive inter-individual heterogeneities while retaining the computational efficiency of standard age-of-infection epidemiological models.

The model used to demonstrate the method is, at the within-host level, one for the course of a SARS-CoV-2 infection \cite{jenner2021covid}, while the population model is a simple age-of-infection structured SIR model \cite{yang2007class}.

\section{The methodological approach}
Our approach incorporates within-host processes in the dynamics of spread of infection between-hosts and can be visually represented as in Figure \ref{fig:conceptual-flow}.
\begin{figure}[htbp]
  \centering
  \def\hhstep{*3.8}
  \begin{tikzpicture}[scale=0.75, transform shape]
    \node[procstep,fill=white] (step0) at (0\hhstep,0) {%
      \begin{minipage}[t][\procminipageheight][t]{\procstepwidth}\raggedright\centering
        \textbf{Step 0}\\ 
        Choose within-host and between-host models
      \end{minipage}%
    };
    \node[procstep,fill=dodgerblue4!50] (step1) at (1\hhstep,0) {%
      \begin{minipage}[t][\procminipageheight][t]{\procstepwidth}\raggedright\centering
        \textbf{Step 1}\\ 
        Determine key parameters of the within-host model through sensitivity analysis
      \end{minipage}%
    };
    \node[procstep,fill=ashgrey] (step2) at (2\hhstep,0) {%
      \begin{minipage}[t][\procminipageheight][t]{\procstepwidth}\raggedright\centering
        \textbf{Step 2}\\
        Use within-host model to generate cohort by varying key parameters
      \end{minipage}%
    };
    \node[procstep,fill=corn] (step3) at (3\hhstep,0) {%
      \begin{minipage}[t][\procminipageheight][t]{\procstepwidth}\raggedright\centering
        \textbf{Step 3}\\ 
        Deduce age of infection dependent infectiousness, recovery and death
      \end{minipage}%
    };
    \node[procstep,fill=darksalmon] (step4) at (4\hhstep,0) {%
      \begin{minipage}[t][\procminipageheight][t]{\procstepwidth}\raggedright\centering
        \textbf{Step 4}\\ 
        Run population level age of infection model
      \end{minipage}%
    };
    \draw[arrow] (step0.east) -- (step1.west);
    \draw[arrow] (step1.east) -- (step2.west);
    \draw[arrow] (step2.east) -- (step3.west);
    \draw[arrow] (step3.east) -- (step4.west);
  \end{tikzpicture}
  \caption{Conceptual overview of the method.}
  \label{fig:conceptual-flow}
\end{figure}
It proceeds as follows.
\begin{enumerate}[label=\textbf{Step \arabic* --}, leftmargin=*, start=0]
    \item
    Choose a \emph{within-host} differential equations model describing the course of an infection in hosts and an age-of-infection structured epidemiological model describing the spread of the pathogen in a population.
    Determine which age-of-infection characteristics need to be extracted to parametrise the model, in order to guide the next steps.	
    \item
	Carry out a global sensitivity analysis of the within-host model to determine parameters that most influence model responses.
	Here, the model concerns the course of a SARS-CoV-2 infection in hosts, focusing on susceptible lung cells, the viral pathogen and the immune response.
	\item
	From the sensitivity analysis, we deduce the parameters that drive the most variation in immune responses and disease outcomes and assemble (Section~\ref{sec:generation_virtual_cohort}) a varied \emph{virtual population} consisting of many non-interacting individuals exhibiting a wide variety of outcomes after infection.
	This generates a \emph{virtual cohort} of individuals whose index date is the time of initial infection.
	\item
	From the cohort of individuals, we deduce general characteristics of transmission and disease progression as a function of infection-age (Section~\ref{sec:bridging_scales}), giving information akin to life tables used in public health.
	\item
	The distributions derived in \emph{Step 3} are used to parametrise the selected \emph{age-of-infection} partial differential equation model that is then used to consider spread of the pathogen at the population level (Section~\ref{sec:between_host_model}).
\end{enumerate}
Depending on the specific research questions, parameters associated with pathogen traits, immune‑response variability or treatment efficacy may be prioritised in Step 1.
During Steps 1 and 2, we generate variability in the solutions that we later exploit at the population level in Steps 3 and 4.

The result of using this approach is a \emph{one-directional} connection between processes at the individual level and the broader patterns of spread observed in a population.

\section{Step 0 -- The models}
\label{sec:wihin_host_subsystem}
To generate realistic population-level characteristics, we first choose a within-host model encoding the immune processes whose influences at the population level we want to consider.
In the present case, we use a simplified version of the ``IFN-model'' without delay for SARS-CoV-2 developed in \cite{jenner2021covid}.
This retains key within-host mechanisms while reducing the number of parameters used, making the exposition easier to follow.
For the between-host model, we use the age-of-infection structured SIR model in \cite{yang2007class}.
\subsection{The within-host model}
\label{subsec:within-host-model}

Let us briefly detail the within-host model.
A schematic overview is shown in Figure~\ref{fig:flowchart-within-host}.
Table~\ref{tab:state-variables} lists state variables with their initial conditions and units, while Table \ref{tab:within-hosts-parameters} lists model parameters and their units. 

\begin{figure}[htbp]
  \centering
  \tikzset{
    >=Latex,
    flow/.style={->,line width=1.1pt},
    ghost/.style={->,dashed,line width=1.1pt},
    lbl/.style={font=\footnotesize, inner sep=1pt, fill=white, rounded corners=2pt},
    cellS/.style={circle,draw=black!70,very thick,fill=green!55!black!30,minimum size=17mm},
    cellR/.style={circle,draw=black!70,very thick,fill=violet!45!black!30,minimum size=17mm},
    cellI/.style={circle,draw=black!70,very thick,fill=red!65!black!35,minimum size=18mm},
    cellD/.style={circle,draw=black!70,very thick,fill=pink!35,minimum size=18mm},
    cyto/.style={circle, draw=blue!60!black, fill=blue!30, minimum size=1.8mm, inner sep=0pt},
    arrowlabel/.style={midway,above,sloped,fill=white,inner sep=1pt}
  }%
  \tikzset{
    pics/virus/.style args={#1}{
      code={
        \def\r{0.23} 
        \fill[yellow!70] (0,0) circle (0.2);
        \fill[orange!60] (0,0) circle (0.1);
        \draw[very thick,orange!80!black] (0,0) circle (0.2);
        \draw[thick,orange!80!black] (0,0) circle (0.1);
        \foreach \a in {0,20,...,340}{
          \draw[orange!80!black, line width=0.9pt]
            ({0.1*cos(\a)},{0.1*sin(\a)}) --
            ({0.2*cos(\a)},{0.2*sin(\a)});
          \fill[orange!80!black]
            ({0.21*cos(\a)},{0.21*sin(\a)}) circle (0.04);
        }
        \node[font=\scriptsize] at (0,-0.03) {#1};
      }
    }
  }%
  \tikzset{
    pics/ifncloud/.style args={#1}{
      code={
        \def\R{2}  
        \foreach \i in {1,...,200}{
          \pgfmathsetmacro{\ang}{rnd*360}
          \pgfmathsetmacro{\rad}{rnd*\R}
          \fill[blue!60, opacity=0.5] ({\rad*cos(\ang)},{\rad*sin(\ang)}) circle (0.045);
        }
        \node[blue!60!black,font=\bfseries\footnotesize,fill=white,inner sep=0.5pt,rounded corners=1pt] at (0,0) {#1};
      }
    }
  }%
  \tikzset{
    ghostblur/.style={
      ->, dashed, line width=1.1pt,
      preaction={draw, line width=4pt, opacity=0.10}, 
      preaction={draw, line width=2.6pt, opacity=0.14} 
    }
  }%
  \resizebox{0.5\linewidth}{!}{%
  \begin{tikzpicture}[font=\small]


  \node[cellS, align=center] (S) at (-3.6,-0.3) {\footnotesize \textbf{Target}\\cells (S)};
  \node[cellR, align=center] (R) at (-3.6, 2.3) {\footnotesize \textbf{Resistant}\\cells (R)};
  \node[cellI, align=center] (I) at ( 1.4,-0.3) {\footnotesize\textbf{Infected}\\cells (I)};
  \node[cellD, align=center] (D) at ( 4.5,-0.3) {\footnotesize\textbf{Dead }\\cells (D)};

  \draw[flow] (S.north west) to[out=150,in=210,looseness=6] node[lbl,anchor=south west]{} (S.west);
  \draw[flow] (R.north west) to[out=150,in=210,looseness=6] node[lbl,anchor=south west]{} (R.west);

  \node (IFNcenter) at (-0.8,0.9) {};
  \path (IFNcenter) pic{ifncloud={$\,\mathrm{IFN}\ (\!F_U,F_B\!)$}};

  \draw[ghostblur]
    (I.south west) .. controls (0,-1.5) and (-2.4,-1) .. (S.south east)
    node[midway, above, fill=white, inner sep=1pt] {};
  \foreach \t/\x/\y in {1/ -1.9/-1.1, 2/-1.2/-1.25, 3/-0.5/-1.1, 4/0.2/-0.85, 5/1.1/-0.85}{
    \begin{scope}[shift={(\x,\y)},scale=0.1]
      \path pic{virus={}};
    \end{scope}
  }
  \node[lbl] at (-1.2,-1.65) {Virus};



  \draw[flow] (S.north) -- node[arrowlabel]{} (R.south);

  \draw[flow] (S.east) -- node[arrowlabel]{} (I.west);


  \draw[flow] (I.east) -- node[arrowlabel]{} (D.west);

  \draw[flow] (I.north) -- ++(0,1.0) node[lbl,anchor=south]{};
  \draw[ghost] (I.north west) -- ++(-0.7,0.4) node[lbl,anchor=south]{};
  \draw[flow] (0.1,-0.9) -- ++(0,-1.1) node[lbl,anchor=north]{};

  \draw[ghost] (IFNcenter.north) .. controls (-2.4,1.5) .. node[arrowlabel]{} (R.east);
  \draw[ghost] (IFNcenter) .. controls (-2.4,0.2) .. node[arrowlabel,below]{} (S.north east);



  \end{tikzpicture}%
  }
  \caption{Flowchart of the simplified within-host model.
  The state variables are target cells ($S$), infected cells ($I$), resistant cells ($R$), dead cells ($D$), viral load ($V$), unbound interferon ($F_U$) and bound interferon ($F_B$).}
\label{fig:flowchart-within-host}
\end{figure}

The model describes logistic proliferation of a population of target lung epithelial cells $S$ with growth rate $\lambda_S$ and carrying capacity $S_{\max}$. 
Target cells can become infected ($I$) upon contact with SARS-CoV-2 viral particles $V$ at rate $\beta_V$.
Upon infection, cells secrete unbound and bound type I interferon ($F_U$ and $F_B$, respectively).
Type I interferon (IFN) production in response to cell infection is modelled using a Michaelis--Menten term $p_{FI} I/(I+\eta_{FI})$, where $\eta_{FI}$ is the concentration of half-effect in $I$. 
IFN then binds and unbinds at rates $k_{B_F}$ and $k_{U_F}$, respectively; bound cytokine is removed through internalisation at rate $k_{int}$ whereas unbound IFN is removed at linear rate $k_{lin_f}$. 
In addition, IFN is produced at a rate $\psi_F^{prod}$ by macrophages and monocytes, which are not explicitly modelled.
Type I interferon signalling can reduce viral infection and make cells refractory to virus ($R$). 
Refractory cells proliferate at the same rate $\lambda_S$ and with the same carrying capacity $S_{\max}$ as target cells. 
The IFN level modulates the balance between resistance and productive infection \cite{jenner2021covid}, with half-effect parameter $\varepsilon_{FI}$.

Infected cells can undergo virus-induced lysis, producing viral particles at \emph{per capita} rate $p$ and being lost through virus-mediated lysis at \emph{per capita} rate $d_I$.
Free virus is cleared at \emph{per capita} rate $d_V$. 
Dead cells ($D$) accumulate from infected-cell lysis and disintegrate at rate $d_D$, consistent with rapid cell death processes \cite{elmore2007apoptosis}.

\begin{table}[htbp]
    \centering
    \begin{tabular}{l l l l }
        \toprule
        Variable & Definition & Initial condition & Unit \\
        \midrule
        $S$ & Target cell & 0.16& $10^9$cells/ml\\[0.02cm]
        $I$ & Infected cell & 0& $10^9$cells/ml \\[0.02cm]
        $R$ & Resistant cell & 0& $10^9$cells/ml \\[0.02cm]
        $D$ & Dead cell & 0 & $10^9$cells/ml \\[0.02cm]
        $V$ & Viral load & 4.5 & $\log_{10}$(copies/ml)\\[0.02cm]
        $F_U$ & Unbound interferon & 0.015 & pg/ml \\[0.02cm]
        $F_B$ & Bound interferon & 1.1e-8 & pg/ml \\
        \bottomrule
    \end{tabular}
    \caption{Within-host model \eqref{sys:within_host} variables, their initial conditions and units \cite{jenner2021covid}.}
    \label{tab:state-variables} 
\end{table}

Letting $N=S+R+I+D$ be the total number of living and dead target cells, the within-host model takes the following form
\begin{subequations}
\label{sys:within_host}
 \begin{align} 
V' &=pI-d_VV, \label{sys:within_host-dV} \\
S' &=\lambda_S \left(1-\frac{N}{S_{\max}}\right)S-\beta_V SV, \label{sys:within_host-dS} \\
I' &=\beta_V SV\left(1-\frac{F_B}{\varepsilon_{FI}+F_B}\right)-d_II, 
\label{sys:within_host-dI} \\
R' &= \lambda_S \left(1-\frac{N}{S_{\max}}\right)R+\beta_V SV\left(\frac{F_B}{\varepsilon_{FI}+F_B}\right), 
\label{sys:within_host-dR}\\
D' &=d_I I-d_D D, \label{sys:within_host-dD} \\
F_U' &= \psi_F^{prod}+\frac{p_{FI}I}{I+\eta_{FI}}-k_{lin_f}F_U-k_{B_F}\left((c^\star +I)a_F-F_B\right)F_U+k_{U_F}F_B, 
\label{sys:within_host-dFU} \\
F_B' &= -k_{int_f}F_B+k_{B_F}\left((c^\star +I)a_F-F_B\right)F_U-k_{U_F}F_B, \label{sys:within_host-dFB}
\end{align} 
\end{subequations}
with nonnegative initial conditions taking values given in Table~\ref{tab:state-variables}.
The independent variable $a$ is used, i.e., $'=d/da$, to make it explicit that \eqref{sys:within_host} involves the \emph{age-of-infection} rather than classic chronological time $t$.
Except for any mathematical analysis, \eqref{sys:within_host} is considered on a finite time interval $[0,a_{\max}]$, where $a=0$ is the initial time of infection.

\begin{table}[htbp]
    \centering
    \begin{tabular}{lll}
        \toprule
        Parameter & Definition & Unit \\
        \midrule
        $\lambda_S$  & Proliferation rate of target cells  & day$^{-1}$ \\
        $S_{\max}$  & Carrying capacity of target cells  & $10^9$ cells/ml \\
        $d_I$  & Death rate of infected cells  & day$^{-1}$ \\
        $d_D$  & Degradation rate of dead cells  & day$^{-1}$ \\
        \midrule
        $\beta_V$  & Viral infection rate  & day$^{-1} \cdot$ cop/ml \\
        $d_V$ & Viral decay rate   & day$^{-1}$ \\
        $p$  & Viral production rate & day$^{-1}$$\cdot$$\log_{10}$(cop)/$10^9$ cells \\ 
        \midrule 
        $k_{U_F}$  & IFN unbinding rate  & day$^{-1}$ \\
        $p_{FI}$  & IFN production rate (infected cells)  & day$^{-1} \cdot$ (pg/ml) \\
        $\psi_F^{prod}$ & IFN production rate (macrophages \& monocytes) & day$^{-1} \cdot$ (pg/ml) \\
        $k_{B_F}$ & IFN binding rate & day$^{-1} \cdot$ (ml/pg) \\
        $k_{lin_f}$ & IFN renal elimination rate & day$^{-1}$ \\
        $k_{int_f}$ & IFN internalisation rate & day$^{-1}$ \\
        $\varepsilon_{FI}$  & Half maximal response  & $10^9$ cells/ml \\
        $\eta_{FI}$ & Half-maximal response & $10^9$ cells/ml \\
        $c^\star $ & Initial CD8+ T cells density & $10^9$ cells/ml \\
        $a_F$ & Scaling factor (IFN binding kinetics) & g sites/(mol cell) \\
        \bottomrule
    \end{tabular}
    \caption{Model parameters and their units for within-host model \eqref{sys:within_host}. 
    Parameters are categorised as related to cell dynamics, virus dynamics and immune response parameters.}
    \label{tab:within-hosts-parameters} 
\end{table}

\subsection{Mathematical analysis of the within-host model}
\label{sec:math-analysis}
A mathematical analysis of \eqref{sys:within_host} allows us to understand the behaviour of the virus within-host.
This step is not explicitly required in this approach, but developing some understanding of the properties of the model being used to create the cohort is a good idea, in particular since the sensitivity analysis and the cohort generation involve running a very large number of simulations of the within-host model with parameters drawn from wide intervals.

Firstly, this allows to avoid potential numerical pitfalls, such as would arise if solutions were unbounded or in the presence of bistable or multistable situations.
Simpler still, it is useful to know if the model operates under a regime of stability switch, where some reproduction number governs its tending to an infection-free equilibrium or to one with infection becoming established in an individual, or a regime with infection always going extinct and a reproduction number governing whether this extinction is preceded by a spike in viral load.
(System \eqref{sys:within_host} falls in the latter category.)

Secondly, this allows to better delineate the ranges of some of the parameters, for example if they can be computed explicitly from known value ranges for other parameters and equilibrium values.
And, with respect to the previous point, understanding model dynamics also allows to exploit the model in a parameter region where it behaves as intended.

We do not detail the results of the analyses here; they are presented in Section~\ref{sup-material:math-analysis} of the Supplementary material.
Let us only mention that solutions are nonnegative and bounded and that the model presents a continuum of virus-free equilibria.
In particular, each individual $i=1,\ldots,N$ has a specific reproduction number given by
\begin{equation}\label{eq:R0-within-host}
\mathcal R_{0i}=\dfrac{\beta_{Vi}p_i}{d_{Ii}d_{Vi}}S_i^0\theta^0_i,
\end{equation}
where the index $i$ indicates the $i$th virtual individual $S_i^0$ and $\theta^0_i$ are the target cell population and IFN inhibition at the virus-free equilibrium, respectively, for individual $i$, given by \eqref{eq:VFE} and \eqref{eq:app-theta0}.

\subsection{The between-host model}
\label{sec:between-host-model-presentation}

In order to determine which quantities need to be extracted from the virtual cohort, we must understand the parameters of the between-host model.
We return to the model in Section~\ref{sec:between_host_model}, but for now, we outline the major characteristics of the model.

We consider a simplification of the endemic SIR model in \cite{yang2007class}. 
We could have used any age-of-infection structured model, but chose this formulation because the analysis in \cite{yang2007class} is easily adaptable to our setting.
The model describes the dynamics of disease transmission in a host population structured strictly by the age-of-infection (i.e., time since infection), rather than the chronological demographic age of the hosts. 
It involves three main variables, denoted with the index $P$ to distinguish them from within-host variables: 
the number of susceptible individuals $S_P(t)$, 
the density $i_P(t,a)$ of individuals at time $t$ who were infected $a$ time units ago, 
and the number $R_P(t)$ of recovered individuals. 
The independent variable $a$ represents the \emph{age-of-infection} (or \emph{infection-age}), that is, the time elapsed since an individual became infected, while $t\in\IR_+$ is the chronological time.

The force of infection acting on the susceptible population is expressed as
\begin{equation}\label{eq:between-hosts-force-of-infection}
	\lambda_P(t)=\int_0^{\infty} \beta_P(a)\, i_P(t,a)\, da,
\end{equation}
where the function $\beta_P(a)$ describes the infectiousness of individuals of infection-age $a$. 
It is assumed that $\beta_P:\mathbb{R}_+\to\mathbb{R}_+$ satisfies $\lim_{a\to\infty}\beta_P(a)=0$.
The parameter $b_P$ denotes the constant birth rate of the host population, $d_P$ is the natural death rate. 
Both are assumed to be independent of infection-age.
On the other hand, the rates of disease-induced mortality $\mu_P(a)$ and recovery $\gamma_P(a)$ are infection-age dependent functions.
Together, these functions govern how individuals enter and leave the infectious class and how infection propagates through the population over time.

The corresponding age-structured endemic SIR system is given by
\begin{subequations}
	\label{sys:between-hosts}
	\begin{align}
		\frac{dS_P}{dt} &= b_P - d_P S_P - S_P \lambda(t), \label{sys:between-hosts-dS} \\
		\left(\frac{\partial}{\partial t}+\frac{\partial}{\partial a}\right) i_P(t,a)
		&= -\big(\gamma_P(a)+\mu_P(a)+d_P\big)\, i_P(t,a), \label{sys:between-hosts-iP}\\
		\frac{dR_P}{dt} &= \int_0^{\infty} \gamma_P(a)\, i_P(t,a)\, da - d_P R_P, \label{sys:between-hosts-dR}
	\end{align}
\end{subequations}
with initial conditions $S_P(0)$, $R_P(0)$ and $i_P(0,a)=\zeta(a)$, where $\zeta:\mathbb{R}_+\to\mathbb{R}_+$ is continuous. 
The rate at which new infections enter the infectious compartment is given by the boundary condition
\begin{equation}
	\label{sys:between_hosts_BC}
	i_P(t,0) = S_P(t)\lambda_P(t).
\end{equation}

So, in order to parametrise \eqref{sys:between-hosts}, besides the easily obtained $b_P$ and $d_P$, we need to derive from the cohort some numerical forms for the infectiousness $\beta_P(a)$, disease-induced mortality $\mu_P(a)$ and recovery $\gamma_P(a)$, all as a function of the infection-age $a\in\IR_+$.
This objective guides the following steps in the method.

\section{Step 1 -- Sensitivity analysis of the within-host model}
\label{sec:sensitivity_analysis}
To capture variability in disease transmission and progression, we examine  the sensitivity of responses of the within-host model \eqref{sys:within_host} to parameter changes that influence both the peak values and the times to peak of three indicators: viral load, $V$, bound IFN, $F_{B}$ and unbound IFN, $F_{U}$. 
Partial rank correlation coefficients (PRCC), which measure the monotonic relationship between a parameter and an output while controlling for the linear effects of all other parameters, are computed for these six outputs. 
Large absolute PRCC values indicate a strong impact on the chosen indicators. 
To obtain an overall measure of parameter importance, we compute for each parameter a global sensitivity score defined as the sum of the absolute values of its PRCCs across all six selected indicators, namely the peak magnitudes and their corresponding times of $V$, $F_U$ and $F_B$. 
The global sensitivity scores are then ranked in order to identify the most influential parameters to variability in the selected outputs.

\begin{table}[htbp]
    \centering
    \begin{tabular}{ l l l l }
        \toprule
        Parameter & Default value & Range \\
        \midrule
        $\lambda_S$  & 0.74 & $[0.5,1]$ \\
        $S_{\max}$  & 0.16 & $[0.1,0.2]$ \\
        $d_I$  & 0.1 & $[0.05, 0.15]$ \\
        $d_D$  & 0.1 & $[0.05, 0.15]$ \\
        \midrule
        $\beta_V$  & 0.3 (SD: 0.1994) & $[0.1,0.5]$ \\
        $d_V$ & 8.4 (SD: 0.67) & $[5,15]$ \\
        $p$  & 394 (SD: 158.65) & $[100,800]$ \\ 
        $V_0$ & 1 (SD: 0.5) & $[0.5,5]$ \\
        \midrule 
        $k_{U_F}$  & 6.072 & $[2,10]$ \\
        $p_{FI}$  & 2.8235 (SD: 1.8741) & $[0.8,4]$ \\
        $\psi_F^{prod}$ & 0.25 & $[0.1,0.4]$ \\
        $k_{B_F}$ & 0.0107 (SD: 0.01) & $[0.001,0.05]$ \\
        $k_{lin_f}$ & 16.635 (SD: 2.49) & $[10,20]$ \\
        $k_{int_f}$ & 16.968 & $[10,20]$ \\
        $\varepsilon_{FI}$  & $2 \times 10^{-4}$ & $[10^{-5},10^{-3}]$ \\
        $\eta_{FI}$ & 0.022 & $[0.001,0.05]$ \\
        $c^\star $ & $1.104 \times 10^{-4}$ & $[10^{-5},10^{-3}]$ \\
        $a_F$ & $1.4513\times10^{-23}$ & $[0.5\times10^{-23},2.5\times10^{-23}]$ \\
        \bottomrule
    \end{tabular}
    \caption{Default values (used as mean for varying cohort parameters), standard deviation (SD) used in the cohort and ranges of parameters used for the sensitivity analysis, for the within-host model \eqref{sys:within_host}. Default values and ranges were adapted from prior calibration efforts to clinical data by Jenner et al. \cite{jenner2021covid}. Standard deviations for the log-normal distributions were chosen to capture approximately one to two orders of magnitude of biological variability around the mean. Refer to Table~\ref{tab:within-hosts-parameters} for parameter definitions and units.}
    \label{tab:parameter_values_ranges} 
\end{table}

Specifically, we conduct a global sensitivity analysis by using a uniformly distributed Sobol low-discrepancy sequence (verified alongside Latin hypercube sampling) to generate $1{,}000{,}000$ points in parameter space within the ranges listed in Table~\ref{tab:parameter_values_ranges}, then performing a partial rank correlation coefficient (PRCC) analysis of the peak values and times to peak values of $V$, $F_U$ and $F_B$.
Note that we also vary the initial condition $V(0)=V_0$; see explanation below.
For details, see Section~\ref{app:numerics-details} in the Supplementary material.

\begin{figure}[htbp]
    \centering
     \includegraphics[width=0.6\linewidth]{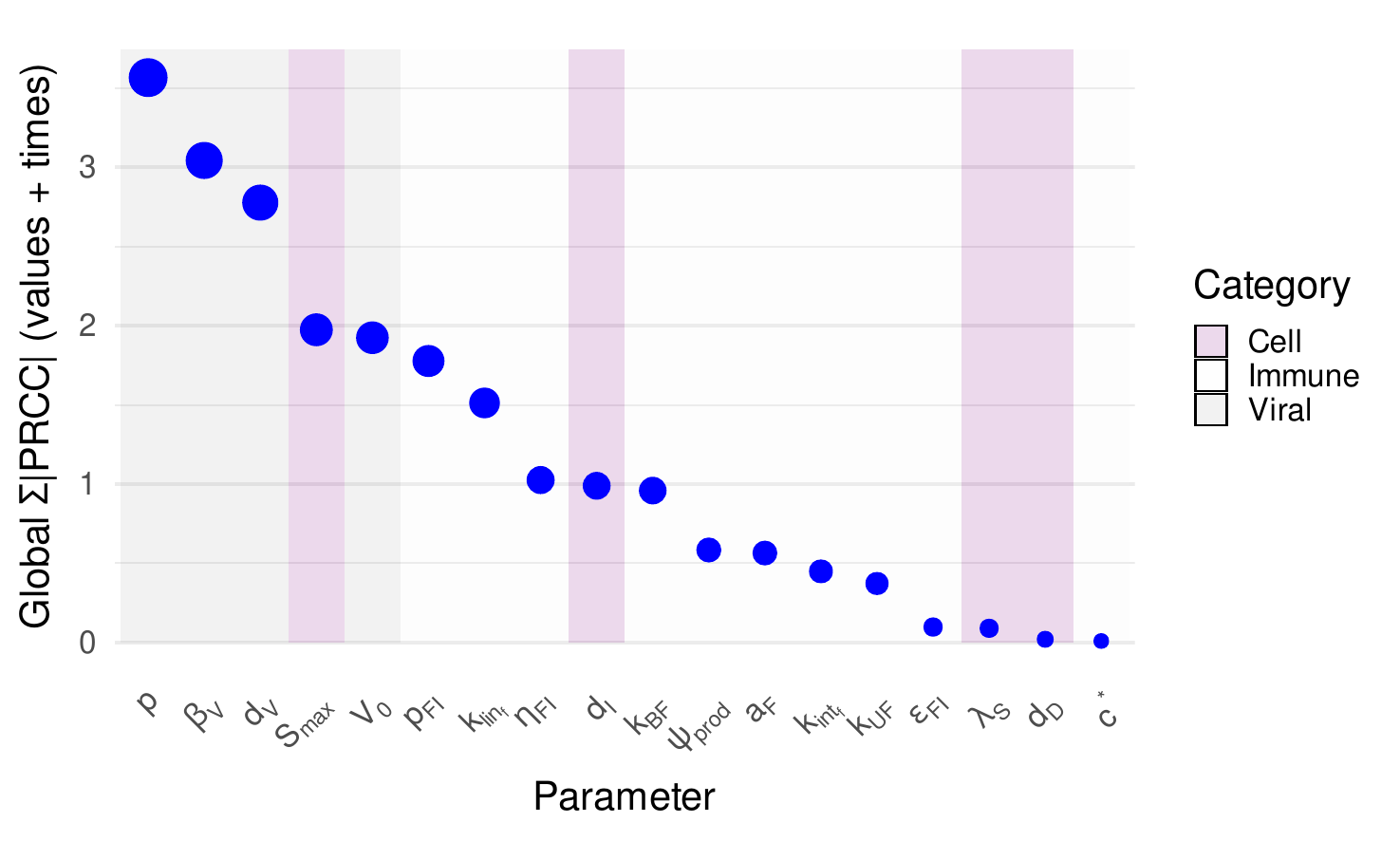}
    \caption{Ranked global sensitivity scores of parameters, where the global sensitivity is defined as the sum of the absolute PRCC values for both the peak values and times to peak of  $V$, $F_U$ and $F_B$. 
    Shaded backgrounds identify parameter categories (\textit{Cell} cell-dynamics parameters, \textit{Immune} immune-response parameters and \textit{Viral} virus-dynamics parameters). 
    Parameters with large absolute PRCC values have the strongest impact.}
    \label{fig:sensitivity-PRCC}
\end{figure}

In Figure~\ref{fig:sensitivity-PRCC}, we show the global sensitivity ranking, with parameters on the left contributing the most to overall variability in the dynamics of the three selected indicators. 
For details on the actual PRCC values (including their signs), see Figure~\ref{fig:sensitivity-PRCC-detail} in the Supplementary material.
The most influential parameters are the virus-dynamics parameters $p$, $\beta_V$ and $d_V$. 
The target cell carrying capacity $S_{\max}$, the initial viral dose $V_0$ and the immune-response related parameters $p_{FI}$, $k_{B_F}$, $k_{lin_f}$ and $\varepsilon_{FI}$, as well as $d_I$ and $\eta_{FI}$ also show notable influence, which is consistent with biological expectations that viral production and interferon feedback jointly govern infection severity and dynamics. 
Moreover, Figure~\ref{fig:sensitivity-PRCC-detail} shows that the immune parameters $p_{FI}$, $k_{lin_f}$, $\eta_{FI}$ and $k_{B_F}$ are the most influential for the peak magnitude, whereas for the peak timing is primarily influence by $\beta_V$, $p$ and $d_V$.
However, among these parameters, $S_{\max}$, $d_I$ and $\eta_{FI}$ are biologically more constrained and are therefore omitted from the cohort generation parameters.

Based on these observations, the parameters selected for the construction of the cohort are
\begin{equation}\label{eq:set_cohort}
\mathcal{P}_{\text{cohort}}=\{p,\,  \beta_V,\, d_V,\,V_0,\,p_{FI},\,k_{lin_f} ,\, k_{B_F} \}.
\end{equation}

It is worth noting that the choice of parameters in  $\mathcal{P}_{\text{cohort}}$ ensures that the cohort reflects variability in individual immune responses as well as pathogen characteristics, an aspect particularly pertinent to SARS-CoV-2 infection.
While our ultimate population-level quantities of interest (hospitalisation, mortality, recovery and transmission) are downstream clinical endpoints, they are strictly deterministic consequences of target cell destruction ($\Psi$) and viral load ($V$) in our within-host model. Because tissue damage is itself driven by the balance of viral replication and interferon-mediated protection, identifying the parameters that drive the greatest variability in upstream viral and interferon dynamics guarantees that the most influential drivers of downstream clinical severity and infectiousness are captured for cohort generation.

\section{Step 2 -- The virtual cohort}
\label{sec:generation_virtual_cohort}
The next step is to generate a virtual cohort.
To do so, we use the pathogen and immunological parameters most likely to lead to different infection histories in individuals based on the sensitivity analysis in Section~\ref{sec:sensitivity_analysis}.


\subsection{Generation of the virtual cohort}
\label{sec:virtual_patient_generation}

We generate a virtual cohort of $N=1{,}000{,}000$ individuals following the approach of \cite{jenner2021covid}. 
Virtual individuals are defined by varying the parameter subset $\mathcal{P}_{\text{cohort}}$ given by \eqref{eq:set_cohort}, while keeping other parameters fixed at default values given in Table~\ref{tab:parameter_values_ranges}.
For each parameter $s \in \mathcal{P}_{\text{cohort}}$, values are drawn from a log-normal distribution with mean $\bar{s}$ and standard deviation $\sigma_s$. Values are drawn from a log-normal distribution to strictly prevent biologically unfeasible negative parameter values and to naturally capture the right-skewed multiplicative heterogeneity often observed in biological parameters \cite{limpert2001log}.
After parameter values have been sampled for the $N$ virtual individuals, each individual $i=1,\ldots,N$ is simulated using \eqref{sys:within_host} with the same initial conditions given in Table~\ref{tab:state-variables}. See Section~\ref{app:numerics-details} in the Supplementary material for details on simulation.
The result is a set of $N$ independent solutions of \eqref{sys:within_host}. 
In particular, we denote $\mathcal{P}_i$ the parameters of individual $i=1,\ldots,N$.
It should be noted that while our cohort captures broad biological heterogeneity, it is not explicitly structured by host chronological age, meaning the resulting epidemiological functions correspond to an ``average population''.

\subsection{Quantifying disease severity and determining time of death}
\label{subsec:disease-severity-death-recovery}

\subsubsection{Disease severity}
\label{subsec:disease-severity}
Although an imbalanced interferon response is an important driver of disease severity for COVID-19 \cite{acharya2020dysregulation}, we focus here on lung tissue damage. Within the cohort, we categorise individuals (cases) into two categories based on the predicted maximum degree of lung tissue damage they experience, \emph{mild} and \emph{severe}. 
Severe cases are further divided into those dying from the disease and those who do not.
We assume that all severe cases are hospitalised.
For individual $i=1,\ldots,N$, the extent of lung cell damage is quantified by the percentage $\Psi_i$ of damaged tissue at infection-age $a$,
\begin{equation}
\Psi_i(a) = \frac{S_{{\max},i}-(S_i(a)+R_i(a))}{S_{{\max},i}}\times100,
\label{eq:pct-lung-damage}
\end{equation}
where $S_{{\max},i}$ is the carrying capacity of susceptible epithelial cells for the individual under consideration.
For each individual, we record the instant of maximal disease severity
\begin{equation}\label{eq:tau_Psi_max}
  \tau_i^{\Psi^{\max}} = \min\left\{a>0: \Psi_i(a)=\max_{a>0}\Psi_i(a)\right\}.
\end{equation}
This gives a collection of peak severity times $\{\tau_i^{\Psi^{\max}}\}_{i=1}^N$ for the cohort, which is recorded for each individual in the cohort regardless of disease outcomes.

\subsubsection{Period of hospitalisation}
\label{subsec:hospitalisation}
Cases are severe and individuals need hospitalisation in an intensive care unit (ICU) if at any time during the course of their infection, they lose more than $\xi^h$ percent of their lung cells.
We typically use $\xi^h=75\%$ here.
For simplicity, we assume that hospitalisation continues while the extent of tissue damage remains greater than the threshold $\xi^h$.
The discharge threshold could of course be different from the admission threshold.
For each individual $i=1,\ldots,N$, we obtain an \emph{interval}
\begin{equation}\label{eq:tau_i_h}
  \tau_i^h :=
  \left\{
  a>0 : \Psi_i(a) \ge \xi^h 
  \right\},
\end{equation}
where $\tau_i^h=\emptyset$ if $\Psi_i(a)<\xi^h$ for all $a\in\IR_+$.
Note that death due to the disease can further affect $\tau_i^h$, as explained below.
Denote $N_h$ the number of individuals requiring hospitalisation, i.e., such that $\tau_i^h\neq\emptyset$.
We obtain a collection of hospitalisation intervals $\{\tau_i^h\}_{i=1}^{N_h}$.

Here and throughout the work, the mathematical definitions of the time points or time intervals use ages of infection $a\in\IR_+$, whereas in practice we work on the finite time interval $a\in[0,a_{\max}]$.
Because the cohort is run using a wide range of parameters, it happens that some individuals have immune processes happening extremely slowly, implying that if a larger value $\tilde{a}_{\max}$ were used, an individual could have $\Psi_i(a_{\max})<\xi^h$ but $\Psi_i(\tilde{a}_{\max})>\xi^h$, i.e., using a longer time interval would see them hospitalised.
In the computational work, we ensure $a_{\max}$ is large enough that this is very rarely an issue.

\subsubsection{Death due to the disease}
\label{subsec:death-due-to-disease}
For illustrative purposes in this methodological framework, ICU patients experiencing more than $\xi^d=85\%$ loss of lung cells relative to homeostatic capacity $S_{{\max},i}$ are assumed to have succumbed to the infection. 
We stress that this is a simplifying modelling assumption rather than a biological absolute. In reality, clinical COVID-19 mortality is highly multifactorial and is often driven by immune-mediated pathology (e.g., cytokine storms), inflammatory dysregulation, thrombosis and secondary infections, rather than solely by direct viral-induced cellular damage \cite{elezkurtaj2021causes}.
In practice, nothing prevents from using different thresholds for each individual (or each age group if stratifying by chronological age), but we feel that absent a more precise understanding of the relative contributions of all factors, this is a reasonable simplifying assumption especially in the perspective of an illustrative example as studied here. 
To introduce some variability into the assignation of lethality, we define an individual’s time of disease‑induced death as the instant when their tissue‑damage percentage attains its maximum, rather than the instant it crosses $\xi^d$. Thus, the time $\tau_i^d$ of disease-induced death for individual $i=1,\ldots,N$ is given by
\begin{equation}\label{eq:tau_i_d}
  \tau_i^d :=
  \min \left\{
  a>0 : \Psi_i(a) \ge \xi^d 
  \text{ and }
  \Psi_i(a)=\max_{a>0}\Psi_i(a) 
  \right\},
\end{equation}
with $\tau_i^d=\infty$ if no finite value $\tau_i^d$ can be found, i.e., in practice, if the individual's lung tissue damage remains smaller than the threshold $\xi^d$. Letting $N_d$ be the number of individuals of the cohort who die due to the infection, we thus obtain a collection of death times $\{\tau_i^d\}_{i=1}^{N_d}$.

Note that the thresholds $\xi^h$ and $\xi^d>\xi^h$ are taken to be equal for all individuals but could naturally be allowed to vary from person to person.

\subsubsection{Computational considerations on severity}
\label{subsec:computational-considerations-severity}

\begin{figure}[htbp]
    \centering
    \includegraphics[width=\linewidth]{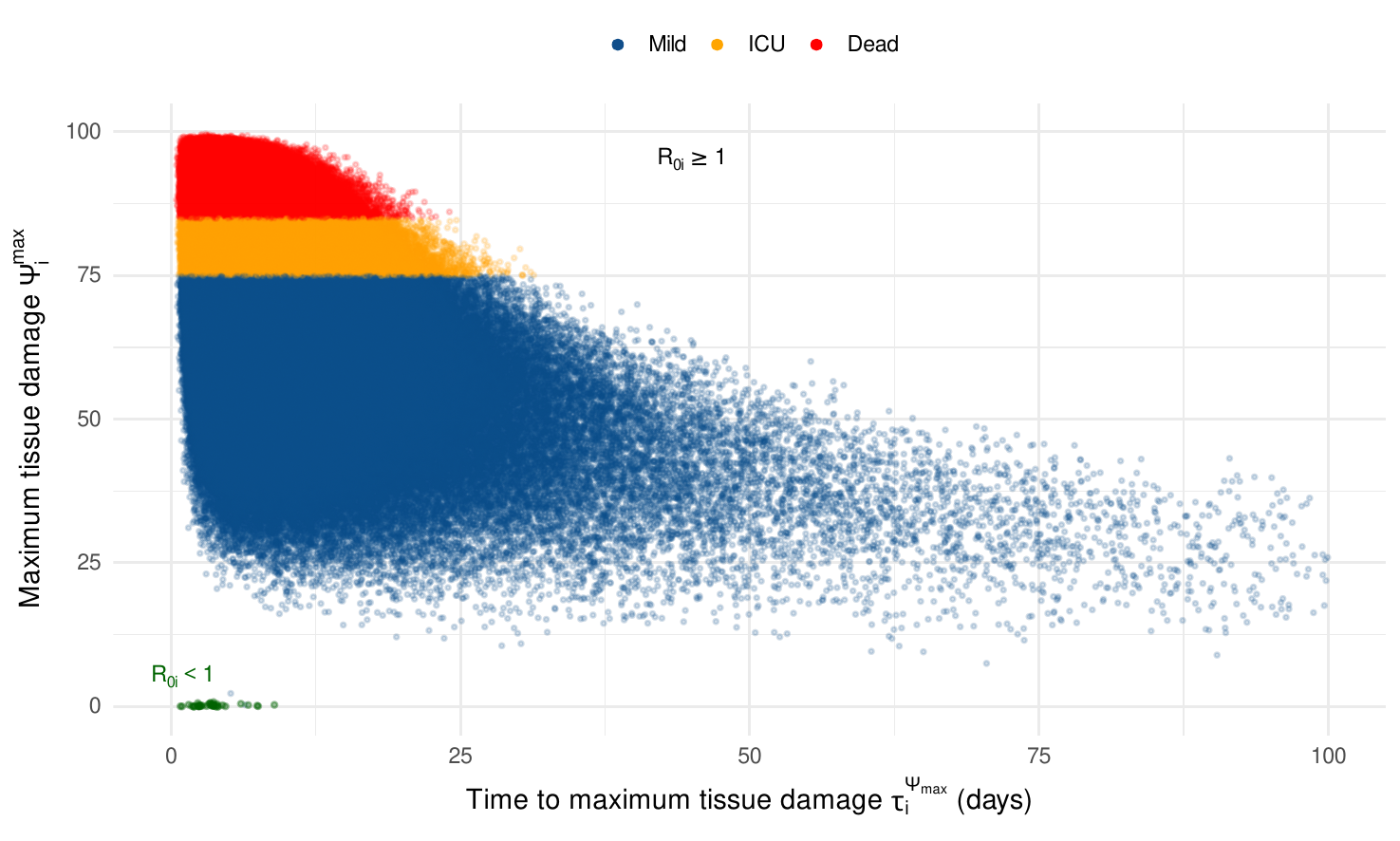}
    \caption{Peak lung damage and timing (in days) of peak, $\xi^h=75\%$ and $\xi^d=85\%$. 
    	Case $\R_{0i}\geq 1$: mild outcomes in blue, ICU patients in yellow and dead patients in red, with $\R_{0i}$ given by \eqref{eq:R0-within-host}.
    	Case $\R_{0i}<1$: all dots shown in green.}
    \label{fig:psi_max_tau_d_scatter}
\end{figure}

Figure~\ref{fig:psi_max_tau_d_scatter} shows how the maximum lung damage $\Psi_{\max}$ of each individual in a cohort of $1{,}000{,}000$ relates to the time $a_{\Psi_{\max}}$ when this maximum occurs.
When the basic reproduction number is below one ($\R_{0i}<1$, all points shown in green), the virus cannot sustain the infection; the immune system clears it quickly, so lung damage stays very low ($<10\%$); all individuals have mild disease symptoms and recover early.
When the infection can take hold ($\mathcal{R}_{0i}\ge 1$), we see a clear nonlinear link between how severe the disease is and how fast it develops.
Many ICU and fatal cases (yellow and red dots) appear in the upper-left region, meaning they reach high lung damage very early during the course of the infection.
This suggests that deaths are mainly caused by rapid viral replication that produces severe tissue damage early during the course of the infection, with the peak typically occurring between days 2 and day 10, a timeline consistent with clinical observations linking early peak viral loads to severe clinical deterioration \cite{cevik2021sars, pujadas2020sars}.
As the time to peak damage increases, the peak damage usually becomes smaller. 
Mild cases (blue dots) get progressively less severe as the time to the peak increases, suggesting that individuals with a later peak experience much less tissue damage, likely due to better early immune control.
In the simulation shown here, for $\R_{0i}\geq 1$, 29.2\% of the cohort required ICU care and 26.8\% of the cohort required ICU and died.

\begin{figure}[htbp]
    \centering
    \begin{subfigure}{0.49\textwidth}
    \includegraphics[width=\linewidth]{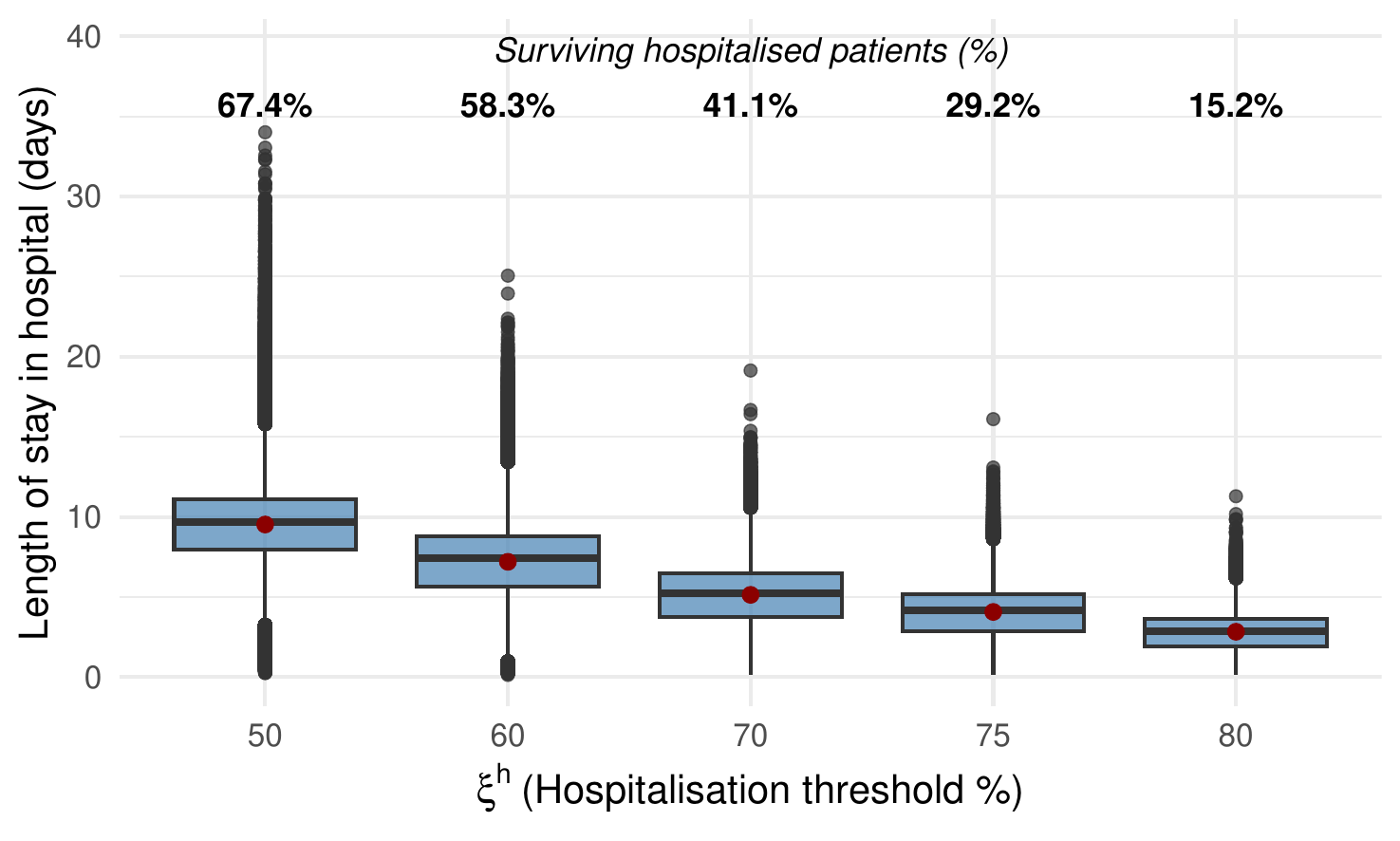} 
    \caption{Hospitalisation threshold $\xi^h$}
    \label{fig:pct-types-outcomes-fct-xih}
    \end{subfigure}
    \begin{subfigure}{0.49\textwidth}
    \includegraphics[width=\linewidth]{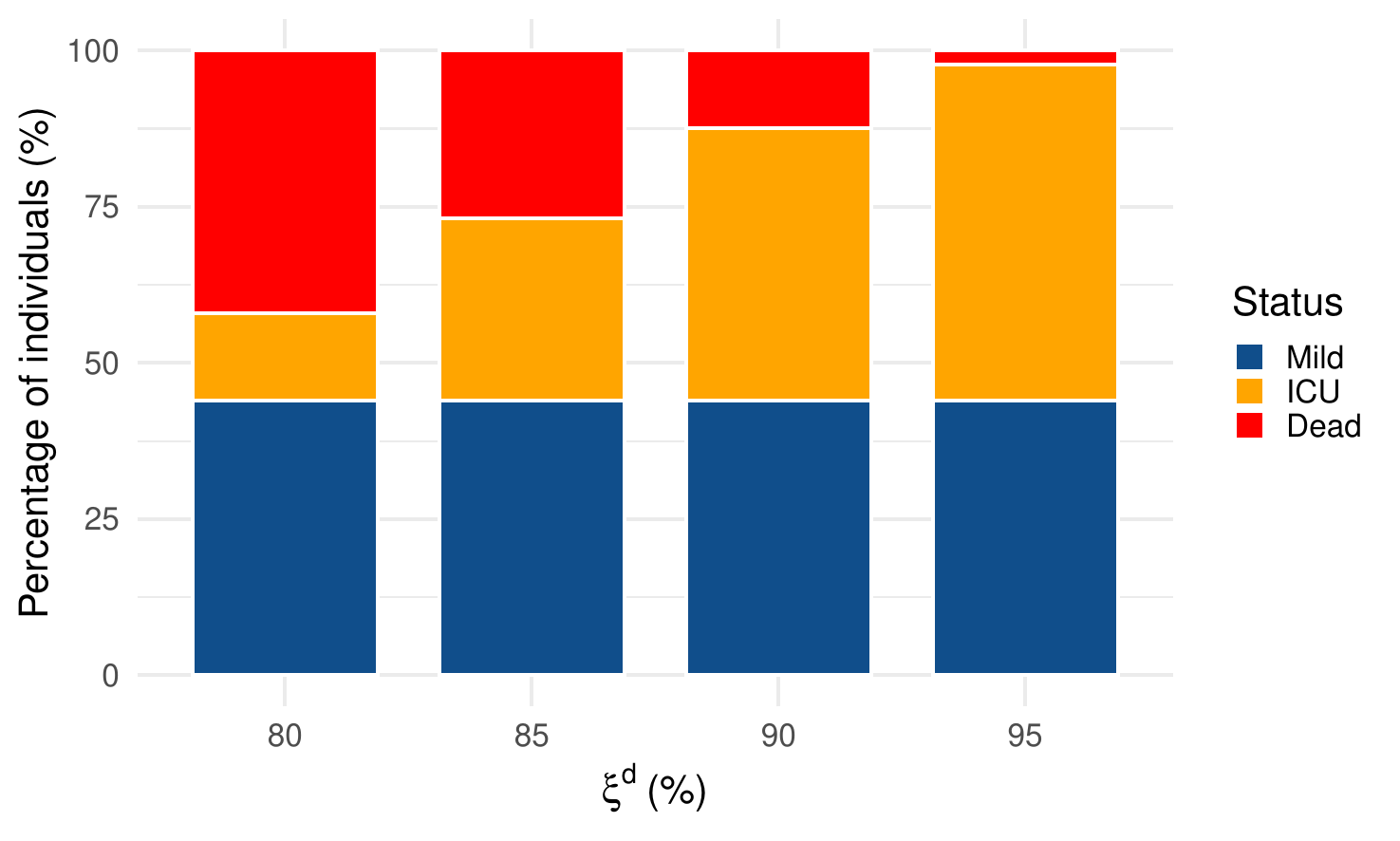}
    \caption{Lethality threshold $\xi^d$}
    \label{fig:pct-types-outcomes-fct-xid}
    \end{subfigure}
    \caption{Effects of thresholds $\xi^h$ and $\xi^d$.
    (a) Length of hospital stay as a function of the hospitalisation threshold $\xi^{h}$ when $\xi^d=85\%$. 
    Boxplots show the distribution of hospital stay duration (days). 
    Pink dots indicate the mean duration. 
    Percentages correspond to the fraction of surviving hospitalised patients.
    (b) Percentage of the types of outcomes (mild, ICU and deaths) depending on the lethality threshold $\xi^d$ when $\xi^h=75\%$.}
    \label{fig:pct-types-outcomes-fct-xi}
\end{figure}

The horizontal bands in Figure~\ref{fig:psi_max_tau_d_scatter} reflect that outcomes (mild, ICU, dead) are assigned using the thresholds $\xi^h=75\%$ and $\xi^d=85\%$.
The role of these thresholds is therefore important, as illustrated in Figure~\ref{fig:pct-types-outcomes-fct-xi}.
In Figure~\ref{fig:pct-types-outcomes-fct-xih}, we show the effect of the hospitalisation threshold $\xi^h$ on the length and prevalence of hospitalisation, using a constant lethality threshold $\xi^d=85\%$.
Figure~\ref{fig:pct-types-outcomes-fct-xid} shows the effect of the lethality threshold $\xi^d$ on the percentage of individuals with different outcomes for a fixed hospitalisation threshold $\xi^h=75\%$.
At lower thresholds the fatal cases are widely spread across damage levels, while at higher thresholds they cluster around very high peak damage.

\subsection{Individual transmission characteristics}
\label{subsec:viral-load-to-infectiousness}
In the between-host model \eqref{sys:between-hosts}, entering the $i_P$ compartment is equivalent to having infection-age zero in the within-host model \eqref{sys:within_host}.
In epidemiological models such as \eqref{sys:between-hosts}, any individual in the $i_P$ compartment is \emph{infectious}.
This is not the case here, as explained below.
Therefore, to avoid ambiguities, we reserve the word \emph{infectious} to describe an individual in the $i_P$ compartment in \eqref{sys:between-hosts}.

By contrast, our aim in this section is to determine the length of the period during which an individual effectively \emph{transmits} the pathogen to others (the \emph{transmission period}) and the intensity of this transmission, when it occurs (the \emph{transmission capacity} of the individual).
While these terms are not necessarily those used in the literature, their use is justified by the remark above.

There is experimental evidence that transmission increases with viral load \cite{lange2009antigenic,edenborough2012mouse,blaser2014impact}, so we assume that the likelihood of infecting another person depends on viral load on the infected individual.
The peak viral load therefore plays an important role in determining the epidemiological consequences of infection.
Thus, letting
\begin{equation}\label{eq:tau_V_max}
  \tau_i^{V_{\max}} = \min\left\{a>0: V_i(a)=\max_{a>0}V_i(a)\right\},
\end{equation}
we obtain a collection $\{\tau_i^{V_{max}}\}_{i=1}^{N}$ of times of peak viral loads.
Note that these data are recorded for all $N$ individuals in the cohort regardless of disease outcome.

\subsubsection{Linking viral load and transmission capacity}

To convert an individual's viral load into their infectiousness to others, we use a function $\beta_i$ of the viral load $V_i(a)$ with the following properties:
\begin{itemize}
    \item transmission is nonnegative, i.e., $0\leq \beta_i(V_i)$ for all $V_i\geq 0$;
    \item in the absence of viral load, the transmission rate is zero, i.e., $\beta_i(0)=0$.
\end{itemize}

While higher viral load generally correlates with increased transmission potential, true infectiousness is a highly complex trait heavily influenced by additional compounding factors, including immune-mediated viral neutralisation, mucus trapping, tissue localisation, respiratory aerosol generation and host behaviour \cite{whipple2025transmission}.
A precise, mechanistic functional relationship linking all these within-host processes to a transmission rate is not universally established \cite{gilchrist2006evolution,mideo2008linking,gutierrez2015within} and there is a great deal of variability in transmission capacity \cite{ke2022daily}. 
As a first-order mathematical proxy for the illustrative purposes of this framework, we assume that the likelihood of infecting another person depends primarily on viral load. 
Following \cite{goyal2021viral}, we consider a saturating functional form for $\beta_i$,
\begin{equation}\label{eq:Hill_function}
\beta_i(V_i(a))=\frac{V_i(a)^{\alpha_i}}{V_i(a)^{\alpha_i}+k_{i}^{\alpha_i}}.
\end{equation}
The parameter $\alpha_i$ is the Hill coefficient and controls the slope of the disease transmission rate; $k_{i}$ is a viral load threshold required for transmission to occur \cite{almocera2018multiscale}.
In practice, in the following, we use the same $\alpha_i$ and $k_i$ for all individuals.

The functional form \eqref{eq:Hill_function} is by no means the only one that we could use; for instance, a Michaelis-Menten type function without the concavity change exhibited by \eqref{eq:Hill_function} when $\alpha_i<1$ or even a linear function could be satisfactory alternatives.
See \cite{gubbins2024quantifying,goyal2021viral,ke2021vivo} for other considerations on the link between viral load and transmissibility.
Note, however, that we use here a method for determining the start of the infectious period that does not require to parametrise the sigmoid function \eqref{eq:Hill_function}; see below in Section~\ref{subsec:start-of-infectious-period}.
Therefore, \eqref{eq:Hill_function} is used only when an individual is effectively transmitting the pathogen, to characterise their capacity to transmit the disease.

\subsubsection{Start of an individual's transmission period}
\label{subsec:start-of-infectious-period}

An individual typically does not start shedding a pathogen until sufficient pathogenic proliferation has taken place (the latent period), which is distinct from the clinical incubation period (time to symptom onset).
As a consequence, for each individual $i=1,\ldots,N$ in the cohort, the time $\tau_i^c$ at which they start to be able to contaminate others is defined as the first time when $V_i(a)$ exceeds a threshold $\xi^c$, that is,
\begin{equation}\label{eq:tau_c}
  \tau_i^c = \min\left\{a \geq 0 : V_i(a) \ge \xi^c\right\},   \qquad i=1,\ldots,N.
\end{equation}
We set $\tau_i^c=\infty$ if no such $a$ exists.
Letting $N_c$ be the number of individuals having started their infectious period, we obtain a collection of start of infectious period $\{\tau_i^c\}_{i=1}^{N_c}$.

\subsubsection{End of an individual's transmission period}
\label{subsec:end-of-infectious-period}
For each individual $i=1,\ldots,N$, the time $\tau_i^r$ of end of the transmission period is defined as the first time after the transmissibility peak such that $V_i(a)$ drops below a threshold $\xi^r$.
Thus,
\begin{equation}\label{eq:tau_r}
  \tau_i^r = \min\left\{a>\tau_i^{V_{\max}} : V_i(a) \leq \xi^r \right\},   \qquad i=1,\ldots,N.
\end{equation}
We set $\tau_i^r=\infty$ if no such $a$ exists.
We restrict the computation of $\tau_i^r$ to surviving individuals (i.e., such that $\tau_i^d=\infty$).
Letting $N_r$ be the number of individuals having ended their infectious period, we thus obtain a collection of times of end of infectious period $\{\tau_i^r\}_{i=1}^{N_r}$.

\begin{figure}[htbp]
    \centering
    \includegraphics[width=0.8\linewidth]{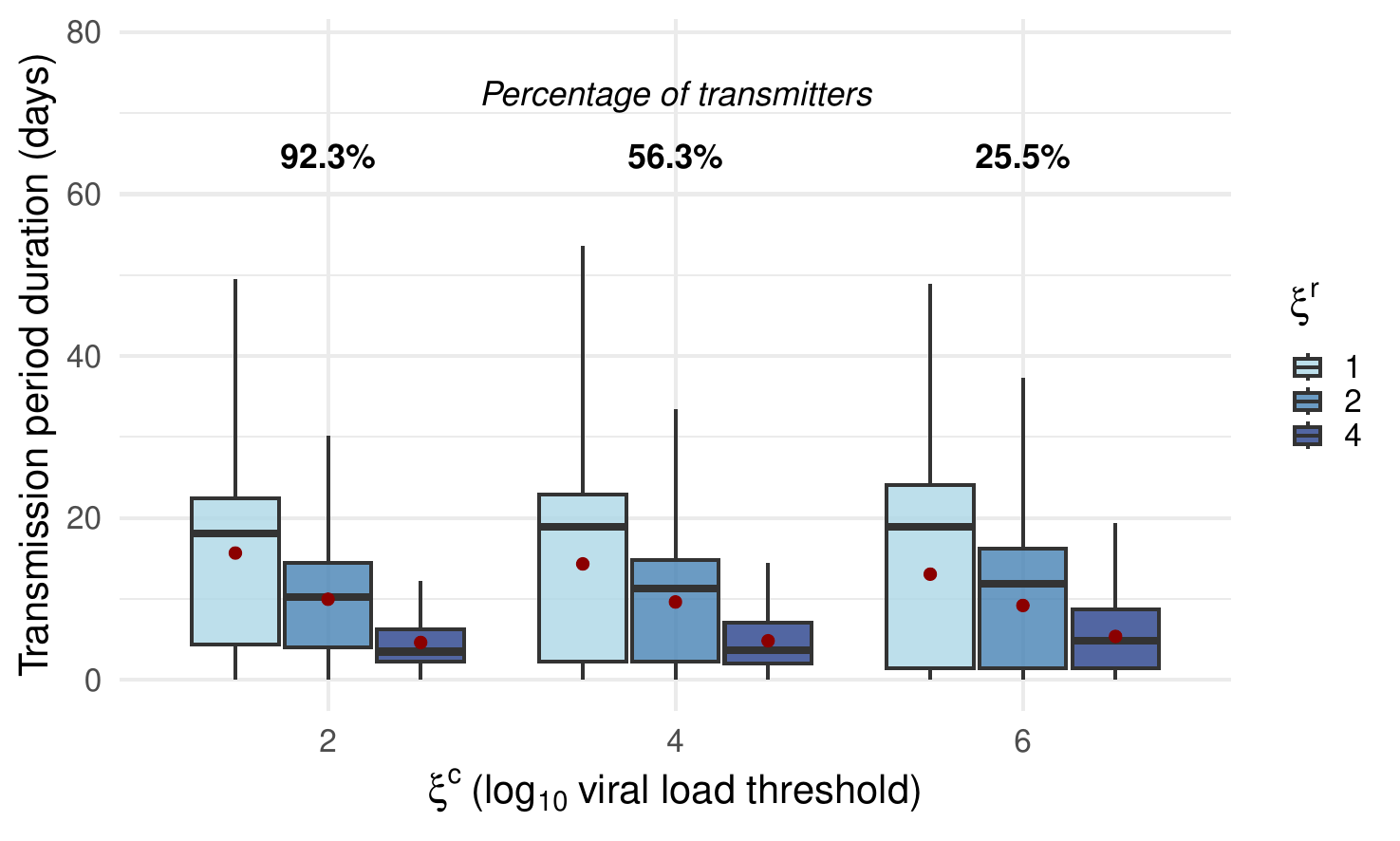}
    \caption{Distribution of the infectious period duration $\tau_i^r - \tau_i^c$ as a function of the transmission threshold $\xi^c$ and recovery threshold $\xi^r$, both expressed in $\log_{10}$.
    Numbers above the boxes indicate the percentage of transmitters.}
\label{fig:dist-beta-durations}
\end{figure}

The role of $\xi^c$ and $\xi^r$ in infectiousness of individuals in the cohort is shown in Figure~\ref{fig:dist-beta-durations}, where the overall time spent above $\xi^c$ is studied regardless of hospitalisation (except for dead individuals, who are right censored after death).
First, note that as $\xi^c$ increases, the percentage of individuals that become transmitters during the course of their infection decreases (percentages above the plots).
In passing, this shows that even with an individual $\R_{0i}\geq 1$, some individuals may never transmit the infection to others.
Second, while $\xi^c$ does play a role in the length of the transmission period (the median transmission period increases with $\xi^c$), the main contributor to the duration of the transmission period is the recovery threshold $\xi^r$.

\subsubsection{Transmission period of an individual}
\label{subsec:infectious-period}
Whether we consider \eqref{eq:Hill_function} or another function, we apply it to the viral load of each individual in the cohort, where we denote $V_i(a)$ the viral load of individual $i=1,\ldots,N$ with age-of-infection $a\in\IR_+$.
Age-of-infection $a=0$ is therefore the initial time of the start of the simulation for each individual, i.e., the time at which they join the virtual cohort.
There are then several cases for each individual, which are distinguished by considering the interval(s) on which $\beta_i(V_i(a))$ is positive.
They are listed below.
\begin{itemize}
    \item Non-transmitter: if an individual's viral load never crosses $\xi^c$, i.e., $\tau_i^c=\infty$, then $\beta_i(V_i(a))\equiv 0$ for all $a\in\IR_+$.
    \item Mild infection: $\beta_i(V_i(a))$ is positive on the interval $[\tau_i^c,\tau_i^r]$ for an individual experiencing a mild infection, with $\tau_i^r$ potentially reaching the end of simulation time for a persistent mild infection.
    \item Severe infection requiring hospitalisation: $\beta_i(V_i(a))$ is positive on the intervals $[\tau_i^c,\min(\tau_i^h)]$ and $[\max(\tau_i^h),\tau_i^r]$. 
    Indeed, we assume that an individual in ICU does not transmit the disease while they are hospitalised.
    Here again, $\tau_i^r$ may reach the end of simulation time if the viral load does not decrease sufficiently.
    \item Severe infection leading to death during hospitalisation: the interval on which $\beta_i(V_i(a))$ is positive is here $[\tau_i^c,\min(\tau_i^h)]$, i.e., the interval from the start of transmission to hospitalisation.
\end{itemize}
The end result is a collection of $N=1{,}000{,}000$ curves
\[
\bm{\beta}(a)=\left(
\beta_1(V_1(a)),\ldots,\beta_N(V_N(a))
\right),
\]
with each individual curve summarising both the times-since-infection during which an individual is in their transmission period and their transmission capacity.

If further structuration by chronological age is used, the disease progression thresholds used could be adapted, e.g., lower hospitalisation and death thresholds for older individuals.

\section{Step 3 -- Deriving information from the cohort}
\label{sec:bridging_scales}

The virtual cohort developed in Step 2 (Section~\ref{sec:generation_virtual_cohort}) details the immunological responses of $N=1{,}000{,}000$ individuals with varied immune systems when infected by pathogens with varied characteristics.
To integrate this variability into the population-level model \eqref{sys:between-hosts}, we now aggregate individual responses in a form suitable for use in that model.
In Section~\ref{sec:between-host-model-presentation}, we determined that for \eqref{sys:between-hosts}, there are a few infection-age dependent functions that need to be parametrised: transmission $\beta_P(a)$, recovery $\gamma_P(a)$ and infection-induced death $\mu_P(a)$.
We derive here numerical estimates for their form that can then be ``injected'' into \eqref{sys:between-hosts}.
This allows to effectively bridge the scales of the two models and is therefore the most important in the process of integrating the range of individual immunological responses in the population model.


\subsection{From empirical distributions to functions}
\label{sec:empirical-to-functions}
Before proceeding further, it is important to clarify the role of the functions $\beta_P$, $\gamma_P$ and $\mu_P$ used in \eqref{sys:between-hosts}, because this influences how we transform the data to obtain them.
In \eqref{sys:between-hosts}, $\beta_P(a)$ is a ``production'' term, while $\gamma_P(a)$ and $\mu_P(a)$ are ``removal'' terms, hence they play different roles and are obtained differently.

\begin{enumerate}
    \item {Removal rates $\gamma_P$ and $\mu_P$ --} These terms appear as negative quantities in the transport equation \eqref{sys:between-hosts-iP} for $i_P(t,a)$, dictating how individuals \emph{exit} the infectious compartment. 
    The framework with individuals leaving the infected compartments via two mutually exclusive pathways (recovery or death) is one of \emph{competing risks}, with $\gamma_P(a)$ and $\mu_P(a)$ acting as \emph{cause-specific hazard rates}. 
    They represent the instantaneous \emph{per capita} probability of recovering or dying at infection-age $a$, \emph{conditional on the individual having survived in the infected state up to age $a$}.
    \item {Production rate $\beta_P$ --} The transmission function $\beta_P(a)$ acts solely through the boundary condition \eqref{sys:between_hosts_BC} via the force of infection \eqref{eq:between-hosts-force-of-infection} to generate new infections. 
    It does not remove the transmitter from the $i_P(t,a)$ compartment. 
    Therefore, it represents the expected infectiousness of an individual at infection-age $a$, \emph{conditional on them still being actively infected}.
\end{enumerate}
Let $f_r(a)$ and $f_d(a)$ be the normalized density of recovery times and death times, respectively, extracted from the cohort.
Let $N$ be the total size of the virtual cohort, $N_d$ and $N_r$ be the ultimate number of deaths and recoveries, giving probabilities $p_d = N_d/N$ and $p_r = N_r/N$.

Let $\mathcal{A}(a)$ be the set of active individuals at age $a$, i.e., those for whom $a < \min(\tau_i^r,\tau_i^d)$.
Let $N_{\text{active}}(a) = |\mathcal{A}(a)|$ and $\S_{\text{cohort}}(a)$ be the overall empirical survival function of the cohort, i.e., the fraction of individuals who have not yet reached their respective $\tau_i^r$ or $\tau_i^d$ at age $a$. 
We then have
\begin{equation}\label{eq:S_cohort}
S_{\text{cohort}}(a) = \frac{N_{\text{active}}(a)}{N}.
\end{equation}
See Figure~\ref{fig:cohort_survival} in Section~\ref{supmat:R0-PDE} of the Supplementary material for an example $S_{\text{cohort}}$.
The population-level hazard rates are then
\begin{equation}
\label{eq:gammaP-muP}
    \gamma_P(a) = \frac{p_r \cdot f_r(a)}{\S_{\text{cohort}}(a)} 
    \quad \text{and} \quad
    \mu_P(a) = \frac{p_d \cdot f_d(a)}{\S_{\text{cohort}}(a)}.
\end{equation}
Biologically, these formulas represent a statistical reconstruction linking the cohort's empirical event times to population-level rates using survival analysis within a competing-risks framework. Rather than being derived from first-principles mechanistic interactions between hosts, $\gamma_P(a)$ and $\mu_P(a)$ function as cause-specific hazard rates. They map the emergent endpoints of the within-host model (viral clearance or death) into the continuous instantaneous probability distributions required by the macro-scale age-of-infection model.

The situation with the production term $\beta_P(a)$ is explained in detail in Section~\ref{subsec:age-of-infection-betaP}.

\subsection{Age-of-infection dependent disease-induced death rate $\mu_P(a)$}
\label{subsec:death}
The disease-induced death rate $\mu_P(a)$ quantifies the probability per unit time that an infected individual with infection-age $a$ dies from infection-related tissue damage. 
This rate is derived from the virtual cohort using the collection of death times $\{\tau_i^d\}_{i=1}^{N_d}$ obtained in Section~\ref{subsec:death-due-to-disease}, where $N_d$ is the number of disease-induced deaths. 
This set of death times defines an empirical distribution of infection durations prior to death, allowing us to directly approximate the population-level disease-induced death rate. We estimate a smooth kernel density $f_d(a)$ of the death time collection $\{\tau_i^d\}_{i=1}^{N_d}$ using the \texttt{R} function \texttt{density} over an interval for $a$.

\begin{figure}[htbp]
    \centering
    \begin{subfigure}{0.49\textwidth}
         \includegraphics[width=\linewidth]{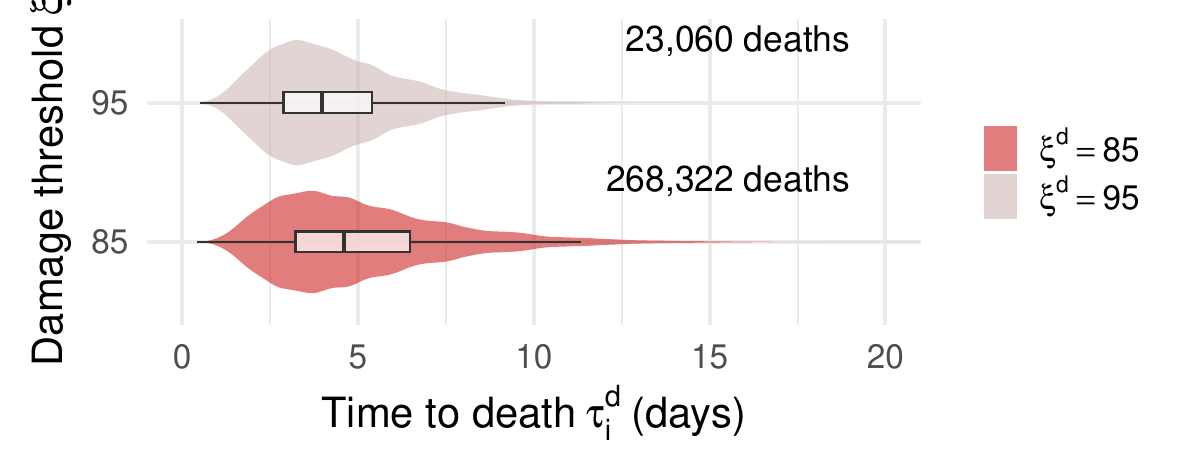}
         \caption{Time of death}
         \label{fig:tau-d-cloud}
    \end{subfigure}
    \begin{subfigure}{0.49\textwidth}
         \includegraphics[width=\textwidth]{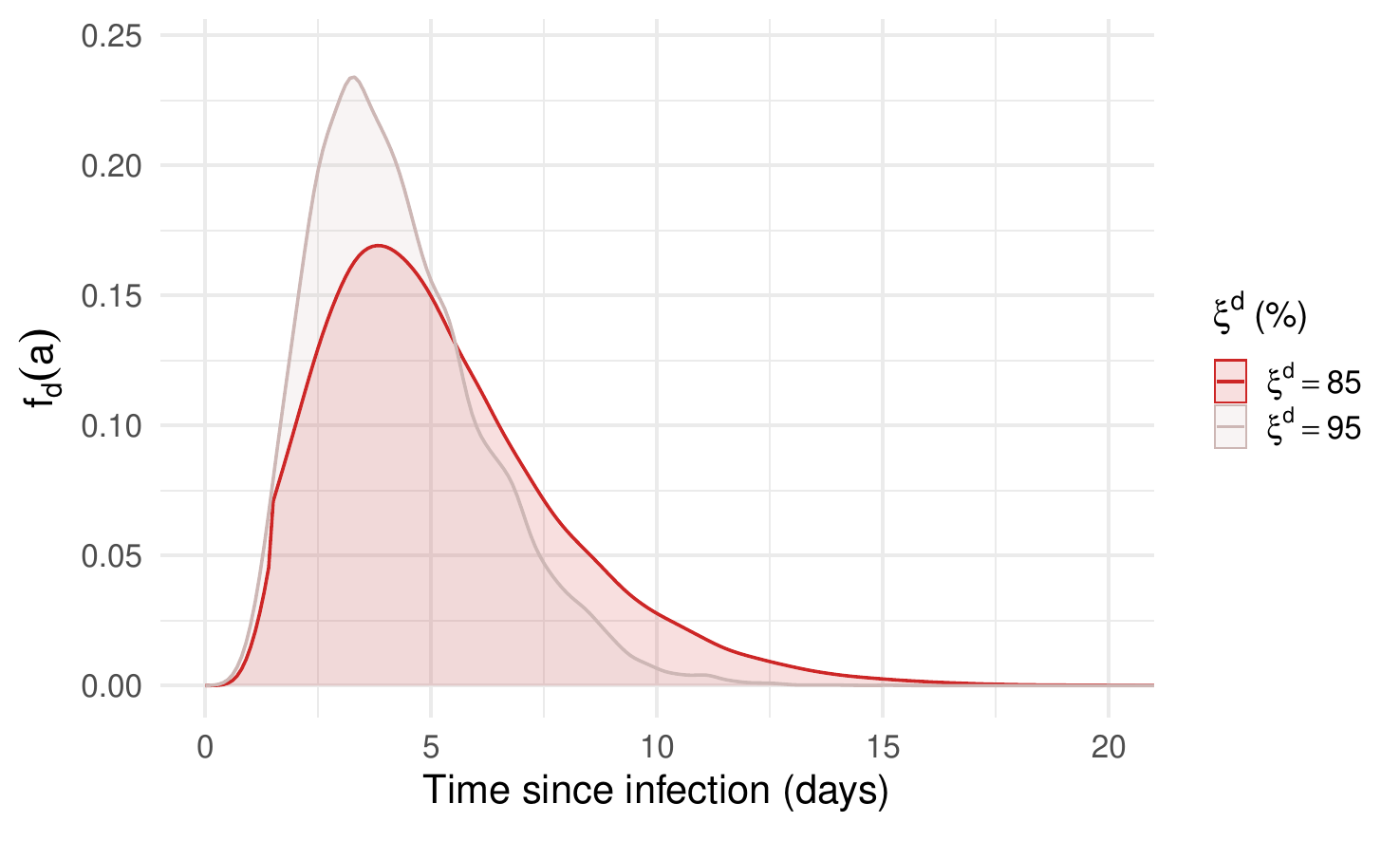}
         \caption{Probability density function $f_d(a)$}
         \label{fig:tau-d-fd}
    \end{subfigure}
    \caption{Estimation of population-level age-of-infection disease-induced death rate from the virtual cohort. 
    (a) Bottom: zoom on the $N_d$ fatal cases in Figure~\ref{fig:psi_max_tau_d_scatter}, with $\Psi_i^{\max}$ values dropped.
    Top: case where $\xi^d=95\%$.
    (b) Estimated disease-induced death density $\mu_P(a)$ obtained using lethality thresholds $\xi^d=85\%$ and $\xi^d=95\%$.}
    \label{fig:tau-d}
\end{figure}

To illustrate the first two steps in the derivation of $\mu_P(a)$, the set of death times $\{\tau_i^d\}_{i=1}^{N_d}$ for the $N_d$ deaths in the virtual cohort in Figure~\ref{fig:psi_max_tau_d_scatter} is shown in Figure~\ref{fig:tau-d-cloud} and the resulting empirical probability density function $f_d$ of death times is the red curve in Figure~\ref{fig:tau-d-fd}.
Since $\Psi_i^{\max}$ in Figure~\ref{fig:psi_max_tau_d_scatter} is not used in the computation of $\mu_P(a)$, Figure~\ref{fig:tau-d-cloud} focuses on the time distribution only.
We observe that the majority of fatalities occur between days 3 and 10, i.e., result from ``fast moving'' infections.
In Figure~\ref{fig:tau-d}, we also show what $\mu_P(a)$ would look like had we chosen a lethality threshold $\xi^d=95\%$ instead of $\xi^d=85\%$, i.e., where only points above the $95\%$ line in Figure~\ref{fig:tau-d-cloud} are used.
Note that the higher lethality threshold leads to much fewer deaths, with $\xi^d=95\%$ seeing 22,981 deaths.
Bear in mind that the within-host model incorporates no intervention, so the high death rate is justified; note also that if data on case fatality ratio were available at the population level, this could be used to get a better estimate of $\xi^d$.
The empirical probability distribution $f_d$ is then transformed in $\mu_P(a)$ (not shown) by using \eqref{eq:gammaP-muP}.

We note that the mortality rates obtained in this virtual cohort are very high relative to observed SARS-CoV-2 clinical outcomes. This is expected, as our baseline within-host model simulates an unmitigated infection without medical interventions (such as antivirals or supportive ICU care) or prior immunity, and the threshold $\xi^d$ was not explicitly calibrated to match clinical case fatality ratios.

\subsection{Recovery by age-of-infection $\gamma_P(a)$}
\label{subsec:recovery}

The recovery rate $\gamma_P(a)$ is estimated from the virtual cohort by using the collection of recovery times $\{\tau_i^r\}_{i=1}^{N_r}$ derived using the method in Section~\ref{subsec:end-of-infectious-period}, where $N_r$ is the number of individuals having reached the end of their transmission period. 
From this empirical distribution, we estimate a smooth kernel density $f_r(a)$ over an interval for $a$ as we do for $f_d(a)$ in Section \ref{subsec:death}.
We then use \eqref{eq:gammaP-muP} to transform this into $\gamma_P(a)$.

\begin{figure}[htbp]
	\centering
	\begin{subfigure}{0.49\textwidth}
		\includegraphics[width=0.8\textwidth]{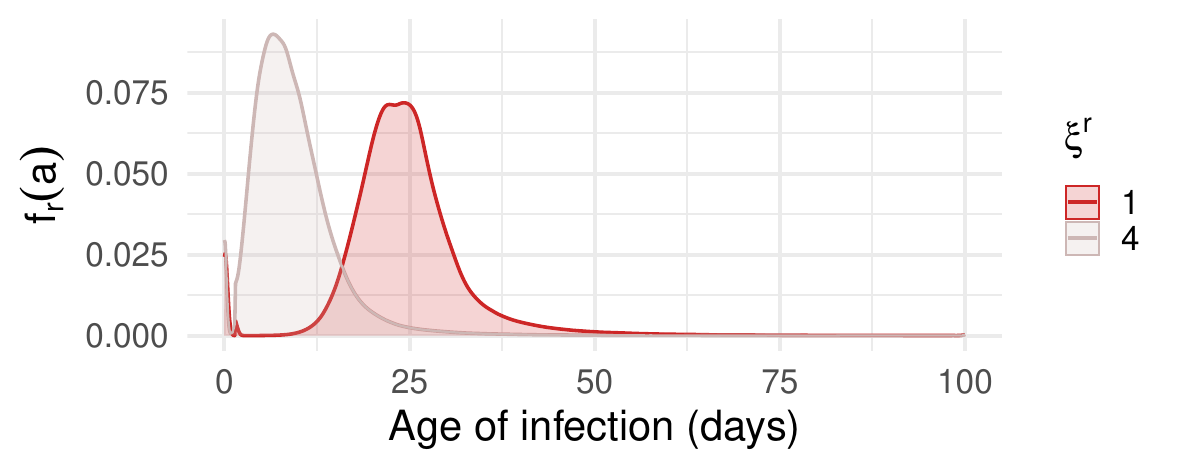}
		\caption{Probability density function $f_r(a)$}
		\label{fig:recovery-fr-gammaP-fr}
	\end{subfigure}
	\begin{subfigure}{0.49\textwidth}
		\includegraphics[width=0.8\textwidth]{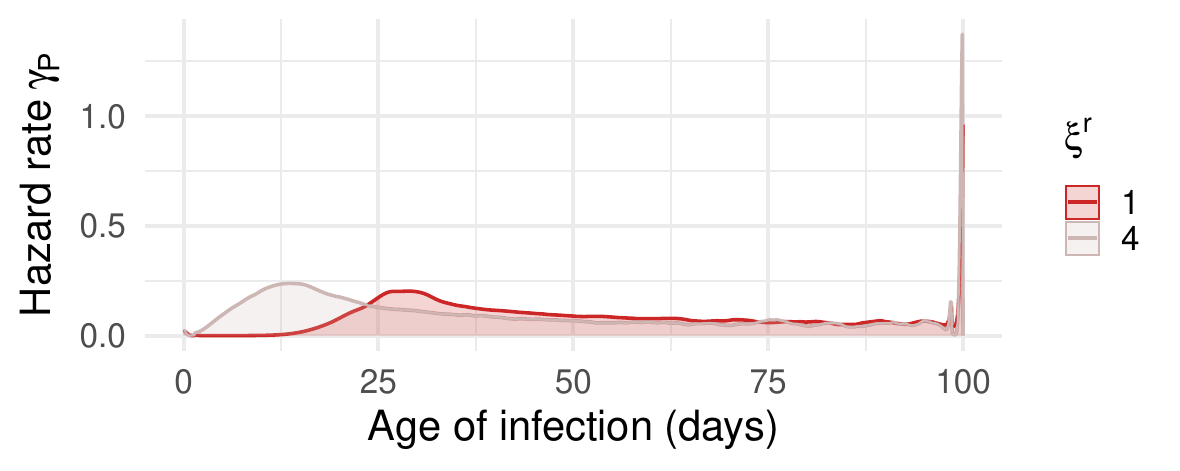}
		\caption{Hazard rate $\gamma_P(a)$}
		\label{fig:recovery-fr-gammaP-gammaP}
	\end{subfigure}	
	\caption{Estimated (a) time-to-recovery probability distribution $fr(a)$ and (b) hazard rate $\gamma_P$ for two threshold values $\xi^r$ of the recovery viral load, computed from the virtual cohort of $N=1{,}000{,}000$ individuals.}
	\label{fig:recovery-fr-gammaP}
\end{figure}

Figure~\ref{fig:recovery-fr-gammaP} shows the empirical probability density function $f_r(a)$ (Figure~\ref{fig:recovery-fr-gammaP-fr}) and resulting rate of recovery $\gamma_P(a)$ (Figure~\ref{fig:recovery-fr-gammaP-gammaP}) estimated for two threshold values $\xi^r$. 
As expected, smaller thresholds correspond to stricter recovery criteria and therefore lead to longer estimated infectious periods.
For instance, the mean (median) recovery time increased substantially from 9.29 (8.40)~days for $\xi^c = 4$ to 24.48 (23.90)~days for $\xi^c = 1$.
Regarding the recovery rate $\gamma_P(a)$, we note two characteristics. 
Firstly, the ``noisiness'' of the curve increases for larger infection-ages. 
This is due to the fact that as the infection-age increases, $S_{\text{cohort}}$ becomes closer to zero (see Figure~\ref{fig:cohort_survival}) and therefore small numerical variations in $f_r$ are greatly amplified when we divide by it in \eqref{eq:gammaP-muP} to obtain $\gamma_P(a)$.
Secondly, recall that $\gamma_P$ is a hazard rate, so its behaviour as $a$ increases is as expected, with the final uptick a consequence of the numerical estimation and also an expected property of the hazard rate.
(If $\gamma_P$ were the hazard rate of an exponential distribution, for instance, then it would be constant for all $a$. Here, the $i$ compartment ``needs emptying''.)

\subsection{Age-of-infection dependent transmission functions $\beta_P(a)$}
\label{subsec:age-of-infection-betaP}
In Section~\ref{subsec:infectious-period}, we characterised the family $\bm{\beta}(a)$ of age-of-infection dependent transmission functions that results from converting the viral load of each individual in the cohort to transmissibility and further classifying these functions depending on the severity of the disease outcome experienced by cohort members.

To convert the individual-level information $\bm{\beta}(a)$ to a function usable at the population level, we apply a variety of ``summary'' functions $g$ to $\bm{\beta}(a)$, which we write formally as
\begin{equation}\label{eq:beta_f}
  \beta_P^g(a) = g\!\left(\bm{\beta}(a)\right),\quad a\in [0,a_{\max}].
\end{equation}
In practice, we take $g$ to be the mean, median or several notable percentiles, bearing in mind that the mean is the only function giving robust mathematical properties (for reasons explained in Section~\ref{supmat:R0-PDE} of the Supplementary material).
To apply these summary functions, it is important to understand how to handle \emph{zero-transmission} segments, i.e., those intervals of infection-age during which a given individual $k$ has $\beta_k(V_k(a))=0$.

Because $\beta_P(a)$ represents transmissibility \emph{conditional} on an individual still being infected, zero-transmission segments must be carefully handled in order to avoid a ``double-discounting'' trap.
Indeed, in the population model \eqref{sys:between-hosts}, hospitalized patients remain in the $i_P(t,a)$ compartment because there is no explicit hospitalisation compartment. Therefore, their zero-transmission values \emph{must} be included in the mean to correctly pull down the population summary function. 
However, individuals who have already recovered or died are actively removed from $i_P(t,a)$ by the partial differential equation. 
If their zero-transmission segments are included in the summary function for $\beta_P(a)$, the survival penalty is applied twice: once by the artificially depressed $\beta_P(a)$ curve and again by the shrinking $i_P(t,a)$ population it multiplies.

Thus, the summary functions must be calculated conditionally. 
The mean transmissibility at infection-age $a$ is then
\begin{equation}
    \beta_P^{\text{mean}}(a) = \frac{1}{N_{\text{active}}(a)} \sum_{i \in \mathcal{A}(a)} \beta_i(V_i(a)),
\end{equation}
where $\mathcal{A}(a)$ and $N_{\text{active}}(a)$ are defined in Section~\ref{sec:empirical-to-functions}.
Other summary functions use the same mechanism, computing the summary at infection-age $a$ using curves $\beta_i(V_i(a))$ for $i\in\mathcal{A}(a)$.

\begin{figure}[htbp]
    \centering
    \includegraphics[width=0.8\linewidth]{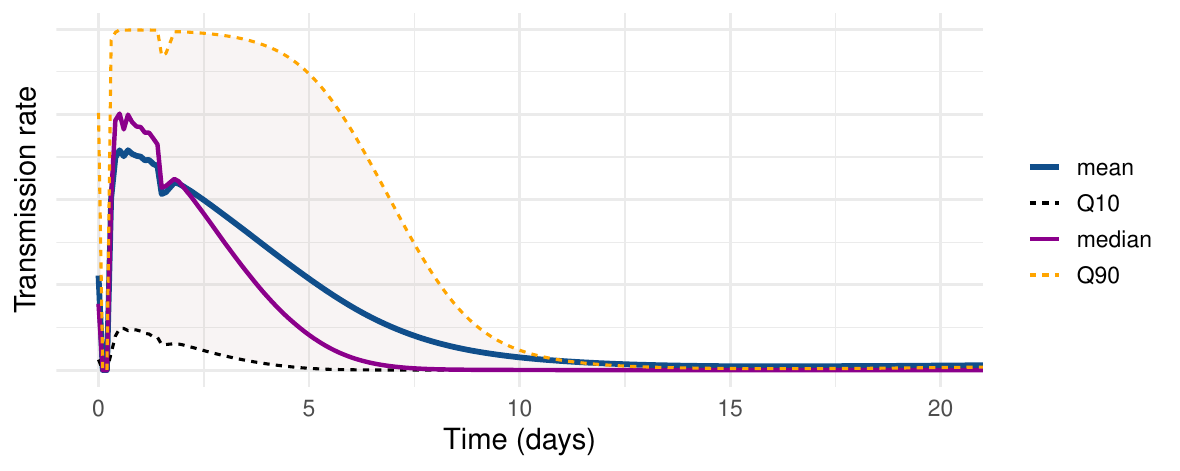}
    \caption{Transmissibility estimated from within-host viral load using four different summary functions, the mean, 10th, 50th (median) and 90th percentiles. Note the dip in the aggregated transmissibility curves around day 2.5; this represents a population-level artefact caused by the rapid hospitalisation (and subsequent removal from the actively transmitting pool) of fast-progressing severe cases, which temporarily lowers the active cohort's mean transmission rate.}
    \label{fig:beta-panels}
\end{figure}
The representative $\beta_P(a)$ curves are shown in Figure~\ref{fig:beta-panels} describe how transmissibility evolves.
Note however that these curves are meant for illustration and are not the ones used in practice: they are obtained by conditioning on positive transmission values.
In practice, the unconditioned percentile distributions, even $Q90$, are often identically zero on the entire $a$ interval since there is a great heterogeneity in transmission intervals. 
The only curve that is never zero (even if it sometimes quite small) is $\beta_P^{\text{mean}}$.

\section{Step 4 -- The between-host model}
\label{sec:between_host_model}

Recall that we consider a simplification of the endemic SIR model in \cite{yang2007class} already presented as \eqref{sys:between-hosts} in Section~\ref{sec:between-host-model-presentation}.

\subsection{Transformation of the age-of-infection model}
Following \cite{yang2007class}, we let $U_P(t)=i_P(t,0)$ denote the \emph{incidence} (in the proper epidemiological sense) of new infections at time $t$. 
Substituting this into \eqref{sys:between-hosts} leads to the equivalent delay–integro-differential formulation,
\begin{subequations}
\label{sys:between_hosts_DID}
\begin{align}
\frac{dS_P}{dt} &= b_P - d_P S_P - U_P, \\
U_P(t) &= S_P(t)\! \int_{0}^{\infty}\!\beta_P(a)\, U_P(t-a)\,
\exp\!\left[-\!\int_0^a \big(d_P+\gamma_P(\xi)+\mu_P(\xi)\big)\, d\xi\right] da.
\end{align}
\end{subequations}
This formulation highlights the link between within-host dynamics and population-level transmission: the functions $\beta_P(a)$, $\gamma_P(a)$ and $\mu_P(a)$ directly shape the infectiousness and survival of infected individuals over the period of their infection.
Simulating \eqref{sys:between_hosts_DID} is also easier than simulating \eqref{sys:between-hosts}.

\subsection{The population basic reproduction number}
\label{sec:between-hosts-R0P}
As with the within-host model, some mathematical analysis is possible and helps better understand some of the characteristic of the model. 
We refer to \cite{yang2007class} for details but see Section~\ref{supmat:R0-PDE} of the Supplementary material for some details.

There, it is shown that the basic reproduction number at the population level takes the form
\begin{equation}\label{eq:R0_DID}
\mathcal{R}_0^P
= S_P(0) \int_0^{\infty}\beta_P(a)
\exp\!\left[-\!\int_0^a \big(d_P+\gamma_P(\xi)+\mu_P(\xi)\big)\,d\xi\right] da.
\end{equation}
Equation~\eqref{eq:R0_DID} highlights how age-dependent transmission, recovery and mortality functions jointly determine the overall infection potential at the population level.

In Section~\ref{supmat:R0-PDE}, we also derive an interesting approximation linking the population reproduction number and an individual reproduction number, which we denote $\mathcal{R}_{0i}^P$ for individuals $i=1,\ldots,N$,
\begin{equation}\label{eq:R0P-integral}
    \mathcal{R}_{0i}^P = S_P(0) \int_0^{\tau_i^{\text{end}}} \beta_i(V_i(a)) \, da,
\end{equation}
where the ultimate removal time for individual $i$, $\tau_i^{\text{end}}$, is defined in Section~\ref{supmat:R0-PDE}.
Then,
\begin{equation}\label{eq:R0P-simplified}
    \mathcal{R}_0^P \approx \frac{1}{N} \sum_{i=1}^{N} \mathcal{R}_{0i}^P.
\end{equation}
Thus, the (macroscopic) basic reproduction number of the between-host partial differential equations model is \emph{close} to the arithmetic mean of the individual reproduction numbers across the entire virtual cohort.
The latter are obtained from the viral load and its transformation into transmissibility through the function $\beta_i$.

\subsection{Between-host dynamics}

To quantify how transmission and survival processes shape epidemic trajectories, we simulate the reduced between-host model~\eqref{sys:between_hosts_DID} using the age-of-infection dependent functions derived in Section~\ref{sec:bridging_scales}, where, from a cohort of $N = 1{,}000{,}000$ virtual individuals, we extracted the age-of-infection specific  disease-induced mortality rate $\mu_P(a)$ for $\xi^d = 85\%$, recovery rate $\gamma_P(a)$ for $\xi^c = 2$ and transmission rate $\beta_P^{\text{mean}}(a)$.

In Figure~\ref{fig:between_host_dynamics}, we show the incidence $U_P(t)$ as a function of time.
Figure~\ref{fig:between_host_dynamics-mean} contrasts four cases: two different $\R_0^P$ values and age-of-infection independent and dependent functions $\beta_P$, $\gamma_P$ and $\mu_P$.
Solid lines are the full age-structured dynamics with the age-dependent $\beta_P(a)$, $\gamma_P(a)$ and $\mu_P(a)$, whereas the dashed lines are the dynamics with constant parameters obtained from finding the mean values of these age-dependant functions over $[0,a_{\max}]$.
Note that the timing of the peaks of incidence observed in this figure varies greatly based on model parameters.
Using different values of $\xi^r$, for instance, leads to a reversion of the order of the peaks between constants and age-dependent functions.

\begin{figure}[htbp]
    \centering
    \begin{subfigure}{0.49\textwidth}
        \includegraphics[width=\linewidth]{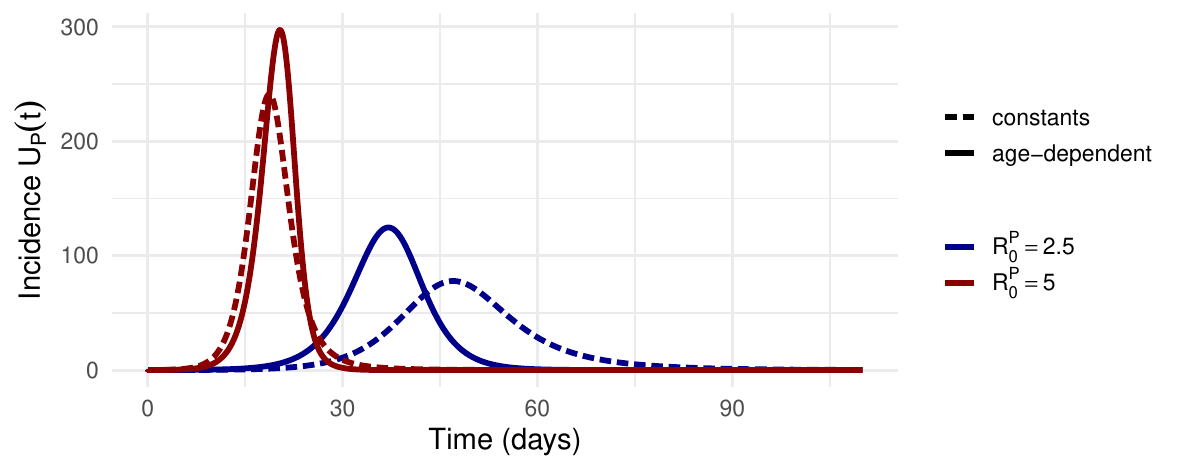}
        \caption{Two values of $\R_0^P$}
        \label{fig:between_host_dynamics-mean}
    \end{subfigure}
    \begin{subfigure}{0.49\textwidth}
        \includegraphics[width=\linewidth]{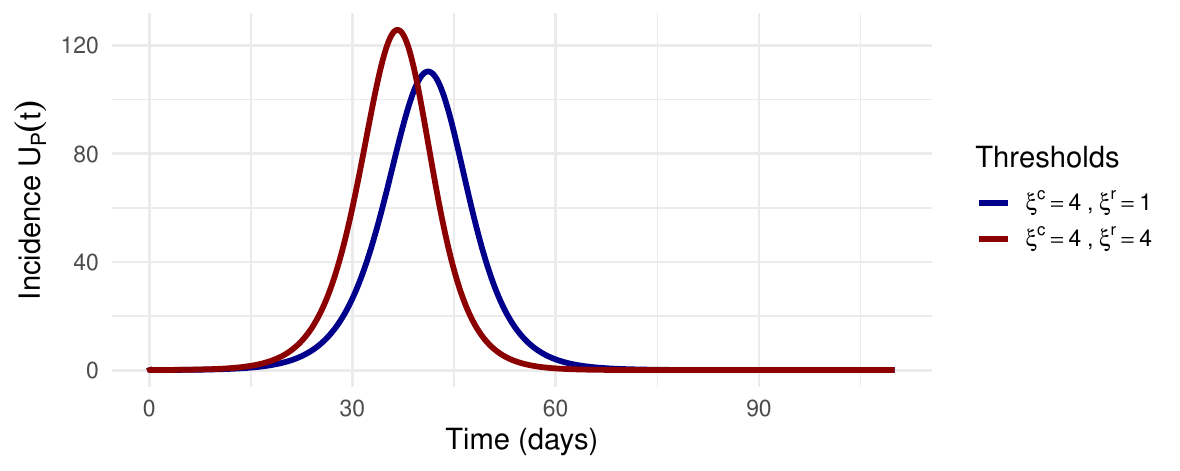}
        \caption{Function of $\xi^r$}
        \label{fig:between_host_dynamics-xir}
    \end{subfigure}
    \caption{Comparison of between-host incidence trajectories $U_P(t)$ in \eqref{sys:between_hosts_DID}.
    (a) Constant (dashed lines) versus age-of-infection-dependent (solid lines) parameters $\beta_P$, $\gamma_P$ and $\mu_P$, for $\R_0^P=2.5$ and $\R_0^P=5$ and initial conditions $S_P(0)=2000$ and $U_P(0)=1$.
    (b) Effect of varying the recovery threshold $\xi^r$ when $\R_0^P=2.5$.}
    \label{fig:between_host_dynamics}
\end{figure}

In Figure~\ref{fig:between_host_dynamics-xir}, we show the influence of $\xi^r$ on the behaviour of incidence $U_P(t)$, using the same values as used in Figure~\ref{fig:recovery-fr-gammaP-gammaP}.
The area under the curve for the two plotted solutions is almost equal (to $10^{-2}$), so the result is somewhat paradoxical: one could expect that longer times to recovery would lead to a faster infection dynamics since transmitting individuals transmit longer, but this is not the case.
Considering Figure~\ref{fig:dist-beta-durations} for $\xi_c=4$, $\xi_r=1$ and $\xi_c=4$, $\xi_r=4$, the transmission period is shorter for the latter case, so there could some balancing in the interplay between the time to recovery and the duration of transmission, leading to this similar overall impact.


\section{Discussion}
\label{sec:discussion}


We propose a multiscale modelling framework providing a one-directional link from within-host dynamics of pathogen infection to between-host transmission dynamics.
We do so by creating a virtual cohort of individuals and observing the characteristics of pathogen propagation and immune response to infection in each individual. 
We assume that an individual's capacity to transmit infection is a function of pathogen load, with the pathogen assumed here to be a virus.
This assumption is supported by the scientific literature and empirical evidence, which have consistently demonstrated the role of viral load in determining the transmissibility of infectious diseases \cite{kawasuji2020transmissibility,singh2021viral,marc2021quantifying}. 
We then ``integrate'' this transmissibility at the population level, providing a link between within-host dynamics and the spread of the disease in a population.

The within-host subsystem we use is a simplified model that focuses only on stimulating innate immunity through type I interferons, disregarding the intricate details of the immune response. 
Likewise, the age-of-infection structured Susceptible-Infectious-Recovered (SIR) model used to model population level spread is about the simplest that we could have used.
However, the method we propose can easily be applied beyond the context of these two simplified models.
Choosing a different within-host model and another between-host model allows to investigate other infectious diseases.

An interesting feature of the method is the ``conceptual symmetry'' between \eqref{eq:R0-within-host} and \eqref{eq:R0P-integral}.
The (microscopic) reproductive capacity of the virus $\R_{0i}$ given by \eqref{eq:R0-within-host} gives the average number of new infected cells per infected cell within individual $i$ and is dictated entirely by static parameters evaluated at the virus-free equilibrium.
The (macroscopic) reproductive capacity of the virus $\mathcal{R}_{0i}^P$ given by \eqref{eq:R0P-integral} gives the average number of new infected hosts generated by that same individual and is driven by the temporal integral of their infectiousness. 
Thus, the methodology acts as a nonlinear operator mapping $\mathcal{R}_{0i}$ to $\mathcal{R}_{0i}^P$ via the within-host dynamics.
This mapping captures the evolutionary trade-off between virulence and transmissibility: while $\mathcal{R}_{0i} < 1$ guarantees $\mathcal{R}_{0i}^P \approx 0$, a massive $\mathcal{R}_{0i}$ can lead to such rapid target cell depletion that the infectious period $\tau_i^{\text{end}}$ is severely truncated by hospitalization or death.
This premature truncation can in turn result in a lower macroscopic $\mathcal{R}_{0i}^P$, illustrating how too high a virulence is detrimental to a pathogen's ability to spread.

Some remarks about the methodology are in order.
First, note that the choice of ``outcome function'' has potentially a great influence on the results of the sensitivity analysis and, as a consequence, on the parameters chosen to generate the cohort.
We use here the peak value and time of the peaks of the viral load and IFN concentrations and used a simple combination of PRCC values to rank the influence of parameters.
Evaluating other outcomes or even different combinations of PRCC results may have yielded a very different set of parameters to vary.
Also, we considered that hospitalisation and disease-induced death were a result of the number of infected cells, while transmissibility was determined by viral load (with further suppression because of hospitalisation).
Other choices would lead to different outcomes for the parametrisation of the functions used in the between-host model and as a consequence, on its solutions.
Similarly, the choice of a Hill function to relate the transmission rate to the viral load, disregarding other potential forms of the transmission function, also has strong consequences in terms of our description of virulence.
Finally, it is striking how much variation is observed in the dynamical responses of the age-of-infection between-hosts model based on the choice of summary function used.
This means that using, for instance, a discrete structuration of the cohort by age groups would lead to different parameters in the within-host model and different summary functions for each age group, in turn giving different epidemiology functions at the population level.
Altogether, these choices must be made based on the scientific questions being investigated and domain knowledge.

Despite these and other limitations, the methodology presented here effectively bridges the individual immune responses arising from the diverse characteristics of both hosts and pathogens with the propagation of the disease at the population level, allowing to capture the impact of individual-level immune variations on the overall dynamics of disease transmission.
A natural extension of this framework would be the formal propagation of uncertainty, examining how parametric variance at the within-host level propagates nonlinearly through the cohort generation step into wider confidence intervals for the population-level epidemiological functions and incidence trajectories.

We envision this framework as a scenario-testing and exploratory tool rather than a real-time epidemiological forecasting model. It is particularly well-suited for evaluating 'what-if' scenarios, such as determining how individual-level interventions like the administration of antiviral therapies that blunt peak viral load scale up to alter the population-level epidemic trajectory and the macroscopic reproduction number.
Of course, this link is one directional as there is no feedback loop from infection in the population into the immune response in individuals or characteristics of the pathogens, but we do think that our method does provide a mathematically sound and computationally efficient way to attack some of the multiscale issues that arise when considering immuno-epidemiological systems.

\enlargethispage{20pt}

\bibliography{references}
\bibliographystyle{plain30}

\clearpage
\newpage
\appendix
\counterwithin{equation}{section}
\counterwithin{figure}{section}
\counterwithin{table}{section}
\setcounter{equation}{0}
\setcounter{theorem}{0}
\renewcommand{\theequation}{\thesection.\arabic{equation}}
\renewcommand{\thetheorem}{\thesection.\arabic{theorem}}
\setcounter{section}{0}
\setcounter{page}{1}

\begin{center}
    \Large
    Within-host immunology to age-of-infection epidemiology \emph{via} a virtual cohort\\[0.2cm]
    Supplementary material \\[0.5cm]
    \large
    Julien Arino, Morgan Craig, Clotilde Djuikem, Kang-Ling Liao, Stéphanie Portet
\end{center}
\vskip0.5cm

\begin{abstract}
We give here more details about some of the content of the main paper.
To distinguish between content of the paper itself and content in this Supplemental material, sections, equations and results here are numbered using or starting with letters.
References are also specific to this part.
\end{abstract}

\section{Mathematical analysis of the within-host model}
\label{sup-material:math-analysis}
We present and establish here the results that were briefly discussed in Section~\ref{sec:math-analysis} of the article.
We make the following observations concerning the within-host system \eqref{sys:within_host}.

\begin{propsuppmaterial}\label{prop:positive-orthant-invariant}
The positive orthant $\IR_+^7$ is positively invariant under the flow of \eqref{sys:within_host}. 
Furthermore, solutions to \eqref{sys:within_host} stay bounded.
\end{propsuppmaterial}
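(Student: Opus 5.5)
Positive invariance follows from the standard quasi-positivity criterion: the vector field is locally Lipschitz on a neighbourhood of $\IR_+^7$, and on each face $\{x_j=0\}$ the $j$th component of the vector field is nonnegative. Local Lipschitz continuity is clear because the only denominators, $\varepsilon_{FI}+F_B$ and $I+\eta_{FI}$, stay positive near the orthant. We check the faces one at a time:
\begin{itemize}
\item $V=0$: $V'=pI\ge0$.
\item $S=0$: $S'=0$, since every term in \eqref{sys:within_host-dS} carries the factor $S$; so $S$ stays identically $0$ or stays positive.
\item $R=0$: $R'=\beta_V SV F_B/(\varepsilon_{FI}+F_B)\ge 0$.
\item $I=0$: $I'=\beta_V SV\,\varepsilon_{FI}/(\varepsilon_{FI}+F_B)\ge0$.
\item $D=0$: $D'=d_I I\ge0$.
\item $F_U=0$: $F_U'=\psi_F^{prod}+p_{FI}I/(I+\eta_{FI})+k_{U_F}F_B>0$.
\item $F_B=0$: $F_B'=k_{B_F}(c^\star+I)a_F F_U\ge0$.
\end{itemize}
Hence no trajectory can leave $\IR_+^7$, and the solution exists on its maximal interval within the orthant.

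For boundedness we work in the order cells, then virus, then interferon. First add the $S$, $I$ and $R$ equations; the infection terms cancel:
\[
(S+I+R)'=\lambda_S\Bigl(1-\tfrac{N}{S_{\max}}\Bigr)(S+R)-d_I I .
\]
Since $N\ge S+R$, the first term is negative whenever $N>S_{\max}$. We then use the total $N=S+I+R+D$, for which
\[
N'=\lambda_S(1-N/S_{\max})(S+R)-d_D D .
\]
From this, $N'\le \lambda_S(S+R)-d_D D$. A simpler route is to bound $S+R$ directly: $(S+R)'\le\lambda_S(1-(S+R)/S_{\max})(S+R)$, because $N\ge S+R$ and the $-\beta_V SV\,\varepsilon_{FI}/(\varepsilon_{FI}+F_B)$ term is nonpositive. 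Logistic comparison then gives $S+R\le\max(S(0)+R(0),S_{\max})=:M$.

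Next, $I$ satisfies $I'\le \beta_V S V - d_I I$, which couples it to $V$. To handle this we look at $N$ again. When $N\ge S_{\max}$, $N'\le -d_D D\le0$, which does not control $I$. So instead we use the fact that $I$ only grows by consuming $S$: consider $W=S+I$, with $W'\le \lambda_S S - d_I I\le \lambda_S M-d_I I$. That still does not give decay in $W$. Cleaner is $W'\le\lambda_S M - \min(d_I,\cdot)$, which fails since $S$ has no decay. So we take $W=S+R+I$: $W'\le \lambda_S M - d_I I$, and $S+R\le M$. Hence
\[
W'\le \lambda_S M + d_I M - d_I W ,
\]
which gives $W\le\max(W(0),(\lambda_S+d_I)M/d_I)$, and therefore $I$ is bounded.

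With $I$ bounded, $V'=pI-d_VV$ and $D'=d_II-d_DD$ are linear with bounded forcing, so $V$ and $D$ are bounded by the comparison lemma. For the interferon, add the last two equations:
\[
(F_U+F_B)'=\psi_F^{prod}+\frac{p_{FI}I}{I+\eta_{FI}}-k_{lin_f}F_U-k_{int_f}F_B\le \psi_F^{prod}+p_{FI}-\kappa(F_U+F_B),
\]
with $\kappa=\min(k_{lin_f},k_{int_f})$. This bounds $F_U+F_B$ and hence each of them by nonnegativity. Boundedness on the maximal interval then rules out blow-up, so solutions are global.

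The main obstacle is the coupling of $I$ with $V$, since $I$ receives an inflow proportional to $SV$. It is resolved by choosing the aggregate $S+R+I$, in which the infection terms cancel, together with the a priori logistic bound on $S+R$.
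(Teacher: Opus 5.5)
Your proof is correct, but for the cell compartments it takes a longer route than the paper. The paper bounds all four cell compartments at once through the total $N=S+R+I+D$. Since $N'=\lambda_S(1-N/S_{\max})(S+R)-d_DD\le 0$ whenever $N\ge S_{\max}$, one gets $N\le\max\{N(0),S_{\max}\}$. It then bounds $F_U+F_B$ and $V$ by linear comparison, as you do.

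You wrote down this same inequality and then said it \emph{does not control $I$}. That is false, since $0\le I\le N$. The logistic bound on $S+R$ and the aggregate $S+R+I$ are therefore an unnecessary detour. They also give a cruder bound on $I$, namely $\max\{W(0),(\lambda_S+d_I)M/d_I\}$. The exploratory sentences about $S+I$ should be cut. That aggregate would in fact have worked too, since $S\le M$ gives $(S+I)'\le(\lambda_S+d_I)M-d_I(S+I)$.

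In several details your argument is more careful than the paper's:
\begin{itemize}
\item Your face-by-face quasi-positivity check is the correct invariance criterion. The paper only notes $\bm h(\bm 0)\ge\bm 0$, which says nothing about the rest of the boundary.
\item Your $N'$ has the correct factor $S+R$. The paper writes $S$, and its comparison $N'\le\lambda_S(1-N/S_{\max})N$ is only justified when $N\le S_{\max}$.
\item Your interferon estimate keeps $p_{FI}$ as a bound on the Michaelis--Menten term. The paper drops this nonnegative term with an inequality pointing the wrong way.
\item You state explicitly that the a priori bounds give global existence.
\end{itemize}
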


\begin{proof}
Invariance of $\IR_+^7$ under the flow of \eqref{sys:within_host} is straightforward to establish.
Indeed, writing \eqref{sys:within_host} as $\bm{x}'=\bm{h}(\bm{x})$, it is clear that $\bm{h}(\bm{0})\geq\bm{0}$, so the vector field points inward on the boundary of $\IR_+^7$ and solutions remain nonnegative for nonnegative initial conditions.

Now we prove the boundedness of solutions.
Denoting $N=S+R+I+D$ the total number of living and dead target cells, we have from \eqref{sys:within_host} and the invariance of $\IR_+^7$ under the flow of \eqref{sys:within_host} that
\begin{equation}\label{eq:dN}
N'=\lambda_S\left(1-\cfrac{N}{S_{\max}}\right)S-d_D D
\leq \lambda_S\left(1-\cfrac{N}{S_{\max}}\right)N.    
\end{equation}
Thus it is clear from \eqref{eq:dN} that if $N(0)=0$, then $N(t)\equiv 0$ for all times and if $N(0)>0$, then $N(t)\le \max\{N(0),{S}_{\max}\}$.
It follows that $S$, $R$, $I$ and $D$ are bounded.

It remains to show that $V$, $F_U$ and $F_B$ are also bounded.
Let us start with $F_U$ and $F_B$.
Summing \eqref{sys:within_host-dFU} and \eqref{sys:within_host-dFB}, we find
\[
    (F_U+F_B)' = \psi_F^{prod} + \dfrac{p_{FI}I}{I+\eta_{FI}}-k_{lin_f}F_U-k_{int_f}F_B
    \leq \psi_F^{prod}-k_{lin_f}F_U-k_{int_f}F_B.
\]
Let $k_{min}=\min(k_{lin_f},k_{int_f})$, which is positive since both parameters are positive.
Then it follows that
\begin{equation}\label{ineq:dFB_plusFU}
    (F_U+F_B)' \leq \psi_F^{prod} - k_{min}(F_U+F_B).
\end{equation}
Solving \eqref{ineq:dFB_plusFU}, it follows that for $a>0$,
\[
F_U(a) + F_B(a) \leq \frac{\psi_F^{prod}}{k_{min}} 
+ \left( (F_U(0)+F_B(0)) - \frac{\Psi_F^{prod}}{k_{min}} \right) e^{-k_{min}a},
\]
i.e., $F_U(a) + F_B(a)\leq\max\left(\psi_F^{prod}/k_{min},F_U(0)+F_B(0)\right)$ for all $a\geq 0$.

So all that remains is to show that the virus population $V(a)$ remains bounded.
Solving the linear nonautonomous equation for $V(a)$ given by \eqref{sys:within_host-dV}, we find
\[
V(a) = V(0) e^{-d_V a} + p e^{-d_V a} \int_0^a e^{d_V s} I(s)\ ds.
\]
Since $I(a)$ is bounded, denoting $I_{sup}=\sup_aI(a)$, we obtain
\begin{equation}
    V(a) \leq V(0) e^{-d_V a} + \frac{p I_{sup}}{d_V} \left(1 - e^{-d_V a}\right).
\end{equation}
This proves the result.
\end{proof}

Using the various bounds found in the previous proof, we could specify a tighter domain that is invariant under the flow of \eqref{sys:within_host}.
However, this is not useful in the numerical work here so we do not refine Proposition~\ref{prop:positive-orthant-invariant} any further.
If, on the other hand, one encountered numerical issues when solving \eqref{sys:within_host}, then these stricter bounds would be useful, as they would help in detecting numerical instability.

\begin{propsuppmaterial}
\label{prop:VFE}
A virus-free equilibrium (VFE) is a point
\begin{equation}
\label{eq:VFE}
    (S,I,R,D,V,F_U,F_B):=\left(
    S^0,0,R^0,0,F_U^0,\frac{\psi_F^{prod}-k_{lin_f}F_U^0}{k_{int_f}}
    \right),
\end{equation}
where
\[
S^0 \ge0,\quad R^0\ge0,\quad S^0 +R^0=S_{\max},
\]
or $(S^0,R^0)=(0,0)$ and $F_U^0$ is the only positive root of the polynomial
\[
P(F_U)=a_0+a_1F_U+a_2F_U^2
\]
whose coefficients are given by \eqref{eq:coeffs-IFN-VFE}.
\end{propsuppmaterial}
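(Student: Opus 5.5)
The plan is to set all derivatives in \eqref{sys:within_host} to zero with $V=0$ and solve the resulting algebraic system. I will treat the cell block $(V,I,D,S,R)$ and the interferon block $(F_U,F_B)$ separately, since once $I=0$ they decouple. (The tuple in \eqref{eq:VFE} omits the $V$ entry, which is $0$ at a VFE. I read it as $(S^0,0,R^0,0,0,F_U^0,F_B^0)$.)

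\textbf{Cell block.} From \eqref{sys:within_host-dV} with $V=0$, we get $pI=0$, so $I=0$ because $p>0$. Then \eqref{sys:within_host-dD} gives $d_D D=0$, so $D=0$, and therefore $N=S+R$. The infection terms in \eqref{sys:within_host-dS} and \eqref{sys:within_host-dR} vanish because $V=0$. What remains is
\[
\lambda_S\left(1-\frac{S+R}{S_{\max}}\right)S=0,
\qquad
\lambda_S\left(1-\frac{S+R}{S_{\max}}\right)R=0.
\]
There are two cases. If $S+R\neq S_{\max}$, both factors force $S=R=0$. Otherwise $S^0+R^0=S_{\max}$, and nonnegativity (Proposition~\ref{prop:positive-orthant-invariant}) restricts attention to $S^0,R^0\ge 0$. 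This produces exactly the stated dichotomy, with a continuum of equilibria in the second case.

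\textbf{Interferon block.} With $I=0$, I add \eqref{sys:within_host-dFU} and \eqref{sys:within_host-dFB}. This gives $\psi_F^{prod}-k_{lin_f}F_U-k_{int_f}F_B=0$, hence $F_B=(\psi_F^{prod}-k_{lin_f}F_U)/k_{int_f}$, the stated last component. I then substitute this into \eqref{sys:within_host-dFB} set to zero,
\[
(k_{int_f}+k_{U_F})F_B=k_{B_F}\left(c^\star a_F-F_B\right)F_U .
\]
Clearing denominators gives a quadratic $P(F_U)=a_0+a_1F_U+a_2F_U^2$, and I will record its coefficients as \eqref{eq:coeffs-IFN-VFE}:
\[
a_2=\frac{k_{B_F}k_{lin_f}}{k_{int_f}},\qquad
a_1=k_{B_F}\left(c^\star a_F-\frac{\psi_F^{prod}}{k_{int_f}}\right)+\frac{(k_{int_f}+k_{U_F})k_{lin_f}}{k_{int_f}},\qquad
a_0=-\frac{(k_{int_f}+k_{U_F})\psi_F^{prod}}{k_{int_f}}.
\]

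\textbf{Root selection.} This is the step needing care. Because $a_2>0$ and $a_0<0$, the product of the roots $a_0/a_2$ is negative. So the roots are real and of opposite sign, and there is exactly one positive root $F_U^0$, whatever the sign of $a_1$. It remains to check that this root gives a biologically admissible $F_B\ge 0$, i.e. $F_U^0\le\psi_F^{prod}/k_{lin_f}$. I will evaluate $P$ at $F_U=\psi_F^{prod}/k_{lin_f}$. There $F_B=0$, and the terms reduce to $P=k_{B_F}c^\star a_F\,\psi_F^{prod}/k_{lin_f}>0$. Since $P(0)=a_0<0$, the intermediate value theorem places $F_U^0$ strictly inside $(0,\psi_F^{prod}/k_{lin_f})$, so $F_B^0>0$. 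I expect this consistency check, together with tracking the signs of the coefficients, to be the only nontrivial point; everything else is direct substitution.
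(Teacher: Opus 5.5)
Your proposal is correct and follows the paper's proof essentially step for step: the same decoupling once $V=I=D=0$, the same case split for $(S,R)$, summing the interferon equations and substituting into the $F_B$ equation to obtain exactly the coefficients \eqref{eq:coeffs-IFN-VFE}, and a sign argument on $a_0<0<a_2$ (you use the product of the roots where the paper cites Descartes' rule of signs). Your extra check that $P(\psi_F^{prod}/k_{lin_f})=k_{B_F}c^\star a_F\psi_F^{prod}/k_{lin_f}>0$, which forces $F_B^0>0$, is a worthwhile addition the paper omits, since it is what actually justifies $\theta^0\in(0,1]$ in \eqref{eq:app-theta0}.
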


\begin{proof}
At the virus-free equilibrium (VFE), $V=I=0$. 
Substituting this into \eqref{sys:within_host-dD} gives $D=0$.
It follows that at the VFE, $N=S+R$.
So the VFE is solution to
\begin{subequations}
\label{sys:within_host-VFE}
 \begin{align} 
0 &=\lambda_S \left(1-\frac{S+R}{S_{max}}\right)S, \label{sys:within_host-VFE-dS} \\
0 &= \lambda_S \left(1-\frac{S+R}{S_{max}}\right)R, 
\label{sys:within_host-VFE-dR}\\
0 &= \psi_F^{prod}-k_{lin_f}F_U-k_{B_F}\left(c^\star a_F-F_B\right)F_U+k_{U_F}F_B, 
\label{sys:within_host-VFE-dFU} \\
0 &= -k_{int_f}F_B+k_{B_F}\left(c^\star a_F-F_B\right)F_U-k_{U_F}F_B. \label{sys:within_host-VFE-dFB}
\end{align}
\end{subequations}
Note that \eqref{sys:within_host-VFE-dS}--\eqref{sys:within_host-VFE-dR} and \eqref{sys:within_host-VFE-dFU}--\eqref{sys:within_host-VFE-dFB} decouple at the VFE.

Let us first consider \eqref{sys:within_host-VFE-dS}--\eqref{sys:within_host-VFE-dR}. 
From \eqref{sys:within_host-VFE-dS}, $S=0$ or $S+R=S_{max}$. 
Substituting $S=0$ into \eqref{sys:within_host-VFE-dR} gives $R=0$ or $R=S_{max}$, yielding two equilibrium points $(S,R)=(0,0)$ and $(S,R)=(0,S_{max})$.
Now substitute $S+R=S_{max}$ into \eqref{sys:within_host-VFE-dR}, giving $0R=0$, i.e., any value of $(S,R)$ such that $S+R=S_{max}$ is an equilibrium. 
This includes the already found $(S,R)=(0,S_{max})$.
As a consequence, at the VFE, either $(S,R)=(0,0)$ or $(S,R)$ are such that $S+R=S_{max}$, i.e., the $(S,R)$ component of the VFE is the union of a point and a continuum of equilibria.

Now consider \eqref{sys:within_host-VFE-dFU}--\eqref{sys:within_host-VFE-dFB}.
Summing \eqref{sys:within_host-VFE-dFU} and \eqref{sys:within_host-VFE-dFB}, we see that
\[
\psi_F^{prod}=k_{lin_f}F_U+k_{int_f}F_B.
\]
It follows that $F_B=(\psi_F^{prod}-k_{lin_f}F_U)/k_{int_f}$, which, when substituted into, say, \eqref{sys:within_host-VFE-dFB}, gives $F_U$ at the equilibrium as the root of a second degree polynomial in $F_U$,
\[
P(F_U)=a_0+a_1F_U+a_2F_U^2,
\]
where
\begin{subequations}
\label{eq:coeffs-IFN-VFE}
\begin{align}
    a_0 &= -\psi_F^{prod} -k_{U_F}\frac{\psi_F^{prod}}{k_{int_f}} \\
    a_1 &= k_{lin_f}+k_{B_F}\left(c^\star a_F-\frac{\psi_F^{prod}}{k_{int_f}}\right)
        +k_{U_F}\frac{k_{lin_f}}{k_{int_f}}\\
    a_2 &= k_{B_F}\frac{k_{lin_f}}{k_{int_f}}.
\end{align}
\end{subequations}
As $a_0<0$ and $a_2>0$, it follows from Descartes' rules of signs that $P(F_U)$ has always a single positive real root, regardless of the sign of $a_1$.
\end{proof}

As a model with a continuum of equilibria, \eqref{sys:within_host} is reminiscent of epidemic models in classical epidemiological modelling.
We confirm this with the following result and the computation that follows.
First, there is no other equilibrium than the virus-free continuum of equilibria given by Proposition~\ref{prop:VFE}.

\begin{propsuppmaterial}
\label{prop:no_endemic}
The within-host model (1) does not admit a positive virus-endemic equilibrium, i.e., there is no equilibrium point in $\mathbb{R}_+^7$ with $V^* > 0$.
\end{propsuppmaterial}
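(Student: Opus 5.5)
Suppose, for contradiction, that $(V^*,S^*,I^*,R^*,D^*,F_U^*,F_B^*)\in\IR_+^7$ is an equilibrium of \eqref{sys:within_host} with $V^*>0$. The plan is to chain the equilibrium relations from the viral equation through the cell equations until they contradict the logistic structure of the $S$ and $R$ equations.

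\emph{Infected and dead cells.} Setting \eqref{sys:within_host-dV} to zero gives $I^*=d_VV^*/p>0$. Then \eqref{sys:within_host-dD} gives $D^*=d_II^*/d_D>0$. Setting \eqref{sys:within_host-dI} to zero gives
\[
\beta_V S^*V^*\,\frac{\varepsilon_{FI}}{\varepsilon_{FI}+F_B^*}=d_II^*>0,
\]
so $S^*>0$, since $\varepsilon_{FI}>0$ and $F_B^*\ge0$.

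\emph{Target cells.} Divide the $S$-equation \eqref{sys:within_host-dS} at equilibrium by $S^*>0$ to get
\[
\lambda_S\Bigl(1-\frac{N^*}{S_{\max}}\Bigr)=\beta_VV^*>0,
\]
so $N^*<S_{\max}$ and the logistic factor is strictly positive.

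\emph{Resistant cells.} The $R$-equation \eqref{sys:within_host-dR} at equilibrium reads
\[
0=\lambda_S\Bigl(1-\frac{N^*}{S_{\max}}\Bigr)R^*+\beta_VS^*V^*\,\frac{F_B^*}{\varepsilon_{FI}+F_B^*}.
\]
Both terms are nonnegative, so $R^*=0$ and $F_B^*=0$.

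\emph{Interferon.} With $F_B^*=0$, the $F_B$-equation \eqref{sys:within_host-dFB} reduces to $k_{B_F}(c^\star+I^*)a_FF_U^*=0$, so $F_U^*=0$. Substituting into \eqref{sys:within_host-dFU} then gives
\[
0=\psi_F^{prod}+\frac{p_{FI}I^*}{I^*+\eta_{FI}}>0,
\]
which is the desired contradiction.

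The only delicate point is to confirm that every parameter used is strictly positive. The argument needs $\psi_F^{prod}>0$, $k_{B_F}>0$, $a_F>0$ and $\varepsilon_{FI}>0$, and all of these hold given the ranges in Table~\ref{tab:parameter_values_ranges}. If one wanted to allow $\psi_F^{prod}=0$, the last step would still go through using $p_{FI}I^*/(I^*+\eta_{FI})>0$, which requires $p_{FI}>0$ and $I^*>0$.
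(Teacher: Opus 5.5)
Your proof is correct and follows the paper's argument step for step: $I^*>0$ from the $V$-equation, then $S^*>0$ from the $I$-equation, then the $S$-equation to replace the logistic factor by $\beta_V V^*>0$, then the $R$-equation to force $R^*=F_B^*=0$, then $F_U^*=0$ from the $F_B$-equation, and finally the contradiction in the $F_U$-equation. The extra computation of $D^*$ is unused but harmless, and your remark on parameter positivity, including the fallback via $p_{FI}>0$ when $\psi_F^{prod}=0$, makes explicit what the paper's proof assumes implicitly.
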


\begin{proof}
Suppose that there exists an equilibrium in $\mathbb{R}_+^7$ where $V^* > 0$.
Evaluating \eqref{sys:within_host-dV} at steady state, we have
\[
    0 = pI^* - d_V V^* \implies I^* = \frac{d_V}{p} V^*.
\]
Since $d_V > 0$, $p > 0$, and $V^* > 0$, it follows that $I^* > 0$. 
From \eqref{sys:within_host-dI}, we have
\[
    0 = \beta_V S^* V^* \left(1 - \frac{F_B^*}{\epsilon_{FI} + F_B^*}\right) - d_I I^*.
\]
Because $I^* > 0$ and $d_I > 0$, we have $d_I I^* > 0$. 
Since $F_B^* \ge 0$ and $\epsilon_{FI} > 0$, we have $1 - F_B^*/(\epsilon_{FI} + F_B^*) = \epsilon_{FI}/(\epsilon_{FI} + F_B^*) > 0$. 
It follows that $S^* > 0$.
Now, consider \eqref{sys:within_host-dS},
\[
    0 = \lambda_S\left(1 - \frac{N^*}{S_{max}}\right)S^* - \beta_V S^* V^*.
\]
Since $S^* > 0$, divide the entire equation by $S^*$ to obtain
\[
    \lambda_S\left(1 - \frac{N^*}{S_{max}}\right) = \beta_V V^*.
\]
Substitute this expression into the steady state equation for \eqref{sys:within_host-dR}:
\[
    0 = \left(\beta_V V^*\right) R^* + \beta_V S^* V^* \left( \frac{F_B^*}{\epsilon_{FI} + F_B^*} \right).
\]
Since $\beta_V > 0$ and $V^* > 0$, we may divide the entire equation by $\beta_V V^*$, yielding
\begin{equation}
    0 = R^* + S^* \left( \frac{F_B^*}{\epsilon_{FI} + F_B^*} \right).
\end{equation}
By Proposition~\ref{prop:positive-orthant-invariant}, $R^* \ge 0$, $S^* > 0$, and $F_B^* \ge 0$. 
Because $S^* > 0$, this requires both $R^* = 0$ and $F_B^* = 0$.

Next, substitute $F_B^* = 0$ into the steady state equation for \eqref{sys:within_host-dFB},
\[
    0 = -k_{int_f}(0) + k_{B_F}((c^* + I^*)a_F - 0)F_U^* - k_{U_F}(0),
\]
which simplifies to
\[
    0 = k_{B_F}(c^* + I^*)a_F F_U^*.
\]
Because the binding parameter $k_{B_F} > 0$, the scaling factor $a_F > 0$, the initial $CD8^+$ T cell count $c^* > 0$, and $I^* > 0$, the multiplier on $F_U^*$ is strictly positive. Thus, we must have $F_U^* = 0$.

Finally, substitute $F_B^* = 0$ and $F_U^* = 0$ into the steady state equation for unbound interferon \eqref{sys:within_host-dFU}:
\[
    0 = \psi_F^{prod} + \frac{p_{FI} I^*}{I^* + \eta_{FI}} - k_{lin_f}(0) - 0 + k_{U_F}(0),
\]
leaving
\[
    0 = \psi_F^{prod} + \frac{p_{FI} I^*}{I^* + \eta_{FI}}.
\]
However, the basal production by macrophages $\psi_F^{prod} > 0$, and since $I^* > 0$, the Michaelis-Menten production term is also strictly positive. The right-hand side is therefore strictly positive, leading to a contradiction.
Therefore, no endemic equilibrium exists.
\end{proof}

To compute a basic reproduction number, we use the method of \citeapp{VdDWatmough2002}.
Infected compartments are $V$ and $I$ and thus new infections and other flows take the form
\[
\mathcal{F}=\begin{pmatrix}
    0 \\ \beta_VSV\left(1-\frac{F_B}{\varepsilon_{FI}+F_B}\right)
\end{pmatrix}
\quad\text{and}\quad
\mathcal{W} = \begin{pmatrix}
    d_VV-pI \\ d_II
\end{pmatrix},
\]
respectively. 
It follows that
\[
F=\begin{pmatrix}
    0 & 0 \\
    \beta_V S^0\left(1-\frac{F_B}{\varepsilon_{FI}+F_B}\right) & 0
\end{pmatrix}
\quad\text{and}\quad
W=\begin{pmatrix}
    d_V & -p \\
    0 & d_I
\end{pmatrix},
\]
whence the basic reproduction number $\R_0$ takes the form
\[
\R_0 = \frac{p\beta_V}{d_Id_V}S^0\left(1-\frac{F_B^0}{\varepsilon_{FI}+F_B^0}\right).
\]
Denote the IFN inhibition at the VFE as
\begin{equation}\label{eq:app-theta0}
\theta^0 \;=\; 1-\frac{F_B^0}{\varepsilon_{FI}+F_B^0}\in (0,1],
\end{equation}
with $(F_U^0,F_B^0)$ from Proposition~\ref{prop:VFE}.
Then
\[
\R_0 = \frac{p\beta_V}{d_Id_V}S^0\theta^0.
\]

When running the sensitivity analysis and later the cohort, each individual has their own parameters. 
As a consequence, we individualise the basic reproduction number by adding an index $i$ indicating the individual under consideration, giving the expression \eqref{eq:R0-within-host} in Section~\ref{sec:math-analysis}.

Note that the condition $S^0+R^0=S_{max}$ in Proposition~\ref{prop:VFE} implies that the VFE is the union of a trivial equilibrium where $S^0=R^0=0$ and a continuum of equilibria where $S^0+R^0=S_{max}$.
This means that all VFE along the continuum are stable but not asymptotically stable, implying that \cite[Theorem 2]{VdDWatmough2002} does not hold here and therefore $\R_{0i}$ does not provide stability information.
Instead, the difference between $\R_{0i}<1$ and $\R_{0i}>1$ is that in the former case, $V$ and $I$ decrease to the VFE without an ``outbreak'', whereas in the latter case, there first is a peak in $V$ and $I$ before solutions tend to the VFE.
As established by Proposition~\ref{prop:no_endemic}, there is no virus-endemic equilibrium, confirming this.

\section{Computational considerations}
\label{app:numerics-details}
Here, we provide a bit more information about computational aspects.
All code is available in a GitHub repository (\href{https://github.com/julien-arino/within-to-between-hosts}{link}).
The code mixes \texttt{julia} and \texttt{R}, exploiting the specific strengths of both languages: the former is used for the large-scale simulations we conduct, whereas the latter is used in post-simulation result processing and plotting.
The method is computationally intensive but can be carried out in reasonable time using parallelised code.

In \ttt{julia}, the \ttt{DifferentialEquations} package is used; note that the code is further ``protected'' by the use of the \ttt{PositiveDomain} callback, which implements adaptive step-size reduction to ensure nonnegativity of solutions.
For the between-host system, we used an \emph{ad hoc} numerical method.

For the sensitivity analysis in Section~\ref{sec:sensitivity_analysis}, we sampled $1{,}000{,}000$ points in parameter space within the ranges listed in Table~\ref{tab:parameter_values_ranges} using a uniformly distributed Sobol low-discrepancy method. 
This sampling was implemented in \ttt{R} with the \ttt{sensitivity} package \citeapp{Iooss_etal2024}.
To confirm that the sensitivity results were not dependent on the sampling method, we also generate an additional $1{,}000{,}000$ points using Latin hypercube sampling (LHS) implemented in \ttt{R} with the \ttt{lhs} package \citeapp{lhs} and run the sensitivity analysis using these points.
Partial rank correlations are then computed using the \ttt{pcc} function from the \ttt{R} package \ttt{sensitivity} \citeapp{Iooss_etal2024}.
The two sampling approaches produce very similar results (not shown).

As each virtual individual is independent, we use native \texttt{julia} parallelisation to integrate \eqref{sys:within_host} in both the sensitivity analysis and the cohort steps.
Likewise, we use the \texttt{R} package \texttt{future.apply} \citeapp{bengtsson2021unifying} to parallelise some of the result processing steps.
This means that in terms of runtime, on a computing node equipped with an AMD Ryzen Threadripper 3970X processor (64 threads) and 256 GB of RAM, we observe the following: simulations for the sensitivity analysis and computation of the PRCC (natively parallelised in \ttt{sensitivity}) each run in about 4 minutes, while running the virtual cohort takes a little less time, around 3.5 minutes, because solutions in this case are better ``behaved'' than those in the sensitivity computation; in total, running all tasks sequentially, including generation of the figures, takes about 75 minutes, with most of the time spent processing the data to prepare plots.
Further gains are achieved by spreading the work across compute nodes, but at the time of writing, using more than 124 threads in \texttt{R} requires a recompilation of the \ttt{R} kernel and is therefore not documented in the code we provide.
While these benchmarks were obtained on a high-performance 64-thread workstation for $N=1{,}000{,}000$ individuals, such scale is primarily used here to produce smooth, high-resolution empirical distributions. Because the virtual cohort generation is ``embarrassingly parallel'', simulating a statistically robust cohort (e.g., $N=10{,}000$) can easily be accomplished on a standard consumer laptop in a matter of minutes. The data processing step also scales linearly, rendering the method widely accessible for rapid intervention testing without requiring supercomputing resources.

\section{More on PRCC}
\label{sup-material:prcc}

\begin{figure}[htbp]
    \centering
    \begin{subfigure}{0.49\linewidth}
    \includegraphics[width=\linewidth]{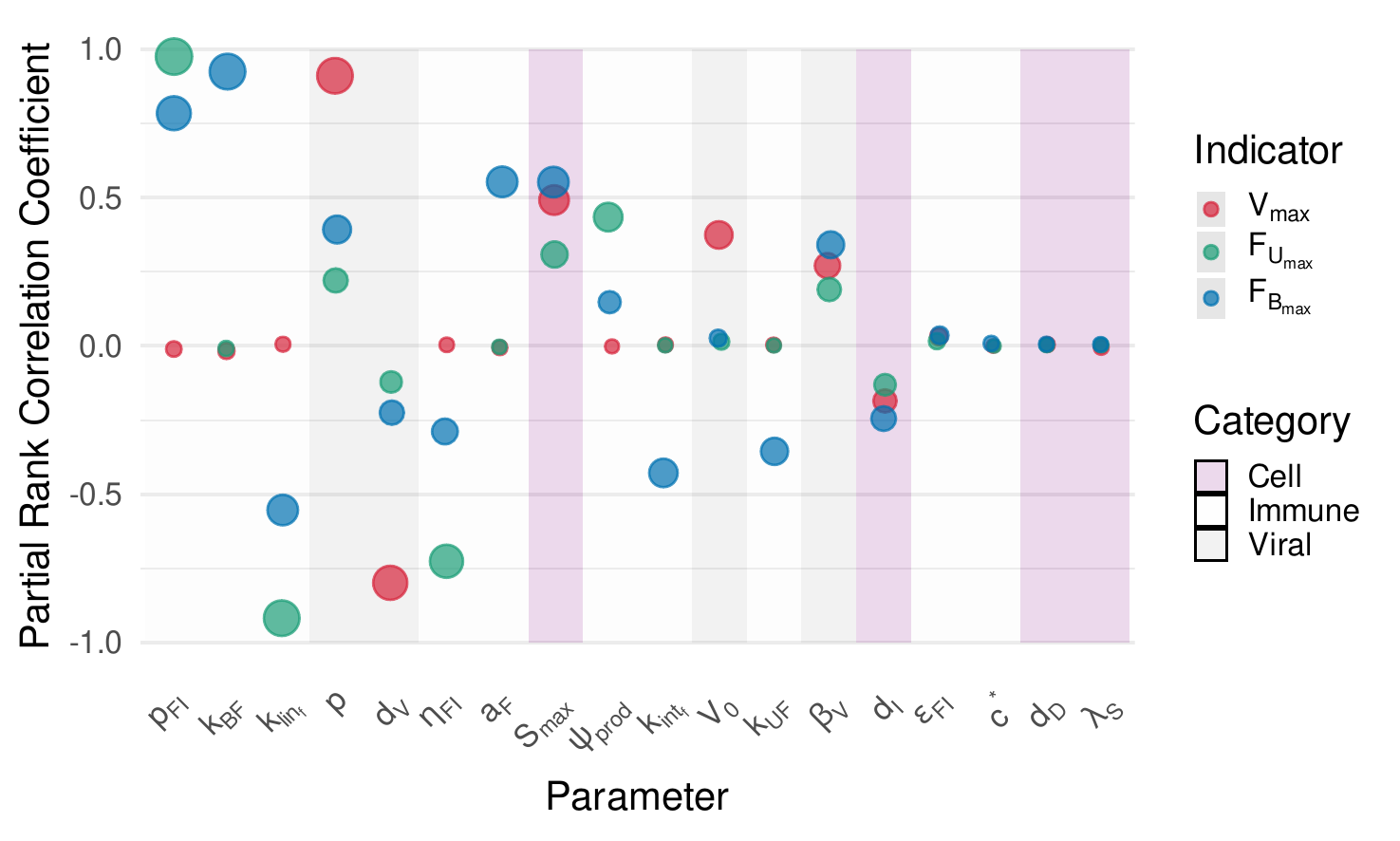}
    \caption{Peak values}
    \label{fig:sensitivity-PRCC-detail-values}
    \end{subfigure}
    \hfill
    \begin{subfigure}{0.49\linewidth}
    \includegraphics[width=\linewidth]{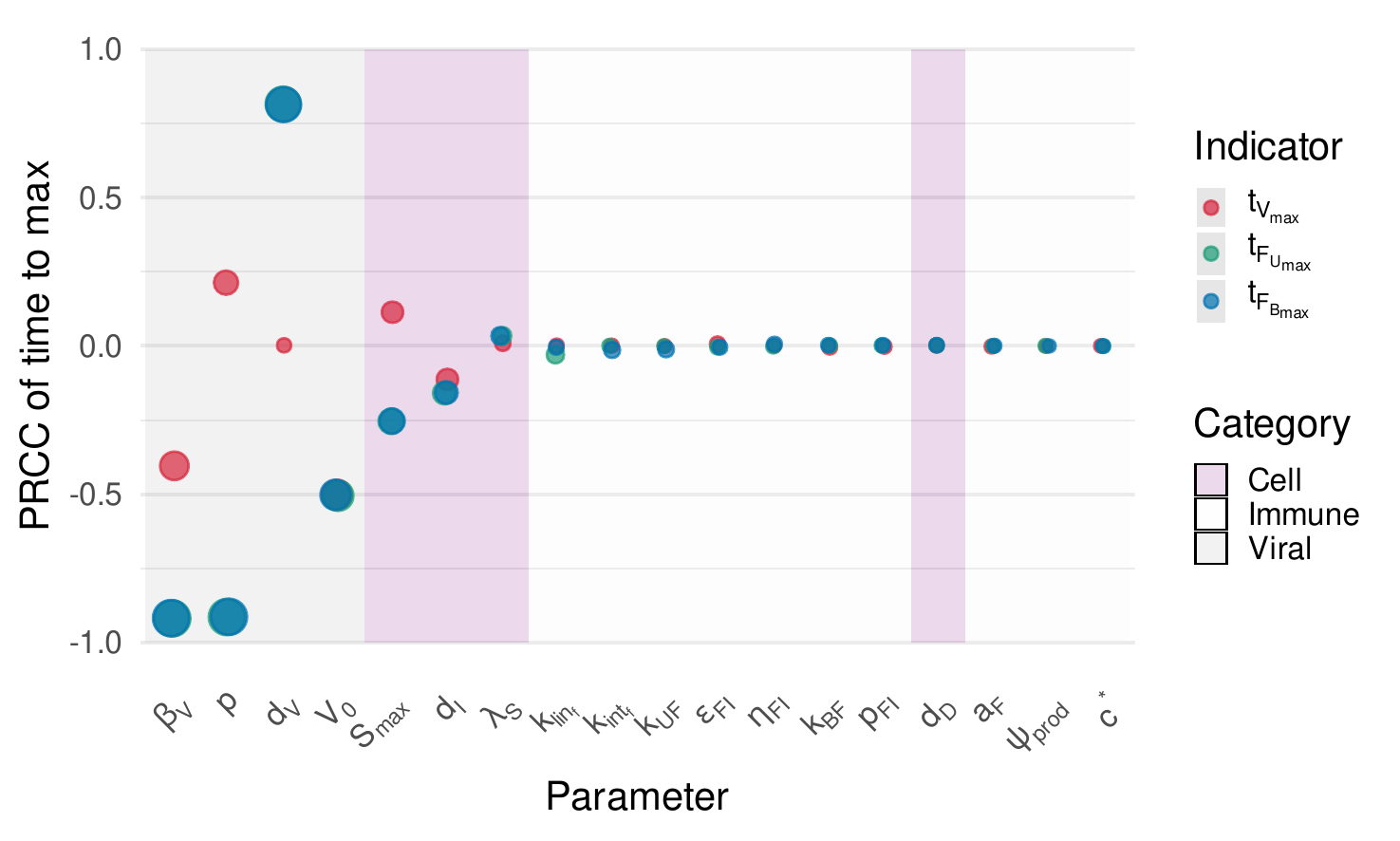} 
    \caption{Times to peak}
    \label{fig:sensitivity-PRCC-detail-times}
    \end{subfigure}
    \caption{Sensitivity analysis - Partial rank correlation coefficient (PRCC) sensitivity of (a) peak values and (b) times to peak of $V$, $F_U$ and $F_B$ to changes in parameters. 
    Coloured points denote the different indicators and shaded backgrounds identify parameter categories (\textit{Cell} cell-dynamics parameters, \textit{Immune} immune-response parameters and \textit{Viral} virus-dynamics parameters). Parameters with large absolute PRCC values have the strongest impact.}
    \label{fig:sensitivity-PRCC-detail}
\end{figure}

Figure~\ref{fig:sensitivity-PRCC} in the main text shows the global sensitivity, which aggregates PRCC sensitivities by summing their absolute values.
Here, we show in more detail how PRCC sensitivities vary, including whether they play a positive or negative role on values.
Figure~\ref{fig:sensitivity-PRCC-detail} presents the PRCC values associated with the peak magnitudes (Figure~\ref{fig:sensitivity-PRCC-detail-values}) and times to peak (Figure~\ref{fig:sensitivity-PRCC-detail-times}) of $V$, $F_{U}$ and $F_{B}$.

\section{A few remarks about the cohort}
\label{sup-material:results-cohort}

There are interesting remarks to be made about model outputs that help better understand interactions between the different scales.

\subsection{Variability of responses}
\label{supmat:variability-responses}

\begin{figure}[htbp]
\begin{center}
\includegraphics[width=0.45\textwidth]{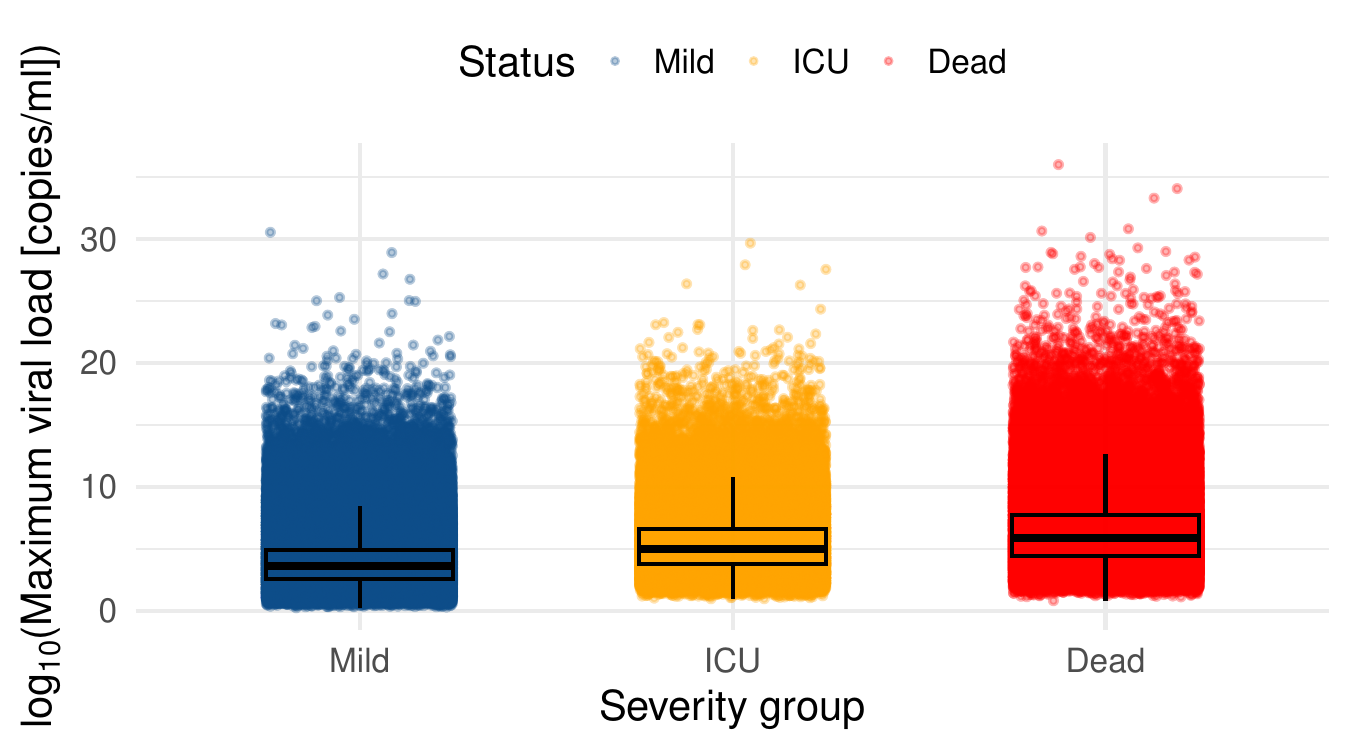}
\end{center}
\caption{Maximum viral load per individual. }
\label{fig:max-viral-load}
\end{figure}

First, note that as seen in Figure~\ref{fig:max-viral-load}, the maximal viral load per individual covers a much wider range across outcome types than the percentage of loss lung capacity that is used to distinguish between outcome types.
So, in this model, viral load is not a good indicator of clinical status.
Compare this with Figure~\ref{fig:tau-d}, where all $\Psi$ naturally accumulate above the value $\xi^d=85\%$ used to decide on the event of death in an individual.
We observe that virus related parameter are influential for the peak times, whereas immune response related parameters affect more the peak values.

\begin{figure}[htbp]
\centering
\begin{subfigure}{0.49\textwidth}
	\includegraphics[width=\textwidth]{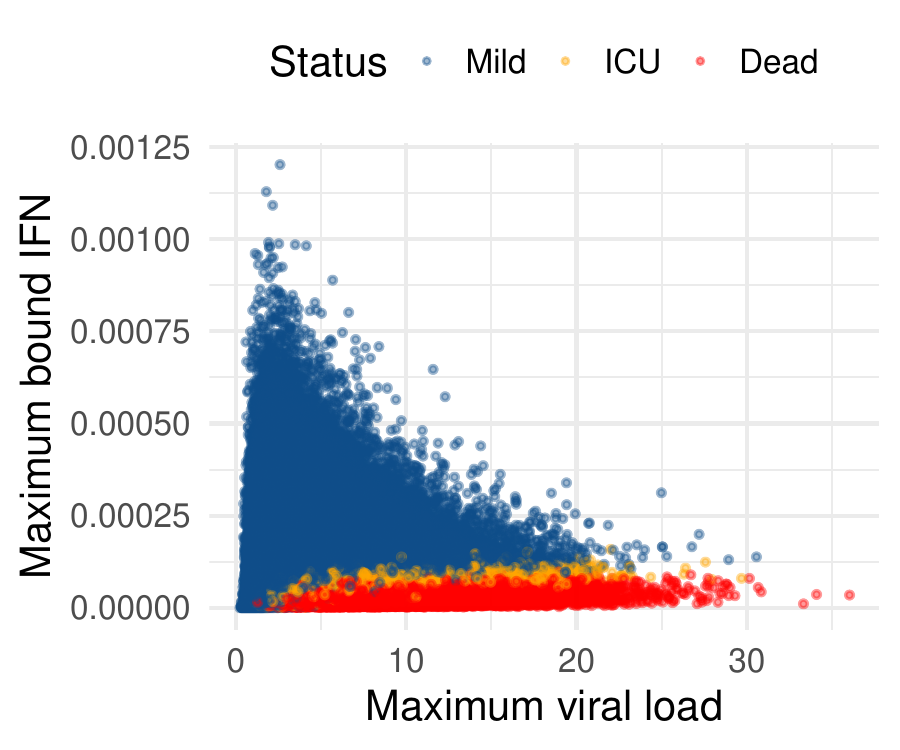}
	\caption{Maximum bound IFN}
	\label{fig:IFN-vs-V-bound}
\end{subfigure}
\begin{subfigure}{0.49\textwidth}
	\includegraphics[width=\textwidth]{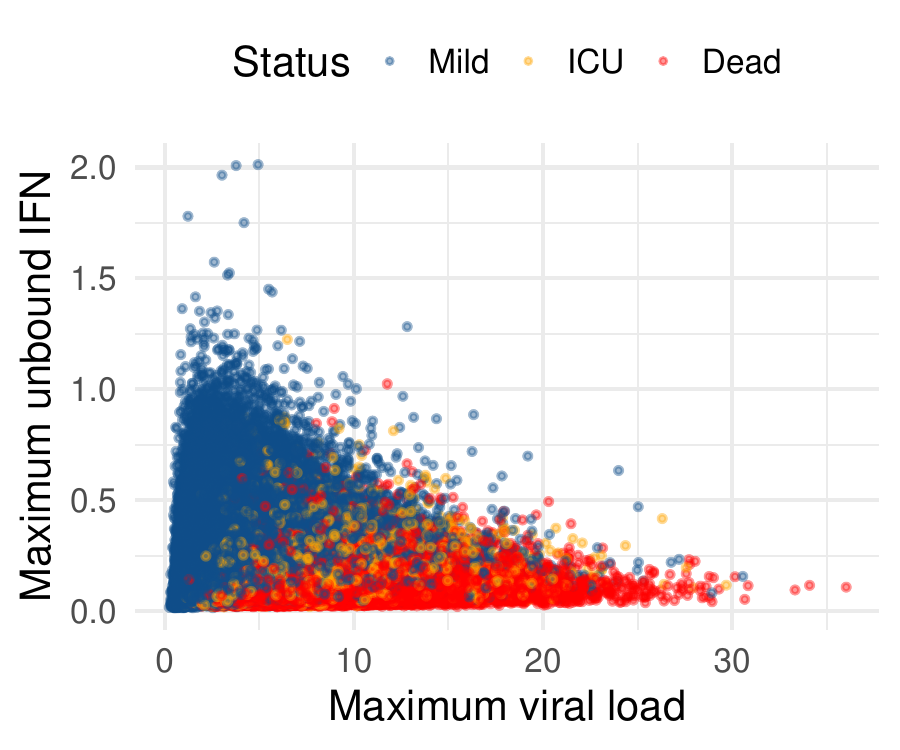}
	\caption{Maximum unbound IFN}
	\label{fig:IFN-vs-V-unbound}
\end{subfigure}
\caption{Relationship between viral load and interferon levels. 
    (a) bound interferon ($F_B$) vs viral load. 
    (n) unbound interferon ($F_U$) vs viral load. 
    Points are coloured by status: \emph{mild} (blue), \textit{ICU} (yellow) and \textit{dead} (red). 
    Bound IFN increases with viral load; unbound IFN peaks at intermediate levels.}
\label{fig:V_vs_F}
\end{figure}

Figure~\ref{fig:V_vs_F} shows the relationship between viral load and interferon levels across clinical outcomes. Bound IFN increases with viral load, especially in severe cases. Unbound IFN displays more variability, with intermediate levels often observed in non-fatal outcomes.

\begin{figure}[htbp]
  \centering
  \includegraphics[width=0.7\linewidth]{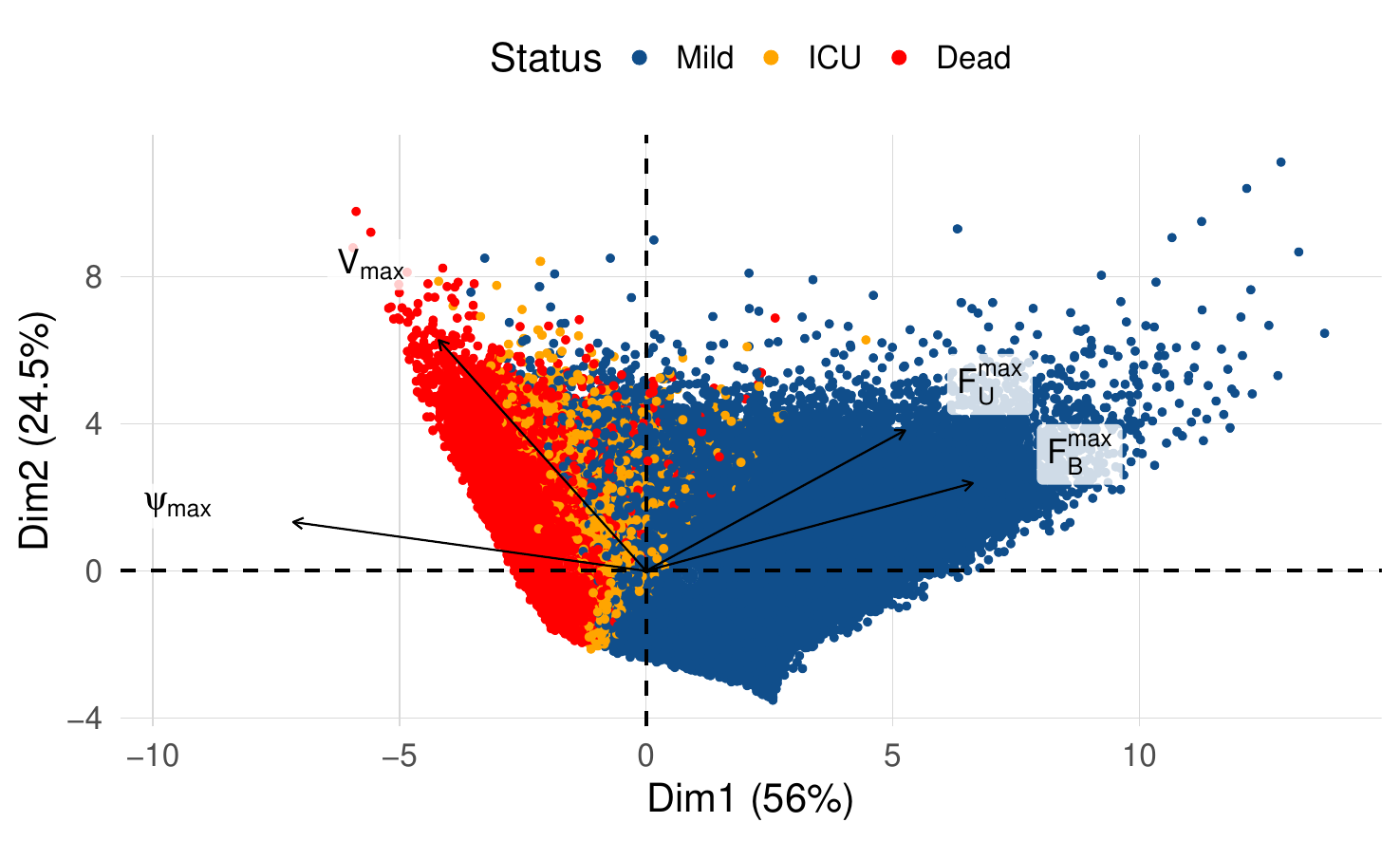}
  \caption{PCA of within-host variables ($V$, $F_U$, $F_B$ and $\Psi_{\max}$) for the virtual cohort. The first two dimensions capture more than 90\% of the total variance, mainly reflecting viral and interferon activity.}
  \label{fig:pca_within_host}
\end{figure}

To quantify interindividual variability in model outcomes, we perform a principal component analysis (PCA) on the maximal values of four key state variables: viral load ($V$), unbound interferon ($F_U$), bound interferon ($F_B$) and cumulative tissue damage ($\Psi$). 
The resulting projection is shown in Figure~\ref{fig:pca_within_host}, where each point corresponds to an individual in the virtual cohort, colour-coded by clinical outcome: \textit{Mild} (blue), \textit{ICU} (orange) and \textit{Dead} (red). The first two principal components account for 90.5\% of the total variance (50.1\% and 40.4\%, respectively), indicating that the dominant modes of variation are effectively captured in a reduced two-dimensional space.

The first principal component (Dim~1) mainly reflects differences in viral load and bound interferon levels. Individuals are spread along this axis according to how intense the infection is. The second component (Dim~2) captures changes in tissue damage ($\Psi_{\max}$) and viral load, contributing to the stratification of individuals based on disease severity. The direction and length of the arrows in Figure~\ref{fig:pca_within_host} show that viral replication and interferon responses are the main factors driving variability in the cohort. The overall layout of individuals in the PCA plot shows a clear shift from \textit{Mild} (blue) to \textit{ICU} (orange) and \textit{Dead} (red) outcomes. This confirms that viral dynamics and interferon feedback play a central role in shaping disease severity in the simulated population.

\subsection{Cohort-level responses}
\label{supmat:cohort-level-responses}

\begin{figure}[htbp]
	\centering
	\includegraphics[width=\textwidth]{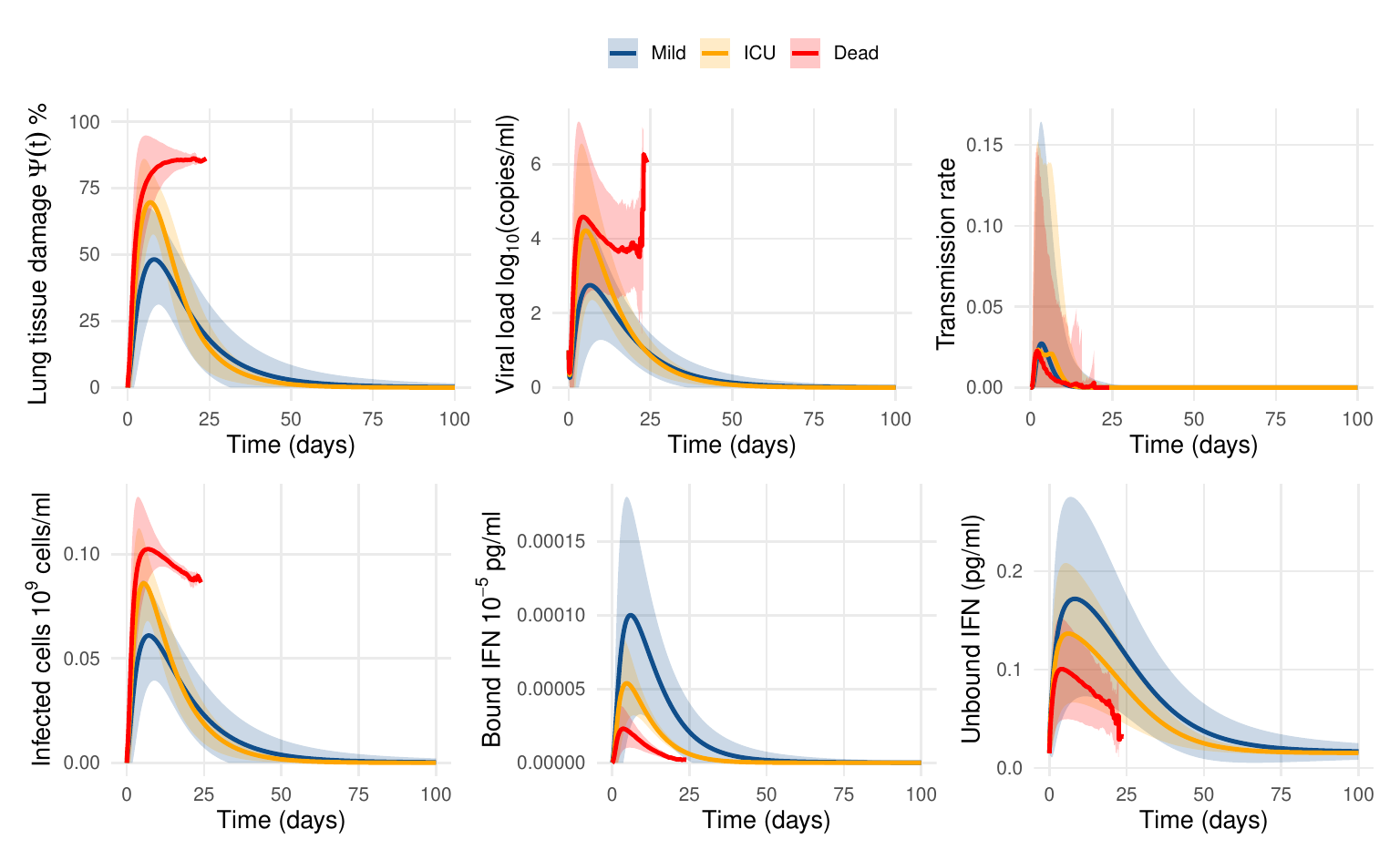}
	\caption{Time evolution of several state variables and model outputs of \eqref{sys:within_host} for a cohort of 1,000,000 virtual individuals.
		Top row: infected cells, bound and unbound IFN; bottom row: lung tissue damage, viral load and transmission rate.
		The cohort is classified into three severity groups: mild (blue), ICU (orange) and dead (red), with thresholds for progressing into the latter two groups at $\xi^h=75\%$ and $\xi^d=85\%$ of lung tissue loss.
		Solid lines represent the mean trajectory for each group, while the shaded ribbons indicate the standard deviation.}
	\label{fig:band_cohort_truncated}
\end{figure}

Figure~\ref{fig:band_cohort_truncated} shows within-host trajectories of viral load, infected cells, interferon dynamics and the resulting lung tissue damage over 80 days, for the same cohort as used previously.
Individuals having required ICU care and having died are the severe trajectories in orange and red shown in Figure~\ref{fig:band_cohort_truncated}, respectively.
The dynamics of the individuals who died (using $\xi^d=85\%$ here) is truncated at the time of death defined by \eqref{eq:tau_i_d}.  
Individuals with severe disease responses (using $\xi^h=75\%$) exhibit higher maximum viral loads and infected cell counts, while their IFN concentrations remain lower. 

\begin{figure}[htbp]
    \centering
    \begin{subfigure}[b]{0.48\linewidth}
        \includegraphics[width=\textwidth]{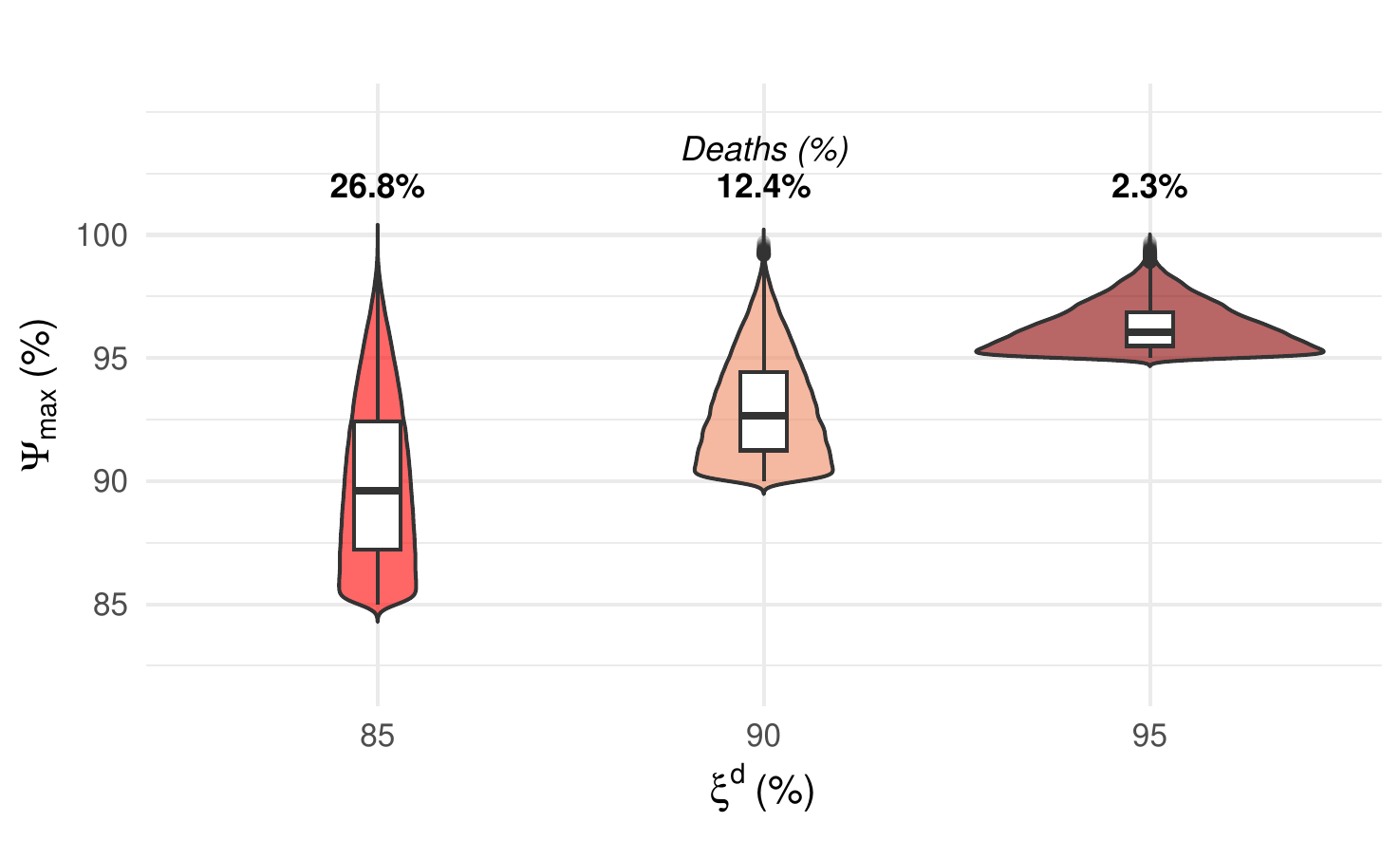}
        \caption{Maximum tissue damage}
        \label{fig:cloud-tau-Psi-different-xid-a}
    \end{subfigure}
    \hfill
    \begin{subfigure}[b]{0.48\linewidth}
        \includegraphics[width=\textwidth]{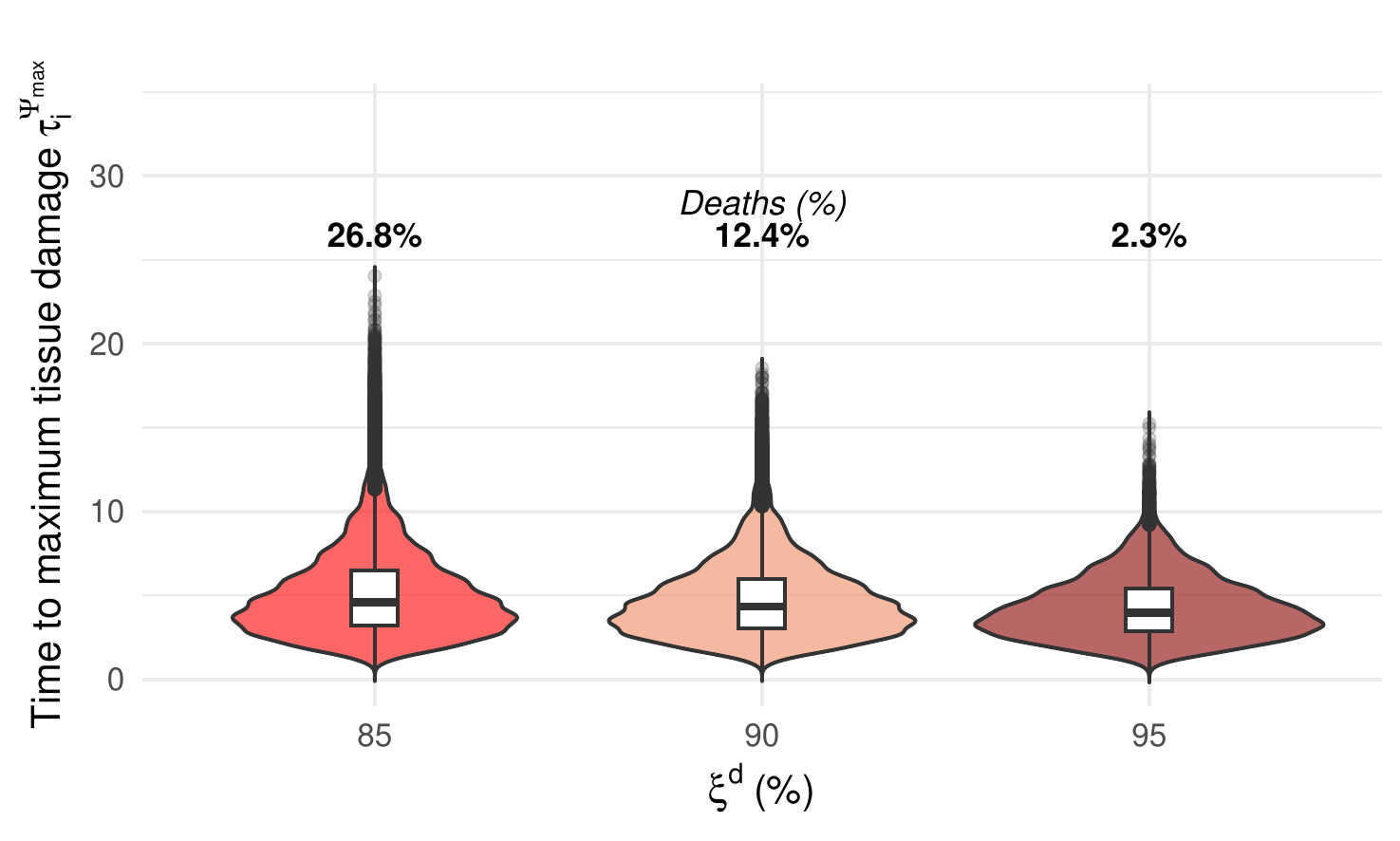}
        \caption{Time to maximum tissue damage}
        \label{fig:cloud-tau-Psi-different-xid-b}
    \end{subfigure}
    \caption{Effect of the lethality threshold $\xi^d$ on peak tissue damage and time to death.
    (a) Distribution of maximum lung damage $\Psi_{\max}$ for varying thresholds. 
    (b) Distribution of time to maximum lung tissue damage across corresponding $\xi^d$ values.
    The percentages above the violins are the percentages of disease-induced death resulting from the use of these thresholds.}
    \label{fig:cloud-tau-Psi-different-xid}
\end{figure}

Figure~\ref{fig:cloud-tau-Psi-different-xid} shows that when the lethality threshold $\xi^{d}$ increases, deaths become less frequent, but the required peak lung damage is greater and is reached over a slightly shorter time window. 
Trajectories therefore cluster at higher values of $\Psi_{\max}$ (Figure~\ref{fig:cloud-tau-Psi-different-xid-a}). 
Figure~\ref{fig:cloud-tau-Psi-different-xid-b} shows that the corresponding distributions of times to maximum tissue damage drift towards lower values, although that change is much less pronounced than the change in maximum tissue damage.
The most visible effect is on the case fatality ratio (shown above the violin plots), with roughly a ten-fold reduction in CFR as $\xi^d$ changes from $85\%$ to $95\%$.

\begin{figure}[htbp]
    \centering
    \begin{subfigure}[b]{0.48\linewidth}
        \includegraphics[width=\textwidth]{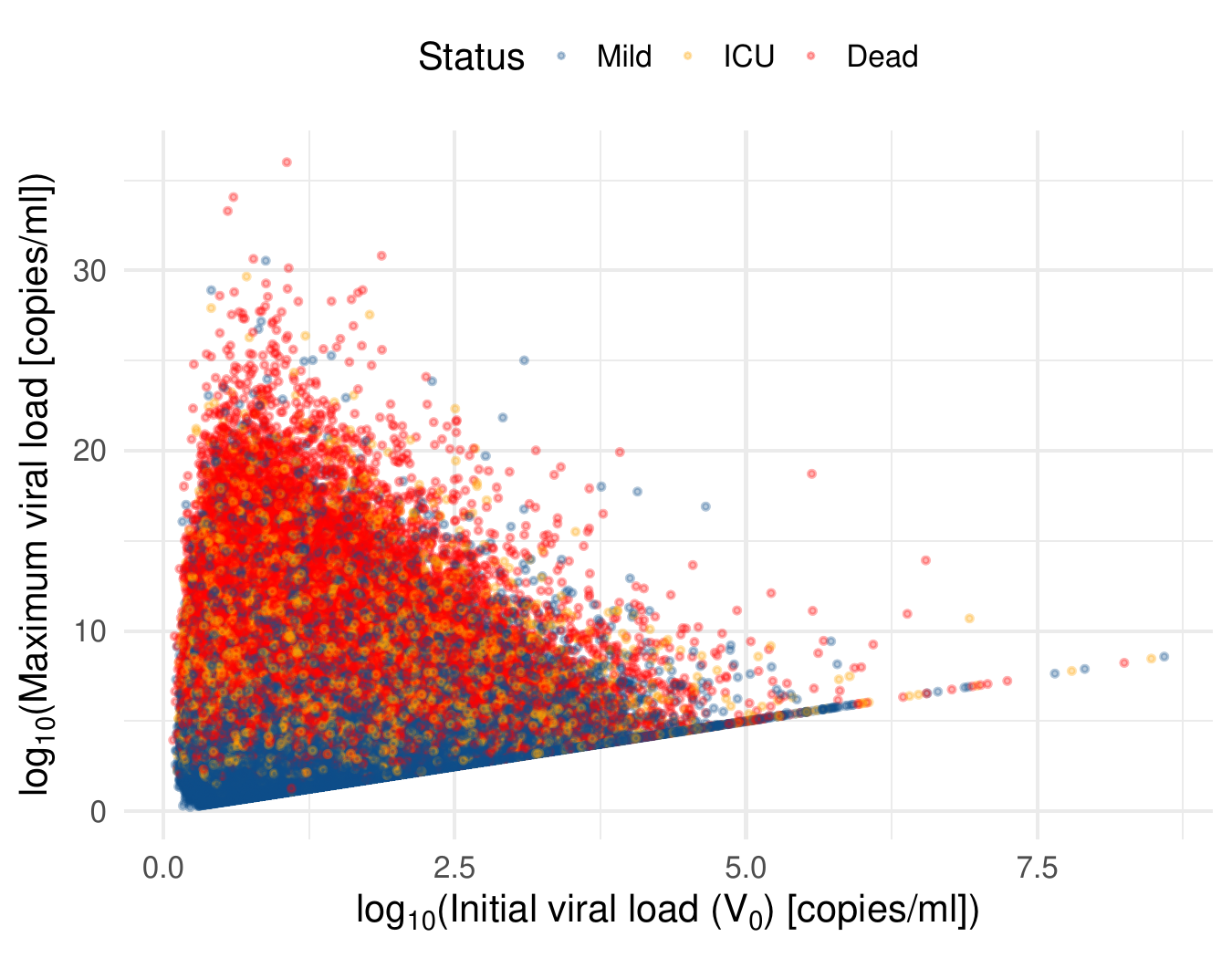}
        \caption{Maximum viral load}
        \label{fig:cloud-V-vs-V0-Vmax}
    \end{subfigure}
    \hfill
    \begin{subfigure}[b]{0.48\linewidth}
        \includegraphics[width=\textwidth]{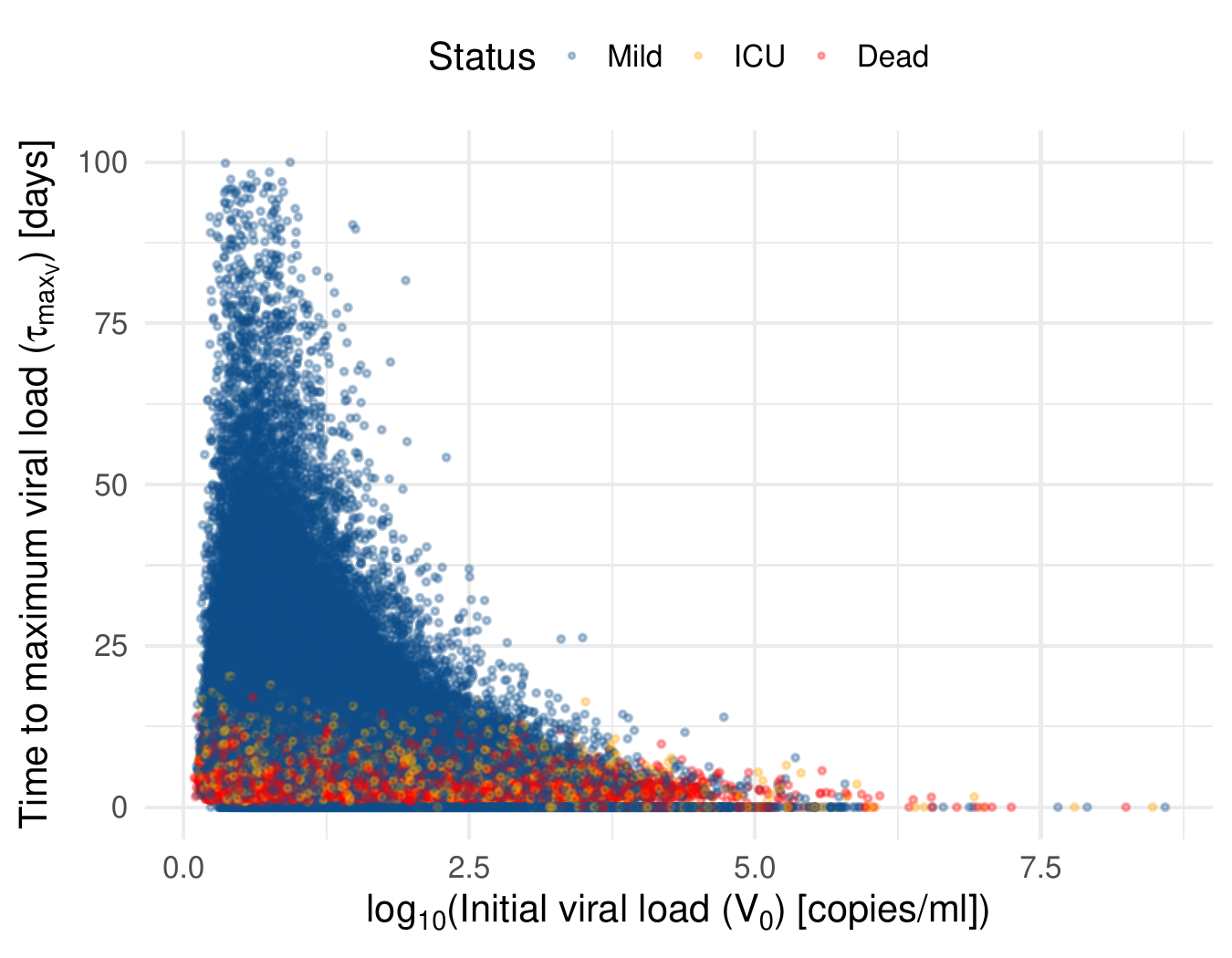}
        \caption{Time to maximum viral load}
        \label{fig:cloud-V-vs-V0-tau-max}
    \end{subfigure}
    \caption{Effect of the initial viral load $V_0$ on (a) peak viral load and (b) time to maximum viral load.}
    \label{fig:cloud-V-vs-V0}
\end{figure}

Finally, Figure~\ref{fig:cloud-V-vs-V0} shows the relationship between the initial viral load $V_0$ and the peak viral load $V_{\max}$ (Figure~\ref{fig:cloud-V-vs-V0-Vmax}) and time to maximum viral load $\tau_{\max}$ (Figure~\ref{fig:cloud-V-vs-V0-tau-max}).
Note that the initial viral load $V_0$ has been used in two major ways in mathematical models. 
The work \citeapp{jenner2021covid} from which \eqref{sys:within_host} is adapted uses $V_0=4.5\log_{10}$ copies/mL to mimic experimentally measured time-to-symptoms. 
Other models such as \citeapp{Gonalves_timing_2020} use slightly smaller $V_0=3.81\log_{10}$ copies/mL.
Other authors, e.g. \citeapp{hernandezvargas2020inhost,Owens2024JCIInsight}, take a more mechanistic approach and use much smaller initial loads, with for instance, \citeapp{hernandezvargas2020inhost} estimating $V(0)$ to be $-0.508\log_{10}$ copies/mL.
In Figure~\ref{fig:cloud-V-vs-V0-Vmax}, we observe that the range of the peaks of viral load decreases with the initial viral load.
The time to peak viral load is also a decreasing function of the initial viral load (Figure~\ref{fig:cloud-V-vs-V0-tau-max}).

\section{Between-hosts model}
\label{supmat:between-hosts}

Figure~\ref{fig:between_host_8_scenarios_two_R0} shows the incidence $U_P(t)$ in \eqref{sys:between_hosts_DID} under the eight possible combinations of constant and age-dependent functions for transmission $\beta_P$, recovery $\gamma_P$ and mortality $\mu_P$, shown for $ \mathcal R_0=2.5$ (blue) and $ \mathcal R_0=5$ (red).
Increasing $\mathcal R_0$ from $2.5$ to $5 $ consistently amplifies the epidemic peak and advances its timing across all parameter configurations. 
For a fixed $\mathcal R_0$, introducing age-dependence in transmission $\beta_P(a)$ induces more concentrated and earlier epidemic peaks compared to constant transmission $\beta_P^c$, reflecting a concentration of transmissibility over infection age.
Note that while the peak value and timing varies between simulations, the area under the plotted curves, shown in each plot, varies very little. (The area under the blue curves corresponding to $\R_0^P=2.5$ varies a little more, since the outbreak potentially takes longer to resolve and thus in two instances, the outbreak is not entirely resolved in 100 days.)

\begin{figure}[htbp]
	\centering
	\includegraphics[width=0.8\linewidth]{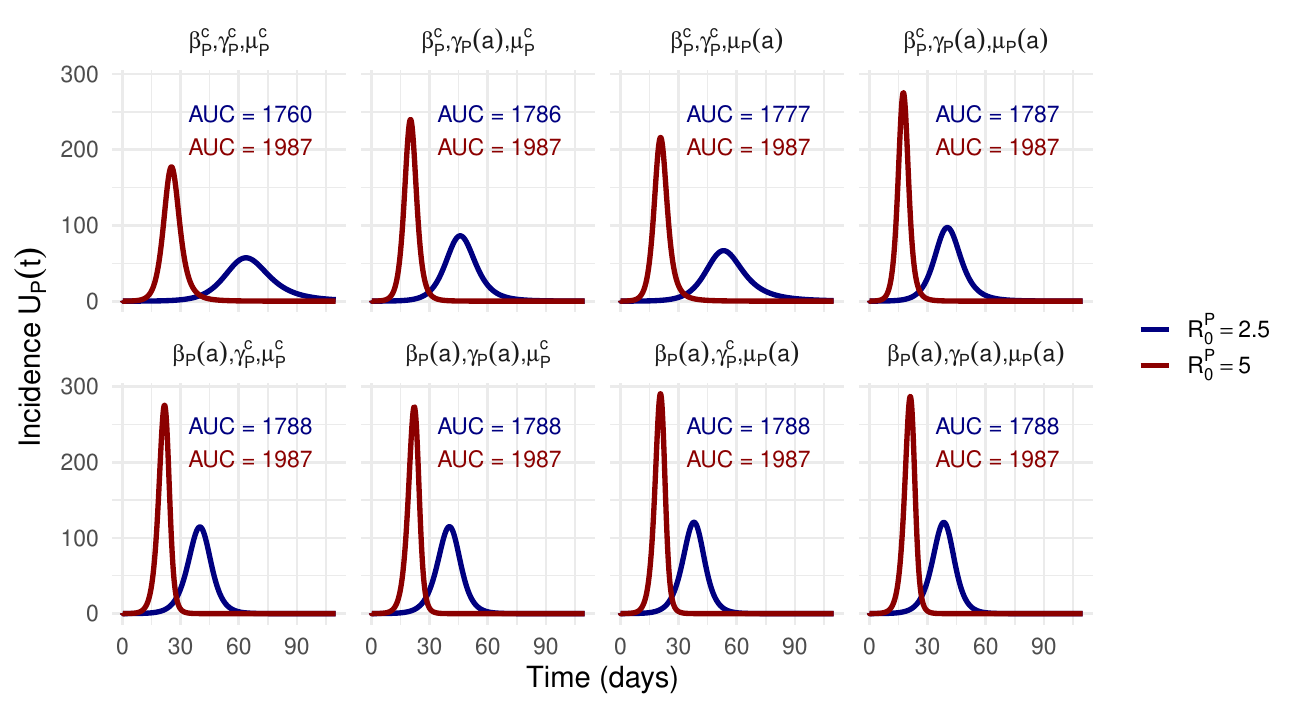}
	\caption{Comparison of the incidence of new infections $U_P(t)$ in \eqref{sys:between_hosts_DID} under constant versus age-dependent functions $\beta_P$, $\gamma_P$ and $\mu_P$, using two different values of the reproduction number $\R_0^P$.
	We use a superscript $^c$ to indicate that a function is taken to be constant.
	Also shown are the areas under the curves values.
	Initial condition $S_P(0)=2000$ and $U_P(0)=1$.}
	\label{fig:between_host_8_scenarios_two_R0}
 \end{figure}

Age-dependence in recovery $\gamma_P(a)$ and mortality $\mu_P(a)$ further modifies both peak magnitude and timing. In particular, when combined with $\beta_P(a)$, heterogeneous recovery and mortality can significantly compress the epidemic wave and shift the peak earlier relative to the fully constant case $\beta_P^c$. Thus, while transmission heterogeneity exerts a strong influence on epidemic acceleration, recovery and mortality heterogeneity also contribute non-negligibly to reshaping the temporal profile of incidence.
Overall, this shows that, even under the same basic reproduction number $\mathcal R_0$, different distributions of infectiousness, recovery and mortality over infection age generate substantially different epidemic trajectories.

\begin{figure}[htbp]
	\centering
	\includegraphics[width=0.8\linewidth]{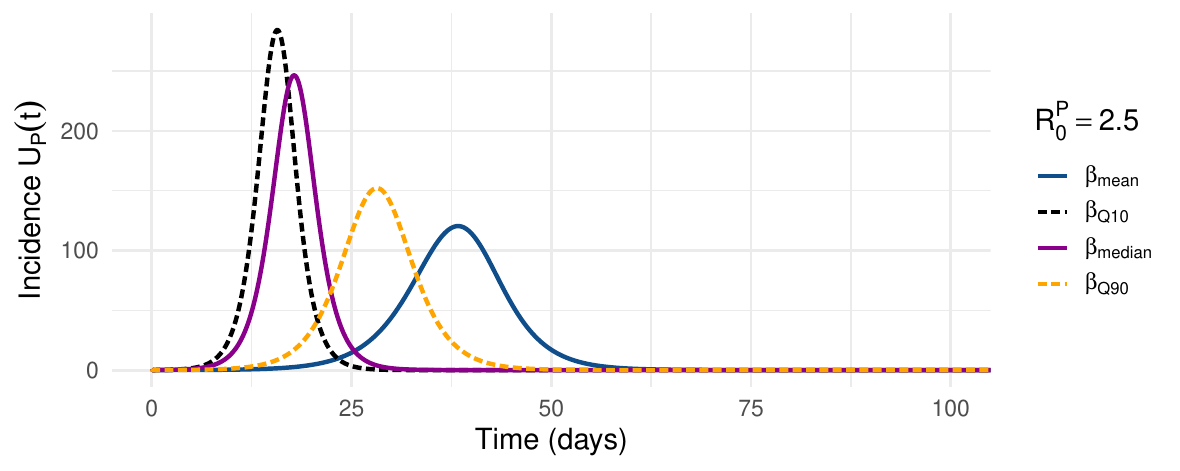}
	\caption{Comparison of the incidence of new infections $U(t)$ in \eqref{sys:between_hosts_DID} for the same basic reproduction number $\mathcal R_0^P=2.5$, but using different summaries of $\beta(V(a))$.
	Initial condition $S_P(0)=2000$ and $U_P(0)=1$.}
	\label{fig:between_host_same_R0P_different_summaries}
 \end{figure}

Figure~\ref{fig:between_host_same_R0P_different_summaries} shows the incidence $U_P(t)$ in the between-host simulations using a fixed $\R_0^P=2.5$, comparing solutions obtained when the age-of-infection dependent functions are estimated using the mean and the 10th-, 50th- (median) and 90th-percentile of $\beta(V(a))$.
Note that except for the function derived from the mean, the other functions do not ensure that \eqref{sys:between-hosts} has proper mathematical properties, e.g., proper conservation of mass and the like.

\section{Reproduction number of the age-of-infection model}
\label{supmat:R0-PDE}
Following \citeapp{yang2007class}, the basic reproduction number for the population-level model $\mathcal{R}_0^P$ can be derived by linearizing 
system~\eqref{sys:between-hosts} around the disease-free equilibrium, where 
$S_P(t)\approx S_P(0)$ and $i_P(t,a)$ is small. 
Along the characteristic curves of the transport equation, the infection density satisfies
\[
i_P(t,a) = i_P(t-a,0)\,
\exp\!\left[-\!\int_0^a \big(d_P+\gamma_P(\xi)+\mu_P(\xi)\big)\, d\xi\right],
\]
so individuals infected $a$ time units ago contribute to new infections with survival probability given by the exponential term. 
Substituting this expression into the definition of the force of infection yields
\[
\lambda(t)=S_P(0)\!\int_0^{\infty}\!\beta_P(a)
\exp\!\left[-\!\int_0^a\!\big(d_P+\gamma_P(\xi)+\mu_P(\xi)\big)\,d\xi\right]
\lambda(t-a)\,da.
\]
The kernel of this integral defines the next-generation operator, whose spectral radius gives
\begin{equation*}
\mathcal{R}_0^P
= S_P(0) \int_0^{\infty}\beta_P(a)
\exp\!\left[-\!\int_0^a \big(d_P+\gamma_P(\xi)+\mu_P(\xi)\big)\,d\xi\right] da,
\end{equation*}
which is \eqref{eq:R0_DID} in the main text.
Equation~\eqref{eq:R0_DID} highlights how age-dependent transmission, recovery and mortality functions jointly determine the overall infection potential at the population level.

Note that by definition of the empirical hazard rates $\gamma_P$ and $\mu_P$ given by \eqref{eq:gammaP-muP}, the exponential survival term in \eqref{eq:R0_DID} is close to the cohort survival function, $\S_{\text{cohort}}(a) = N_{\text{active}}(a) / N_{\text{total}}$. Indeed,
\[
\exp\!\left[-\!\int_0^a \big(d_P+\gamma_P(\xi)+\mu_P(\xi)\big)\,d\xi\right] da
= e^{-d_Pa}\S_{\text{cohort}}(a).
\]
Assuming the natural demographic mortality rate $d_P$ is negligible over the short timescale of an acute infection, i.e., $d_P \approx 0$, implies that $e^{-d_P a} \approx 1$ and thus,
\[
\exp\!\left[-\!\int_0^a \big(d_P+\gamma_P(\xi)+\mu_P(\xi)\big)\,d\xi\right] da
\approx \S_{\text{cohort}}(a).
\]
For illustration, the empirical cohort survival function $\S_{\text{cohort}}(a)$ is shown in Figure~\ref{fig:cohort_survival}.
\begin{figure}[htbp]
    \centering
    \includegraphics[width=0.75\textwidth]{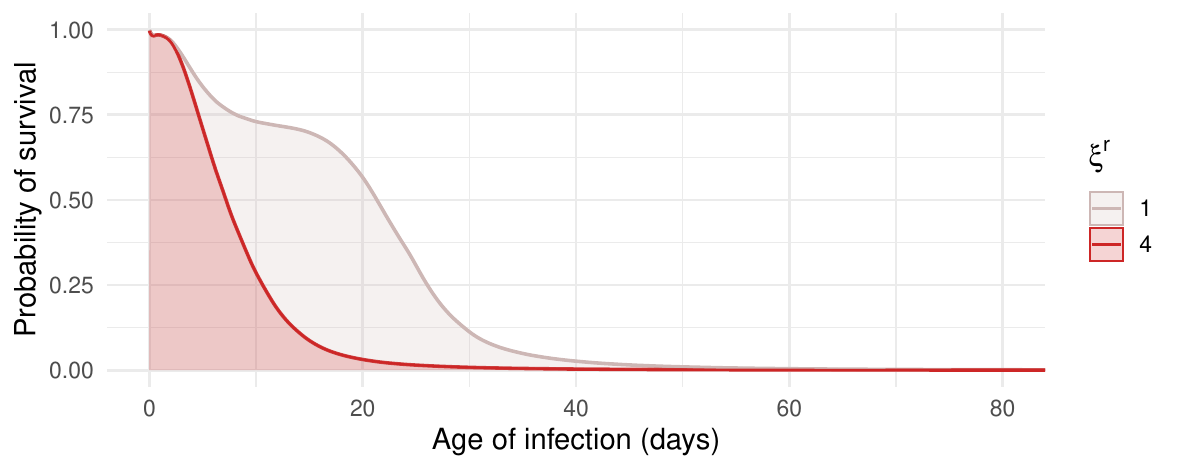}
    \caption{Empirical cohort survival function $\S_{\text{cohort}}(a)$ for a cohort of 1,000,000 virtual individuals, for two different values of the recovery threshold $\xi^r$.}
    \label{fig:cohort_survival}
\end{figure}

Substituting the latter expression and the (mean) transmission function $\beta_P^{\text{mean}}(a)$ into the integral yields
\[
    \mathcal{R}_0^P \approx S_P(0) \int_0^\infty \left( \frac{1}{N_{\text{active}}(a)} \sum_{i \in \mathcal{A}(a)} \beta_i(V_i(a)) \right) \left( \frac{N_{\text{active}}(a)}{N} \right) da.
\]

Thus,
\begin{equation}
    \mathcal{R}_0^P \approx \frac{S_P(0)}{N} \int_0^\infty \sum_{i \in \mathcal{A}(a)} \beta_i(V_i(a)) \, da.
\end{equation}

Let $\tau_i^{\text{end}} = \min(\tau_i^r, \tau_i^d)$ be the ultimate removal time for individual $i$. Because an individual $i$ is only present in the active set $\mathcal{A}(a)$ from $a=0$ up to their removal time $\tau_i^{\text{end}}$, we can interchange the order of integration and summation,
\begin{equation}
    \mathcal{R}_0^P \approx \frac{1}{N} \sum_{i=1}^{N} \left( S_P(0) \int_0^{\tau_i^{\text{end}}} \beta_i(V_i(a)) \, da \right).
\end{equation}
The integral of individual $i$'s infectiousness profile over their specific infection duration, multiplied by the completely susceptible macroscopic population $S_P(0)$, is exactly the \emph{total number of secondary infections generated by individual $i$}. 
Let us denote this individual basic reproduction number as $\mathcal{R}_{0i}^P$,
\begin{equation}\label{eq:R0P-integral}
    \mathcal{R}_{0i}^P = S_P(0) \int_0^{\tau_i^{\text{end}}} \beta_i(V_i(a)) \, da.
\end{equation}
Then,
\begin{equation}\label{eq:R0P-simplified}
    \mathcal{R}_0^P \approx \frac{1}{N} \sum_{i=1}^{N} \mathcal{R}_{0i}^P.
\end{equation}

Thus, the (macroscopic) basic reproduction number of the between-host partial differential equations model is \emph{analytically close} to the arithmetic mean of the individual reproduction numbers across the entire virtual cohort.
Also note that this is reminiscent of multi-strain epidemiological models (see, e.g., \citeapp{martcheva2015multistrain,nuno2005dynamics}), where the reproduction number at the ``system'' level is the sum of the reproduction numbers of individual strains.

To illustrate, we compute $\R_0^P$ using \eqref{eq:R0P-simplified} for the parameters of the plots in this Section (including $\xi^r=4$) and find $\R_0^P\simeq 258.7$.
Using \eqref{eq:R0_DID}, on the other hand, we find, for the same threshold values, $\R_0^P\simeq 226$.
The discrepancy between the two values is due to numerical errors.
Indeed, to compute the distributions, we interpolate $V_i(t)$ on a grid of ages-of-infection with time step 0.1 days.
This is done in order to avoid over-running the 256 GB of RAM present on the devices running computations; while the $V_i(t)$ themselves amount to no more than 10 GB for 1 million individuals, the temporary storage needed to manipulate such large variables grows quickly.
This results in approximations in terms of the timing of events which, when repeated one million times as we do when using \eqref{eq:R0P-simplified}, accumulate.

While $\simeq 250$ is an unrealistic value at the population level, it is obtained here from processes at the individual level using parameters for the within-host spread of the pathogen. 
(The $\beta$ functions used in Figures~\ref{fig:between_host_dynamics} and \ref{fig:between_host_8_scenarios_two_R0} are scaled to give the values used there.)

\begin{figure}[htbp]
	\centering
	\includegraphics[width=0.8\textwidth]{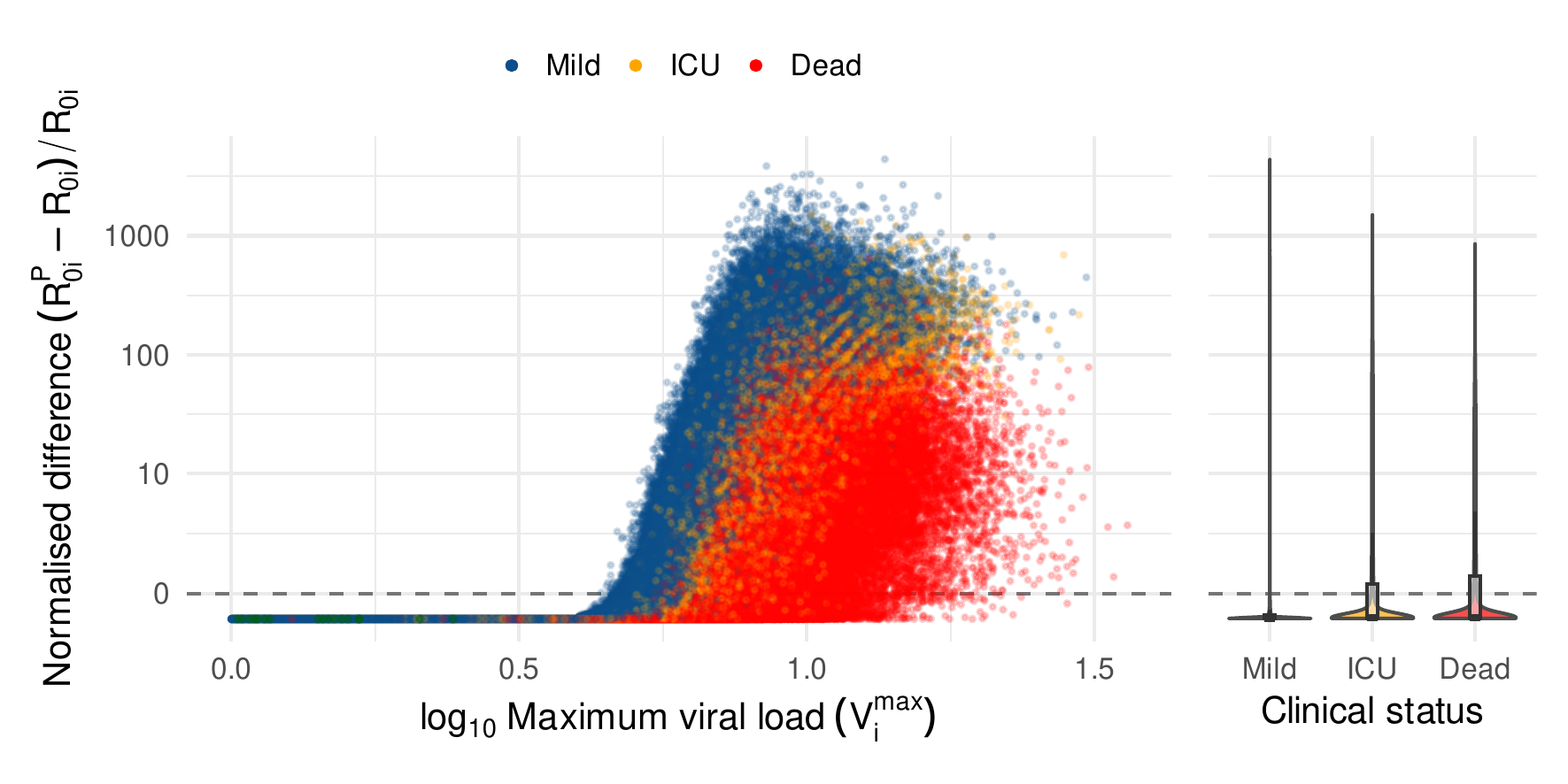}
	\caption{Normalised difference between the within-host reproduction number $\R_{0i}$ and the population-level person-to-person reproduction number $\R_{0i}^P$, as a function of the maximum viral load.
		Also shown it the distribution of individuals with different disease outcomes (mild, ICU and dead) in terms of that difference.    
		The $y$-axis is in pseudo-logarithmic scale.}
	\label{fig:difference-R0within-R0P2P-Vmax}
\end{figure}

To conclude on the link between the clinical $\R_{0i}$ and the epidemiological $\R_{0i}^P$, consider Figure~\ref{fig:difference-R0within-R0P2P-Vmax}, which shows one representation of the mapping between the within-host and between-host reproduction numbers, visualised here as a function of the maximum viral load.
Individuals below the horizontal line at 0 have $\R_{0i} < \R_{0i}^P$, i.e., their capacity to transmit the pathogen to another individual is lower than the pathogen's capacity to replicate and spread within them.
At the lowest limit ($-1$), the individual does not propagate the infection to any other individual.
At the other extreme, the individual is a superspreader, propagating the infection to many other individuals.
As shown by the violin plots on the right of the figure, most individuals in the cohort have $\R_{0i}^P < \R_{0i}$, i.e., they are more likely to experience an infection than they are to spread the pathogen to others.
Because of hospitalisation (and potentially death), severe cases are less likely to superspreaders than milder ones.



\bibliographyapp{references}
\bibliographystyleapp{plain30}

\end{document}